\newif\iffullpaper   \fullpapertrue
\iffullpaper
  \documentclass[acmsmall,nonacm]{acmart}
\else
  \documentclass[acmsmall]{acmart}
\fi

\usepackage{float}
\usepackage{wrapfig}
\usepackage{needspace}
\usepackage{graphicx}
\usepackage{bm}
\usepackage{thmtools}
\usepackage{algorithm}
\usepackage{algorithmic}
\usepackage{array}
\usepackage{verbatim}
\usepackage{bm}
\usepackage{xspace}
\usepackage{comment}
\usepackage{subfig}
\usepackage{tikz}
\usetikzlibrary{positioning}
\usetikzlibrary{calc}
\usetikzlibrary{arrows.meta}
\usetikzlibrary{fit, positioning}
\usetikzlibrary{shapes.geometric}
\usetikzlibrary{decorations.pathreplacing}
\usetikzlibrary{backgrounds}

\usepackage{xcolor}

\newcommand{\apref}[2]{\iffullpaper#1\else#2\fi}

\AtBeginDocument{%
  }

\newtheorem{observation}{Observation}

\newcommand{\dl}{\textit{ :- }}
\newcommand{\norm}[2]{\left\lVert #1 \right\rVert_{#2}}

\newcommand{\trop}{\text{\sf Trop}}
\newcommand{\true}{\text{\sf true}}
\newcommand{\false}{\text{\sf false}}

\newcommand{\N}{\mathbb N} 
\newcommand{\R}{\mathbb R} 

\newcommand{\Name}{\text{\sf Datalog}^\circ}
\newcommand{\cd}{\text{ :- }}

\newcommand{\Dom}{\textsf{Dom}}

\allowdisplaybreaks
\newcommand{\multisetminus}{\mathbin{\setminus\mkern-6mu\setminus}}

\begin{document}

\title{Optimal Convergence of Iterative Methods for Datalogo}

\author{Simon Frisk}
\email{simon.frisk@wisc.edu}
\affiliation{
  \institution{University of Wisconsin-Madison}
  \city{Madison}
  \state{Wisconsin}
  \country{USA}
}

\author{Sungjin Im}
\email{sjin.im@gmail.com}
\affiliation{
  \institution{University of California, Santa Cruz}
  \city{Santa Cruz}
  \state{California}
  \country{USA}
}

\author{Paraschos Koutris}
\email{paris@cs.wisc.edu}
\affiliation{
  \institution{University of Wisconsin-Madison}
  \city{Madison}
  \state{Wisconsin}
  \country{USA}
}

\author{Benjamin Moseley}
\email{moseleyb@andrew.cmu.edu}
\affiliation{
  \institution{Carnegie Mellon University}
  \city{Pittsburgh}
  \state{Pennsylvania}
  \country{USA}
}

\author{Hung Ngo}
\email{hung.ngo@relational.ai}
\affiliation{
  \institution{RelationalAI}
  \city{Berkeley}
  \state{California}
  \country{USA}
}

\author{Kirk Pruhs}
\email{kirk@cs.pitt.edu}
\affiliation{
  \institution{University of Pittsburgh}
  \city{Pittsburgh}
  \state{Pennsylvania}
  \country{USA}
}

\author{Hangdong Zhao}
\email{hangdongzhao@microsoft.com}
\affiliation{
  \institution{Gray Systems Lab, Microsoft}
  \city{Redmond}
  \state{Washington}
  \country{USA}
}

\renewcommand{\shortauthors}{}

\begin{abstract}
  $\Name$ has been introduced as an extension to Datalog that increases expressiveness, yet retains simple least fixpoint semantics and admits optimization
  techniques such as semi-naïve evaluation and demand transformation. $\Name$ accomplishes
  this by generalizing the {\em or} and {\em and} operators of Datalog to addition and
  multiplication over a semiring.

  Finding a (minimal) fixpoint of a $\Name$ program is equivalent to finding a solution to a
  system of polynomial equations over the underlying semiring. Solving these systems of
  polynomial equations is not only a fundamental problem in the theory of $\Name$, but also has
  many applications in computer science, such as in databases, program analysis, and
  optimization.

  This paper resolves a key open problem in the theory of $\Name$: we prove a tight upper
  bound on the number of steps until convergence of iterative methods for solving these
  polynomial equation systems over a commutative $p$-stable semiring. In particular, we show that the
  number of steps until convergence is $O((p+1)n)$ where $n$ is the output size of the
  $\Name$ program.

  As the number of steps until convergence is only a proxy for the runtime of
  $\Name$ evaluation, we also consider the number of semiring operations used
  and show that, for a natural class of algorithms, $O((p+1)mn)$ is
  a tight upper bound, where $m$ is the maximum number of semiring operations per iteration.
\end{abstract}


\begin{CCSXML}
<ccs2012>
 <concept>
  <concept_id>00000000.0000000.0000000</concept_id>
  <concept_desc>Do Not Use This Code, Generate the Correct Terms for Your Paper</concept_desc>
  <concept_significance>500</concept_significance>
 </concept>
 <concept>
  <concept_id>00000000.00000000.00000000</concept_id>
  <concept_desc>Do Not Use This Code, Generate the Correct Terms for Your Paper</concept_desc>
  <concept_significance>300</concept_significance>
 </concept>
 <concept>
  <concept_id>00000000.00000000.00000000</concept_id>
  <concept_desc>Do Not Use This Code, Generate the Correct Terms for Your Paper</concept_desc>
  <concept_significance>100</concept_significance>
 </concept>
 <concept>
  <concept_id>00000000.00000000.00000000</concept_id>
  <concept_desc>Do Not Use This Code, Generate the Correct Terms for Your Paper</concept_desc>
  <concept_significance>100</concept_significance>
 </concept>
</ccs2012>
\end{CCSXML}

\iffullpaper
\authorsaddresses{}
\else
\setcopyright{cc}
\setcctype{by}
\acmJournal{PACMMOD}
\acmYear{2026} \acmVolume{4} \acmNumber{5 (PODS)} \acmArticle{310}
\acmMonth{11} \acmDOI{10.1145/3850145}
\keywords{Datalog, Datalogo, semiring, convergence, $p$-stability, iterative method}
\received{20 February 2007}
\received[revised]{12 March 2009}
\received[accepted]{5 June 2009}
\fi

\maketitle

\section{Introduction}
\label{sec:intro}

\subsection{$\Name$ and systems of polynomial equations over a semiring}

Computations in machine learning and graph analytics often require expressing iterative or
recursive computations involving aggregates such as $\min, \max, +$, etc. For example,
let $T(X,Y)$ denote the shortest path length between nodes $X$ and
$Y$ in a graph, and let $E(X,Y)$ denote the edge weights in the
graph. Then the all-pairs shortest paths (APSP) problem can be expressed naturally as:
\begin{align}
  T(X,Y) &= \min \{ E(X,Y), \min_Z \ (T(X,Z) + E(Z,Y)) \}.
    \label{eqn:APSP}
\end{align}
Writing these types of problems in traditional Datalog, even with aggregates, is clunky and faces semantic hurdles: a minimal fixpoint from a set
containment perspective is not necessarily the (shortest-path) fixpoint we are looking for.
There have been multiple attempts to extend Datalog to
express these types of problems
(see~\cite{DBLP:journals/jacm/KaminskiKGMH22,DBLP:journals/tplp/GrecoZ01} and references
therein).

Motivated by these problems, and the desire to retain the elegant least fixpoint
semantics of Datalog, Abo-Khamis et al.~\cite{Khamis0PSW22} introduced $\Name$, a language
that extends Datalog with aggregates and recursion over an arbitrary semiring. The
formulation of a $\Name$ program is similar to a Datalog program, with two
modifications: first, we replace $(\vee, \wedge)$ with $(\oplus, \otimes)$, where $\oplus$
and $\otimes$ are addition and multiplication operators in a
semiring\footnote{To not impede the flow of the introduction, we postpone defining
background concepts, like semirings, to Section~\ref{sec:prelim}.} $\bm S = (S, \oplus,
\otimes, 0, 1)$. Second, relations are replaced by functions whose range is the domain $S$.
These functions are also called {\em $\bm S$-relations}.\footnote{They are also called
$K$-relations in the seminal work of Green et al.~\cite{DBLP:conf/pods/GreenKT07}.}
Continuing Example~\eqref{eqn:APSP}, $E$ and $T$ are functions $\Dom \times \Dom \to
S$, where $\Dom$ is the domain of $X, Y$ and $Z$. In $\Name$, the rule becomes:
\begin{align}
  T(X,Y) & \cd E(X,Y) \oplus \bigoplus_Z T(X,Z) \otimes E(Z,Y),
  \label{eqn:datalogo:apsp}
\end{align}
By changing the semiring $\bm S$ and the underlying data ($E$), the $\Name$
rule~\eqref{eqn:datalogo:apsp} captures a wide range of problems. For example, when $\bm S$
is the Boolean semiring $(\{\false,\true\}, \vee, \wedge, \false, \true)$, the
rule~\eqref{eqn:datalogo:apsp} computes the transitive closure of $E$:
\begin{align}
  T(X,Y) & \cd E(X,Y) \vee \exists_Z \ (T(X,Z) \wedge E(Z,Y))
  \label{eqn:transitive_closure}
\end{align}
When $\bm S$ is the tropical semiring $\trop_+ = (\R_+ \cup \{\infty\}, \min, +, \infty,
0)$, and $E$ is the edge-weight function of a graph, the rule becomes~\eqref{eqn:APSP},
which solves the APSP problem.

The generalization from Datalog to $\Name$ has an intermediate step that was widely
recognized since the 1960s, the {\em algebraic path
problem}~\cite{DBLP:journals/cacm/Floyd62a, DBLP:journals/jacm/Warshall62, MR1059930,
DBLP:journals/jacm/Tarjan81a, DBLP:journals/tcs/Lehmann77, DBLP:journals/computing/Rote85}.
Researchers have long recognized that the transitive closure problem and the all-pairs
shortest paths problem can be expressed similarly by abstracting out the underlying
semiring. This creates the algebraic path problem which admits
common algorithms, such as the Floyd-Warshall
algorithm~\cite{DBLP:journals/cacm/Floyd62a,DBLP:journals/jacm/Warshall62}. These algorithms are described in standard algorithms textbooks~\cite{CLRS}.

$\Name$ generalizes the algebraic path problem by having EDBs and IDBs as $\bm S$-relations. $\Name$ can be evaluated via a natural
iterative algorithm: start by assigning all IDB-function values the additive
identity $\bm 0$, and then repeatedly apply the rules until a fixed point is
reached. This is sometimes called the ``na\"ive evaluation'' algorithm, in other contexts known as
Jacobi iteration, Gauss-Seidel iteration, and Kleene
iteration~\cite{semiring_book,DBLP:journals/computing/Rote85}. The output is the set of all
non-$\bm 0$ valued IDB tuples with associated semiring values. Throughout the paper, we denote by $n$ the number of IDB tuples in the output.

Another way to view the iterative evaluation of $\Name$ programs is as a method for
solving a system of polynomial equations over the semiring $\bm S$. The equation system
comes from a ``grounding'' of the $\Name$ program. As a brief example, consider the $\Name$
program~\eqref{eqn:datalogo:apsp}.
Suppose the domain of $X, Y, Z$ is $\Dom = [4]$, and the EDB mapping $E$ is defined as:
\begin{align*}
  E(1,2) &= e_{12}, \quad E(2,3) = e_{23}, \quad E(3,4) = e_{34}, \quad
  E(i,j)  = \bm 0 \text{ for all other } (i,j) \in \Dom \times \Dom.
\end{align*}
Here, $e_{12}, e_{23}, e_{34}$ are elements of the semiring $\bm S$.
Then, thinking of $T(i,j)$ as an unknown variable $X_{ij}$, the grounding of the $\Name$ program
results in a system of $4\times 4 = 16$ polynomial equations:
\begin{align}
    X_{ij} & = e_{ij} \oplus \bigoplus_{k=1}^4 X_{ik} \otimes e_{kj},
\end{align}
where $e_{ij}=\bm 0$ other than for $(i,j) \in \{(1,2), (2,3), (3,4)\}$.
Most of these variables $X_{ij}$ will be $\bm 0$ because the EDB $E$ is sparse.
The effective system of equations is:
\begin{align}
    X_{12}
    & = e_{12},
    && X_{13} = X_{12} \otimes e_{23},
    & X_{14} = X_{13} \otimes e_{34}, \label{eqn:system:of:equations} \\
    X_{23}
    &= e_{23},
    && X_{24} = X_{23} \otimes e_{34},
    & X_{34} = e_{34}. \nonumber
\end{align}
More generally, the system of equations is of the form $\bm X = \bm f(\bm X)$,
where $\bm X$ is a vector of $n$ variables over $S$, and $\bm f : S^n \to S^n$ is a vector
of polynomial functions over the semiring $\bm S$. The function $\bm f$ is {\em linear} if
there is no monomial of degree greater than $1$ in any of the equations.

Solving  systems of polynomial equations is a fundamental task
and has applications in automata theory, databases, program analysis, optimization, and many
other areas~\cite{MR1470001,DBLP:conf/popl/CousotC77,MR1728440,
MR556411,MR584516,MR526496,MR1059930, DBLP:conf/lics/HopkinsK99,
DBLP:journals/tcs/Lehmann77, semiring_book,MR609751}. Under  assumptions about the
semiring $\bm S$ and/or the function $\bm f$, there are non-iterative algorithms such as the
Newton method~\cite{DBLP:journals/jacm/EsparzaKL10,DBLP:conf/popl/RepsTP16} or generalized
Gaussian elimination~\cite{DBLP:journals/tcs/Lehmann77,DBLP:journals/computing/Rote85}. In
general, however, the iterative method is the most {\em robust}, as it is applicable with the least assumptions about the input, and it does not
require materializing a quadratic-sized matrix. In relational database systems,
iterating to fixpoint is also the main method for evaluating $\Name$.
(Semi-naïve evaluation only reduces the complexity of each step; it does not
change the number of iterations.)

This motivates the two problems we address in this paper.
First, bounding the number of iterations to convergence, $\min_k \bm f^{(k)}(\bm 0)
= \bm f^{(k+1)}(\bm 0)$, is fundamental both to theory and practice of $\Name$, 
and related problems in other areas. In theory, it is a fundamental result that (positive)
Datalog programs can be evaluated in PTIME~\cite{DBLP:books/aw/AbiteboulHV95}: this follows because each iteration derives at least one new fact, bounding the iterations by the output size $n$. 
One may hope for a smooth generalization of this result to $\Name$ programs over a
$p$-stable semiring, since the Boolean semiring from Datalog is $0$-stable. 
In practice, these bounds are useful to detect run-away computations.

Second, the number of iterations is only a proxy for the runtime of algorithms.
For example, the binary-recursive formulation of transitive closure
\begin{align}
  T(X,Y) & \cd E(X,Y) \vee \exists_Z \ (T(X,Z) \wedge T(Z,Y))
  \label{eqn:binary:transitive_closure}
\end{align}
converges in exponentially fewer iterations than the linear formulation~\eqref{eqn:transitive_closure},
yet it is not faster.
It is therefore important to study the actual runtime of the fixpoint search algorithm.

\subsection{Previous Results}

A semiring $\bm S = (S, \oplus, \otimes, 0, 1)$ is $p$-stable if for all elements $u \in S$
we have $u^{(p)}=u^{(p+1)}$, where $u^{(p)} := 1 \oplus u \oplus u^2 \oplus \cdots \oplus
u^{p}$. A semiring  $\bm S$ is stable if there exists a constant $p$ such that $\bm S$ is
$p$-stable. The justification for restricting semirings to being stable arises from the fact
that $\Name$ programs over non-stable semirings may not always converge to a
fixpoint~\cite{Khamis0PSW22}.

A series of previous results have established bounds on the iteration count for
iterative evaluation of $\Name$ programs. In~\cite{Khamis0PSW22}, the authors showed that
the number of iterations is $O((p+2)^n)$ in general, and
$O((p+1)^n)$ for linear $\Name$. If $p=0$ they proved a bound of $O(n)$ steps. For the
special cases of the $\trop_p$ semiring and linear $\Name$ programs, the number of iterations
to convergence is bounded by $(p+1)n-1$, which is tight~\cite{Khamis0PSW22}.

The exponential gap between $p=0$ and $p>0$ was improved in~\cite{DatalogoLinear,
DatalogoPolytime}. In \cite{DatalogoLinear} it was shown that if the semiring is
$p$-stable with $L$ elements in the semiring domain, then the iterative evaluation of all
linear $\Name$ programs converges after $O(\min((p+1) n^3, (p+1) n \lg L))$ steps.
In~\cite{DatalogoPolytime}, the non-linear case was considered, and the bound was shown
to be $O((p+1)kn^4\log(kd))$ steps, where
$k$ is the input size and $d$ is the degree of the polynomial $\bm f$.

\subsection{Our Contributions}

Our research is motivated by the gaps in bounds established by previous work, which are not tight, and have some surprising properties: there is a gap between linear and non-linear $\Name$ programs, and the bounds for non-linear $\Name$ programs depended on the input size.
Moreover, we observed a bug in the $\Omega((p+1)n^3)$ lower bound from~\cite{DatalogoLinear}.%
\footnote{The authors of~\cite{DatalogoLinear} concur with this assessment.} 

Our {\bf first contribution} is to show that the iterative
evaluation of linear $\Name$ programs actually converges in a linear number of iterations.

\begin{theorem}
\label{thm:linearmain}
The iterative evaluation of every linear $\Name$ program with $n$ IDB tuples over a $p$-stable commutative
semiring converges in at most  $(2p+1)n$ iterations.
\end{theorem}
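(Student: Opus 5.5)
The plan is to reduce the statement to a combinatorial statement about walks, and then prove an absorption lemma for long walks using commutativity and $p$-stability.

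\textbf{Walk expansion.} Write the linear program as $\bm f(\bm X) = A\bm X \oplus \bm b$, where $A$ is an $n\times n$ matrix over $\bm S$ and $\bm b\in S^n$. By induction on $k$,
\[
\bm f^{(k)}(\bm 0) = \bigoplus_{j=0}^{k-1} A^j \bm b .
\]
Coordinate $i$ of this vector is the sum, over all walks $\pi$ of length $<k$ in the weighted graph $G_A$ that start at $i$ and end at some $t$, of $w(\pi)\otimes b_t$. By commutativity, $w(\pi)$ depends only on the multiset of edges of $\pi$. It therefore suffices to prove the following claim. For every pair $(i,t)$ and every $L\ge (2p+1)n$, the sum over walks of length exactly $L$ from $i$ to $t$ is absorbed by the sum over walks of length $<L$ from $i$ to $t$; that is, $P_{<L}\oplus P_L = P_{<L}$. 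Applying the claim for each $L$ in turn gives $\bm f^{(k)}(\bm 0)=\bm f^{(k+1)}(\bm 0)$ for $k=(2p+1)n$.

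\textbf{Decomposing a long walk.} View a walk $\pi$ by its edge multiset $M(\pi)$. The multiset $M(\pi)$ is a connected multigraph which, after adding an edge $t\to i$, is Eulerian. Conversely, any such multiset is realized by some walk; the number of realizations matters when $\oplus$ is not idempotent, and I return to this below.

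Decompose $M(\pi)$ into a simple $i$–$t$ path $P$ together with a multiset of simple cycles. Connectivity is certified by a ``skeleton'' $K$ consisting of $P$ plus, for each vertex off $P$, one cycle containing it. The goal is to show that everything beyond a skeleton of total size at most about $n$ plus $p$ additional cycles (each of length $\le n$, contributing the $2pn$ term) is absorbable. I would first try to charge every walk to such a canonical form. The form is: skeleton $K$, plus cycles $c_1,\dots,c_r$, each of which touches $V(K)$ and so can be inserted at a vertex already visited.

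\textbf{Absorption lemma.} The key algebraic tool is the following. For $u=c_1\oplus\cdots\oplus c_m$, where the $c_j$ range over the simple cycles through the visited vertex set, $p$-stability gives
\[
u^{(p)} = u^{(p+1)} = u^{(q)} \quad \text{for all } q\ge p .
\]
Multiplying by $w(K)$ and expanding with commutativity, all products of $p+1$ or more cycle weights should be absorbed by products of at most $p$ of them. Because we may take $K$ to include at least one copy of each needed cycle, the shifted identity $c\,u^{(p)} = c\,u^{(p+1)}$ also handles cycles that must occur for connectivity.

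\textbf{Main obstacle.} The hard step is making this expansion match the walk sum exactly, coefficient by coefficient, since $\bm S$ need not be idempotent. The multinomial multiplicities in $u^{q}$ must correspond to the number of distinct walks realizing each edge multiset, and the skeleton must be chosen canonically so that no walk is counted twice. My first attempt would be to define the walk sum recursively by vertex insertion: splice cycles in at their first visit. Then the generating function of walks becomes a product of factors $(c_v^*)$, one per vertex, and $p$-stability can be applied to each factor $c_v^{(p)}$ separately. The resulting length bound is at most $n$ (the path) plus at most $2pn$ for the cycles, where each of the $p$ cycle insertions costs at most $n$ edges plus possibly $n$ more to reconnect. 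If that accounting falls short, I would fall back to inducting on $n$ by eliminating one variable à la Gaussian elimination, $X_v \mapsto a_{vv}^{(p)}(\dots)$, and tracking how the iteration count grows by at most $2p+1$ per eliminated variable.
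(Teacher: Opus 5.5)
There is a genuine gap. Your reduction to showing that walks of length $L\ge(2p+1)n$ are absorbed by shorter ones is right, and so is your algebraic tool of expanding $u=c_1\oplus\cdots\oplus c_m$ under $p$-stability. The latter is essentially the paper's Parallel Absorption Lemma, though extracting a single monomial of $u^{q}$ from $u^{(p)}=u^{(q)}$ needs the Squeeze Lemma when $\oplus$ is not idempotent.

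\textbf{The skeleton cannot be kept.} Your canonical form keeps a skeleton $K$ with a cycle through every vertex of $W$ off $P$; this is what the shifted identity $c\,u^{(p)}=c\,u^{(p+1)}$ is for. You estimate $|K|$ as about $n$. That is false, and no scheme that preserves the vertex set of $W$ can give a linear bound. Take edges $s\to a_1\to\cdots\to a_k$ and $a_k\to v_j\to s$ for $j\in[k]$, plus an exit from $s$. Let $W$ go once around each cycle $C_j=s\,a_1\cdots a_k\,v_j\,s$ and then exit. Any walk visiting all the $v_j$ has at least $k^2=\Theta(n^2)$ edges. So $W$ must be absorbed by walks that drop most of the $C_j$ together with their private vertices. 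Indeed, $\prod_j C_j$ is one monomial of $(\bigoplus_j C_j)^k$, hence is absorbed by walks using at most $p$ of the $C_j$.

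\textbf{Dropping cycles gives no length control by itself.} An absorbing term may use longer cycles than the ones it replaces. Also, a cycle cannot be deleted or replicated in isolation when other cycles of $W$ hang off it.

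\textbf{Your other suggestions do not close this.} The coefficient matching you call the main obstacle is left open. The factorization into per-vertex factors $c_v^*$ does not exist as stated, because closed walks at $v$ nest inside one another. Resolving the nesting by truncating every nested star at $p$ is Gaussian elimination, whose terms can have length exponential in $n$. So the accounting ``$n+2pn$'', which assumes $p$ cycle insertions in total rather than $p$ per vertex at every nesting level, is unjustified. The fallback has the same flaw: after eliminating $X_v$, a single reduced edge stands for up to $p+2$ original edges. Iteration counts therefore multiply instead of growing additively by $2p+1$.

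\textbf{The missing idea is the paper's loop tree,} which handles length and counting at once.
\begin{enumerate}
\item Organize the loops of a path-loop decomposition $(P,\mathcal L)$ by depth, with vertex levels $T_d$. Always delete or replicate whole subtrees $\tau(L)$, so that loops hanging off $L$ travel with it.
\item Take the minimal $d$ such that some $v\in T_{d-1}$ lies on at least $p+1$ loops of $\mathcal L_d$, or some $v\in T_d$ lies on at least $p+2$ of them (one is then kept aside).
\item Replace $W$ by the walks $W(\pi)$ that splice at $v$, in order, the canonical walks of an arbitrary sequence $\pi$ of at most $p$ of these subtrees.
\end{enumerate}
Everything is spliced consecutively at the single vertex $v$. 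Hence the walks using $k_i$ copies of the $i$-th distinct subtree are counted exactly by $\binom{m}{k_1,\dots,k_r}$ with $m=\sum_i k_i\le p$. So Parallel Absorption applies, and Absorption Transitivity lets you iterate.

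The new walks need not be shorter than $W$. Termination instead follows from the lexicographic decrease of $(|\mathcal L_1|,|\mathcal L_2|,\dots)$. At termination, every $v\in T_d$ lies on at most $p+1$ loops of depth $d$ and at most $p$ of depth $d+1$. Its out-degree is therefore at most $2p+1$, and the walk has at most $(2p+1)n$ edges.
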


Our {\bf second contribution} is to show a similar upper bound for general $\Name$.

\begin{theorem}
    \label{thm:nonlinearmain}
    The iterative evaluation of every $\Name$ program with $n$ IDB tuples over a $p$-stable commutative semiring
    converges in at most  $(2p+2)n$ iterations.
\end{theorem}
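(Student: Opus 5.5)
We prove Theorem~\ref{thm:nonlinearmain} by viewing the iterates $\bm f^{(k)}(\bm 0)$ as sums over derivation trees and showing that every tree of depth greater than $(2p+2)n$ is absorbed by trees of smaller depth.

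\textbf{Tree semantics.} Since $\bm S$ is commutative, $f_i^{(k)}(\bm 0)$ is the $\oplus$-sum, over all derivation trees rooted at variable $X_i$ of depth at most $k$, of the product of the monomial coefficients used in the tree. Each internal node is labeled by a variable and a monomial of that variable's polynomial, and its children are the variables occurring in that monomial. Convergence at step $K$ therefore follows if we show: for every tree $t$ of depth exactly $K+1$, the sum over trees of depth $\le K+1$ equals the sum over trees of depth $\le K$. We plan to prove this by a charging argument, grouping deep trees into families whose total is absorbed by shallower trees.

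\textbf{Pumping along a root-to-leaf path.} Fix a tree of depth $K+1 > (2p+2)n$ and a longest root-to-leaf path in it. By pigeonhole, some variable $X_j$ appears on this path at least $2p+3$ times. Cutting the path at consecutive occurrences of $X_j$ decomposes the tree into a top context, a sequence of ``pump'' contexts $C_1,\dots,C_r$ (each a tree with one hole labeled $X_j$), and a bottom subtree. By commutativity, the weight of the tree is the product of the weights of these pieces.

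\textbf{Absorption via $p$-stability.} In the linear case, the key fact is that for any $u$ we have $u^{(p)} = u^{(p+1)}$, hence $u^{(p)} = u^{(q)}$ for all $q \ge p$. In the non-linear case the contexts $C_\ell$ carry side subtrees, so their weights are not all equal and we cannot apply stability to a single element. Instead we will use the multivariate consequence of $p$-stability: for commuting elements $u_1,\dots,u_m$, the sum of all monomials of total degree at most $p+1$ equals the sum of those of degree at most $p$. Gather all trees obtained by deleting, duplicating, or permuting pump contexts, with the top context, the bottom subtree, and the multiset of available pumps held fixed. Their total weight is $w_{\text{top}} \cdot w_{\text{bot}}$ times a sum of monomials in the pump weights, and stability says the monomials of degree $> p$ are redundant. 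The extra slack from $2p+1$ to $2p+2$ over the linear bound will accommodate a non-linear monomial that branches into the deep path.

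\textbf{Main obstacle.} The difficulty is making the absorption respect depth: removing pumps must yield trees of depth $\le K$, not merely of fewer nodes. Removing a pump along the longest path can leave another branch still of depth $K+1$. We expect to handle this by choosing, for each tree, a canonical set of repeated occurrences whose removal strictly decreases the depth. A natural choice is to work level by level, defining the height profile and pumping on the variables at heights $\ge$ the critical level. The factor $2$ in $(2p+2)$ presumably arises because we need $p+1$ removable pumps available both from occurrences at levels ``above'' and ``below'' a critical variable, each side contributing at most $(p+1)n$ levels. We would first try to establish the linear case (Theorem~\ref{thm:linearmain}) this way, where the trees are paths and the depth concern disappears. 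We would then extend to trees by inducting on $n$: remove the variable $X_j$ that is bottom-most on all deep paths, apply the induction hypothesis to the subsystem with $X_j$ treated as a constant, and spend an additional $2p+2$ levels absorbing the powers of $X_j$.
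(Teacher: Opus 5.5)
\textbf{Gap: the obstacle you flag as ``main'' is never resolved.} Nothing in your plan guarantees that the absorbing trees are shallower, or that iterating replacements terminates. Your parallel-absorption step replaces the $r\ge 2p+2$ pump contexts by all sequences of at most $p$ pumps drawn with repetition. Pumps have unequal spine lengths $\ell_i$ and carry side subtrees, so for $p\ge 2$ this can make the tree deeper. A sequence repeating the longest pump $p$ times has spine length $p\,\ell_{\max}$, which exceeds $\sum_i \ell_i$ when one pump is long and the rest short. Placing a pump with a deep side branch lower in the chain also pushes that branch down. This is exactly the failure the paper notes for its naive first attempt in the linear case.

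The paper does not make depth drop at each step. It organizes the loops of a path-loop decomposition into a loop tree and takes the minimal level $d$ at which some vertex $v$ is overloaded: either $v\in T_{d-1}$ lies on at least $p+1$ loops of $\mathcal L_d$, or $v\in T_d$ lies on at least $p+2$. It then applies Parallel Absorption to the yields of the whole subtrees $\tau(L)$, $L\in\mathcal L_d(v)$. Every produced walk has a lexicographically smaller profile $(|\mathcal L_1|,|\mathcal L_2|,\dots)$. That gives termination even though individual walks may get longer. The $(2p+1)n$ bound is then read off a degree count on terminal walks: at most $p+1$ loops of $\mathcal L_d$ and at most $p$ of $\mathcal L_{d+1}$ pass through each $v\in T_d$. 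Your account of the factor $2$ (pumps ``above and below'') is not this mechanism, and your plan has no potential function playing this role.

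\textbf{The fallback for trees does not give an additive bound.} Inducting on $n$ by freezing $X_j$ as a constant does not yield an extra $2p+2$ levels. Occurrences of $X_j$ interleave with the other variables along a root-to-leaf path, so bounds for the segments between consecutive $X_j$'s compose multiplicatively. You get roughly $(c+1)(2p+2)(n-1)$ if $X_j$ occurs $c$ times. Also, absorption identities proved with $X_j$ as a formal constant do not control the depth of the trees later substituted for it.

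\textbf{The missing idea is the paper's reduction to the linear case.} First apply the wedge-push canonical transformation, making the tree as left-deep as possible. Decompose it into its path tree of leftmost paths. Run the linear process on the shallowest path-node longer than $(2p+1)n$, and carry each edge's hanging subtrees along whenever that edge is duplicated or deleted. The invariant that makes this work is Lemma~\ref{lem:non-linear-key}: along any root-to-leaf path, a variable occurs in at most two path-nodes, at most $2p+1$ times in the first and exactly once in the second. It is preserved because linear absorption never moves vertices across a singly-visited vertex that separates disjoint prefix and suffix. That single extra occurrence is where the $+1$ in $(2p+2)n$ comes from.
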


Together with the lower bounds of
$(p+1)n-1$ shown in \cite{Khamis0PSW22}, this establishes a tight bound within a factor of
$2+\epsilon$. Furthermore, this shows that the iteration count for $\Name$ and Datalog is
within a factor of $(2p+2)$. Since $p$ is often a small
constant, this means convergence rates are asymptotically the same.
This also shows that the convergence rates of nonlinear and linear $\Name$ programs are asymptotically the same.

Our next research question is to consider if other algorithms than iteration can more efficiently evaluate $\Name$ programs. Here, iteration count is not a good complexity measure, as time per
iteration is important, and the concept of an ``iteration'' may not be well
defined. A natural candidate complexity measure is the number of semiring
operations used, which we call the {\em semiring complexity}. If $m$
is the maximum number of semiring operations per iteration\footnote{Here $m$ depends on the grounding of the program, i.e. how the program is transformed into a system of polynomial equations. Different groundings can give different $m$.}, then our upper bounds show that the worst-case semiring complexity of iterative evaluation is $O((p+1)nm)$.

Many standard algorithms for evaluating linear Datalog programs fall within a class
of algorithms we call {\em relaxation algorithms} -- this terminology is also used in the context of computing shortest paths~\cite{CLRS}. At a high level, a
relaxation algorithm consists of a sequence of relaxations, where each relaxation has roughly the form $x = x
\oplus a \otimes y$, where $x$ and $y$ are variables, and $a$ is a semiring value specified
in the $\Name$ program. Examples of relaxation algorithms include na\"{\i}ve and
semi-na\"{\i}ve evaluation~\cite{DBLP:books/aw/AbiteboulHV95}, as well as Dijkstra's
algorithm for shortest paths (other examples are found in \cite{GroundingBasedApproach}).

Our {\bf third contribution} is a lower bound on the semiring complexity of linear $\Name$
programs, which shows that our upper bound $O((p+1)mn)$ is optimal among relaxation
algorithms.

\subsection{Paper Organization}

Section~\ref{sec:prelim} gives necessary background context. Section~\ref{sec:absorption} explains our upper bound analysis technique, which we call the \textit{absorption
replacement process}. The upper bound
for linear $\Name$ is given in Section~\ref{sec:linearupperbound}, while the upper bound for general/nonlinear $\Name$ is given in Section~\ref{sec:generalupperbound}. Our lower bounds are given in Section~\ref{sec:lowerbounds}.
Due to space limitations, some details, particularly proofs,
have been moved to the appendices.

\section{Preliminaries}
\label{sec:prelim}

\subsection{Multisets}

A multiset $M$ is a collection of objects that can contain multiple copies of the same
object $x$. We assume that the elements $x$ are distinguishable so that we can
distinguish the first copy of $x$, the second copy of $x$, etc. By $x \in M$, we mean that
$M$ contains at least one of $x$. We use $M \uplus N$ to denote the multiset union of
multisets $M$ and $N$, where the number of copies of an element $x$ in $M \uplus N$ is the
maximum of the number of copies of $x$ in $M$ and the number of copies of $x$ in $N$.
The sum $M+N$ is the multiset where the number of copies of an element is the sum of the
number of copies in $M$ and $N$.
We use
$M \multisetminus N$ to denote the multiset difference of multisets $M$ and $N$, where the
number of copies of an element $x$ in $M \multisetminus N$ is the maximum of zero, and the
number of copies of $x$ in $M$ minus the number of copies of $x$ in $N$. In most applications of multiset difference, we will know a priori that the number of copies of
$x$ in $M$ is at least the number of copies of $x$ in $N$.

\subsection{Semirings}
A {\em semiring} is a tuple $\bm S = (S, \oplus, \otimes, 0, 1)$ where $S$ is a set called the \textit{domain} of the semiring, $\oplus$ and
$\otimes$ are binary operators on $S$, $(S, \oplus, 0)$ is a commutative monoid (meaning
$\oplus$ is commutative and associative, and $0$ is the identity for $\oplus$), $(S,
\otimes, 1)$ is a monoid (meaning $\otimes$ is associative, and $1$ is the identity for
$\otimes$), $a\otimes 0 = 0 \otimes a = 0$ for every $a \in S$, and $\otimes$ distributes
over $\oplus$. $\bm S$ is said to be {\em commutative} if $\otimes$ is commutative. A
standard reference on semirings is~\cite{semiring_book}.

We give some more definitions.
Given an element $u$ of the semiring, and $p$ a nonnegative integer, 
we define $u^{(p)} := 1 \oplus u \oplus u^2 \oplus \cdots \oplus u^{p},$ where $u^{i}
:= u \otimes u \otimes \cdots \otimes u$ ($i$ times). 
We define $pu$ as $\oplus_{i=1}^p u$, where the empty sum is by convention the
multiplicative unity 1. An element $u\in S$ is {\em
$p$-stable} if $u^{(p)}=u^{(p+1)}$, and a semiring $\bm S$ is {\em $p$-stable} if every
element $u \in S$ is $p$-stable. We say that $y \in S$ {\em absorbs} $x \in S$ if $x \oplus y =
y$. Let $M$ be a multiset of elements from $S$. Then we define $\oplus(M)$ to be the sum over all elements of $M$. 
We say that $M$ {\em absorbs} $x \in S$ if
$\oplus(M)$ absorbs $x$. The following fact, which follows from $p$-stability of $1$, will be
useful:
\begin{observation}
\label{obs:coef-limit}
Given a $p$-stable commutative semiring $\bm S = (S, \oplus, \otimes, 0, 1)$, for any $u \in
S$, we have $(p+1) u = (p+2) u$.
\end{observation}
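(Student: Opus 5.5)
The plan is to specialize the $p$-stability hypothesis to the element $u = 1$ and then read off the identity from how the sums are defined. Since $\bm S$ is $p$-stable, $1$ is a $p$-stable element, so $1^{(p)} = 1^{(p+1)}$. Because $1^i = 1$ for every $i \ge 0$, expanding the definition gives
\begin{align*}
  1^{(p)} = \underbrace{1 \oplus 1 \oplus \cdots \oplus 1}_{p+1 \text{ terms}} = (p+1)\cdot 1,
  \qquad
  1^{(p+1)} = (p+2)\cdot 1 .
\end{align*}
Stability of $1$ therefore says exactly that $(p+1)1 = (p+2)1$ in $S$.

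Next I transfer this identity to an arbitrary $u \in S$ using distributivity. For any positive integer $k$, the multiplicative identity and right distributivity give
\[
  k u = \bigoplus_{i=1}^k u = \bigoplus_{i=1}^k (1 \otimes u) = \Bigl(\bigoplus_{i=1}^k 1\Bigr)\otimes u = (k\cdot 1)\otimes u .
\]
Applying this with $k = p+1$ and $k = p+2$ and using the first step yields
\[
  (p+1)u = ((p+1)1)\otimes u = ((p+2)1)\otimes u = (p+2)u .
\]
Both $p+1$ and $p+2$ are at least $1$, so the unusual convention for the empty sum never comes into play.

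No step here is a real obstacle. The only points needing care are the bookkeeping that $1^{(p)}$ has $p+1$ summands, not $p$, and that the argument uses only distributivity, not commutativity of $\otimes$. The commutativity hypothesis in the observation is therefore harmless but not actually needed.
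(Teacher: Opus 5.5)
Your proposal is correct and takes essentially the same approach as the paper, which writes $(p+2)u = 1^{(p+1)}u = 1^{(p)}u = (p+1)u$ using $p$-stability of the element $1$ and distributivity. Your side remarks are also accurate: the count of $p+1$ summands in $1^{(p)}$ is right, and only distributivity, not commutativity of $\otimes$, is used.
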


\subsection{$\Name$ and Polynomial Systems}

As we briefly explained in Section~\ref{sec:intro}, a $\Name$ program can be viewed as a
polynomial map $\bm f : S^n \to S^n$, with component functions $f_i : S^n \to S$, $i \in
[n]$, where $S$ is the domain of the semiring, and $n$ is the number of grounded IDBs (the output size of the program). This polynomial system is sometimes called a \textit{grounding}.
An example of a grounding is shown in the introduction.
Roughly speaking, the grounding can be obtained by \textit{applying} the $\Name$ rules to the grounded IDBs and EDBs, which are the possible output and input tuples, respectively. This means that instead of having rules that state a general derivation, we have instantiated derivations for possible combinations of input and output.
For example, to ground a reachability program, for each vertex $v$ we have a variable $x_v$ representing its reachability, which in this case is boolean. We write the polynomial system by, for each vertex $v$, writing an equation for its reachability, which is the sum: $x_v=x_{v_1} \oplus x_{v_2} \oplus \cdots \oplus x_{v_k}$, where $v_1, v_2, \ldots, v_k$ are the vertices that can reach $v$. If the reachability program has a ``starting node'', we can add a term in the sum for that node, which is a constant EDB variable with value \textit{true}.
Grounding a general $\Name$ program is described in detail in previous work \cite{Khamis0PSW22,GroundingBasedApproach}. We remark that there can be multiple ways to ground a $\Name$ program, and this choice can affect the complexity of iterative evaluation, as it impacts the cost of each iteration.

In the example shown
on line~\eqref{eqn:system:of:equations}, we can assume $n=6$ and $\bm f : S^6 \to S^6$, as there
are at most $6$ grounded IDBs that could be non-zero. 
After $q$ iterations of iterative evaluation of a $\Name$ program,
the semiring values
of the tuples in the output are given by 
$\bm f^{(q)}(\bm 0) $, where $\bm 0 = (0,\dots, 0)$ is the all-zero vector,
and $\bm f^{(q)}$ is the composition of $\bm f$
with itself $q$ times.
Thus the number of iterations until convergence is
the minimum  $q$ such that
$\bm f^{(q)}(\bm 0) = \bm f^{(q+1)}(\bm 0)$.

\subsection{$\Name$ and context-free languages}

Given this polynomial function $f$, one can then define, for each component function $f_i$, a context-free grammar (CFG). 
We will explain how this grammar arises with an example: consider the  polynomial function $\bm f: S^2 \to S^2$,
where $f_1(X,Y) = (a\otimes X \otimes Y) \oplus (b\otimes Y) \oplus c$,
$f_2(X,Y) = c \otimes X \oplus d$,
$a, b, c, d$ are elements of the semiring, and $X$ and $Y$ are IDBs. 
This $f$ can be naturally
modeled by a CFG over non-terminals $\{X, Y\}$ and terminals $\{a,b,c,d\}$ with
the following derivation rules:
\begin{align}
    X &\rightarrow aXY \mid bY \mid c &&
    & Y &\rightarrow cX \mid d
    \label{eqn:example:cfg}
\end{align}

A concept from context-free grammars that we will use heavily is the notion of a \textit{derivation tree}, sometimes also called a \textit{parse tree}. A derivation tree is a tree where the internal nodes are non-terminal symbols and the leaves are terminal symbols. The children of a non-terminal are described by the production rules of the grammar. Derivation trees have been heavily used in previous research when analyzing different properties of Datalog programs \cite{belowabovewhyprovenancedatalog}. 

Composing $\bm f$ with itself can now be seen as looking at all derivation trees 
of this CFG that have depth at most 2.
The value of $f_1^{(2)}(\bm 0)$ is $c \oplus ( b \otimes d) \oplus (a\otimes c \otimes d )$, which corresponds to the sum of the product of the leaves across all derivation trees with root $X$ and depth at most 2 (see Figure~\ref{fig:example:cfg}).
\begin{figure}[th]
\begin{minipage}[t]{0.44\textwidth}
  \caption{Derivation trees of depth $\leq 2$ for CFG~\eqref{eqn:example:cfg}}
  \label{fig:example:cfg}
  \centering
  \begin{tikzpicture}[scale = 0.6, every node/.style={transform shape}]
    \node [] at (0,1) (x1) {$X$};
    \node [] at (0,0) (c) {$c$};
    \draw [] (x1) to (c);
    \node [] at (1.5,1) (x2) {$X$};
    \node [] at (1,0) (b) {$b$};
    \node [] at (2,0) (y) {$Y$};
    \node [] at (2,-1) (d) {$d$};
    \draw [] (x2) to (b);
    \draw [] (x2) to (y);
    \draw [] (y) to (d);
    \node [] at (3.5,1) (x3) {$X$};
    \node [] at (3,0) (a) {$a$};
    \node [] at (4,0) (y2) {$Y$};
    \node [] at (4,-1) (d2) {$d$};
    \node [] at (3.5,0) (x4) {$X$};
    \node [] at (3.5,-1) (c2) {$c$};
    \draw [] (x3) to (a);
    \draw [] (x3) to (y2);
    \draw [] (y2) to (d2);
    \draw [] (x4) to (c2);
    \draw [] (x3) to (x4);
  \end{tikzpicture}
\end{minipage}%
\hfill
\begin{minipage}[t]{0.55\textwidth}
  \caption{Example of a transition graph}
  \label{fig:example:linear:cfg}
  \centering
  \begin{tikzpicture}[scale = 0.6, every node/.style={transform shape}]
    \node [circle, draw] at (0,1) (x) {$X$};
    \node [circle, draw] at (3,1) (y) {$Y$};
    \node [circle, draw] at (1.5,-0.5) (t) {$t$};
    \draw [->] (x) to [bend left] node[above] {$b$} (y);
    \draw [->] (y) to [bend left] node[above] {$d$} (x);
    \draw [->] (x) edge[loop left] node {$a$} ();
    \draw [->] (x) to node[left] {$c$} (t);
    \draw [->] (y) to node[right] {$e$} (t);
     \node at (4.8,1) (x) {$X \rightarrow aX \mid bY \mid c$};
     \node at (4.8,0.5) (x) {$Y \rightarrow dX \mid e$};
  \end{tikzpicture}
\end{minipage}
\end{figure}

This construction dates back to Pilling~\cite{MR339568} and Hopkins and
Kozen~\cite{DBLP:conf/lics/HopkinsK99}. In general, the context-free grammar for the $f_s$ component
function of $\bm f$ is defined by the 4-tuple $G_s
= (V, S, R, X_s)$ where $V$ is the set of variables (one for each output grounded IDB), $S$
is the set of terminal symbols (one for each element of the domain of the semiring that appears in the program), $R \subset V \times (V \cup S)^*$ is a finite set of
derivation rules, and $X_s$ is the start symbol that is the root of all derivation trees (which is
the $s$th output grounded IDB).
The  {\em yield} of a derivation tree $T$ 
of such a grammar $G_s$ is the product of its leaves,
denoted by $Z(T)$.
For any non-negative integer $q$, and any $s \in [n]$, let $\mathcal T_s^q$
denote the set of all derivation trees for grammar $G_s$
that have depth  $ q$. It is known~\cite{MR339568,DBLP:conf/lics/HopkinsK99} that evaluating the $q$th composition 
of $f_s$ with itself, evaluated at zero, is the sum of the yields of
all derivation trees of depth at most $q$, that is:
\begin{align}
    \label{eqn:general-f-trees}
f_s^{(q)}(\bm 0) = \bigoplus_{i=0}^q \bigoplus_{T \in \mathcal T_s^i} Z(T)
\end{align}
A $\Name$ program converges in $q$ steps if $q$ is the
minimum integer such that $f_s^{(q)}(\bm 0)=f_s^{(q+1)}(\bm 0)$.

\subsection{Linear $\Name$}

A $\Name$ program is linear if each of its component functions $f_s$ is linear, meaning that
in each monomial of $f_s$ at most one variable appears. To
restate this in terms of the CFG $G_s$, 
linearity means that all derivation rules
are of the form $X_i \to cX_j$ or $X_i \to c$, where $c$ is an element of the ground set of
the semiring $\bm S$ (the value $c$ is the semiring product of the semiring values of the EDB atoms). For example, the grammar in Figure~\ref{fig:example:linear:cfg} is linear.

For linear $\Name$ programs there is an equivalent graph-theoretic way to represent the derivation trees that is more
convenient to analyze~\cite{Khamis0PSW22,DatalogoLinear}. This representation is a  ``transition'' graph $G = (V \cup \{t\},E)$ where the vertex set $V$ is the set of
variables from the $G_s$ grammars, and $t$ is a special ``sink'' vertex. For every derivation rule $X_i
\to cX_j$, there is a directed edge $(X_i, X_j) \in E$ that
is labeled with $c$. For every derivation
rule $X_i \to c$, there is a directed edge $(X_i, t) \in E$ labeled with $c$. We use $x_e$ to
denote the label (semiring value) that is assigned to the edge $e$.
An example of a transition graph is shown in Figure~\ref{fig:example:linear:cfg}.
Let 
$\mathcal{W}_{s,t}^k$  denote the set of all walks from $s$ to $t$ of length $k$ in $G$.
There is then a one-to-one correspondence between the derivation trees 
of $G_s$ with depth $k$  and elements of $\mathcal{W}_{s,t}^k$. 
One can then express $f_s^{(q)}(\bm 0)$ as
\begin{align}
    f_s^{(q)}(\bm 0) =  \bigoplus_{k=1}^q \bigoplus_{W \in {\mathcal{W}^k_{s,t}}}  \bigotimes_{e \in W} x_e,
\label{prop:linear:convergence:criterion}
\end{align}
In other words, $f_s^{(q)}(\bm 0)$ is the sum over all walks $W$ 
from $s$ to $t$ in $G$, with  at most $q$ edges, of the
product of the labels on the edges of $W$.
We use extensively the following characterization of edges that form a walk~\cite{Corberan}. 

\begin{observation}
\label{obs:walkcharactorization}
    A collection $C$ of edges can be ordered to form a walk $W$ from $s$ to $t$ if and only
    if the edges in $C$ are connected (when viewed as undirected edges), the out-degree
    of $s$ is one greater than the in-degree of $s$, the in-degree of $t$ is one greater than
    the out-degree of $t$, and for all other vertices $v$ the in-degree of $v$ is equal to the
    out-degree of $v$.
\end{observation}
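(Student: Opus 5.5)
The statement is the directed-multigraph form of Euler's theorem. I would prove it by reducing to the Eulerian-circuit case. Throughout, $C$ is a multiset of edges, and $s \neq t$ (here $t$ is the sink).

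\emph{Necessity.} Suppose the edges of $C$ are ordered as a walk $W = (e_1,\dots,e_k)$ from $s$ to $t$.
\begin{itemize}
\item Consecutive edges share an endpoint, so the underlying undirected multigraph is connected.
\item Each intermediate visit of the walk to a vertex $v$ uses one edge entering $v$ and one edge leaving $v$.
\item The first edge contributes an extra out-edge at $s$, and the last edge contributes an extra in-edge at $t$.
\end{itemize}
Summing these contributions over all visits gives exactly the stated degree conditions.

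\emph{Sufficiency.} Assume the connectivity and degree conditions hold. Since $\mathrm{out}(s) = \mathrm{in}(s) + 1 \ge 1$, the multiset $C$ is nonempty and both $s$ and $t$ are endpoints of edges in $C$. Form $C' = C + \{(t,s)\}$ by adding one auxiliary edge. In $C'$ every vertex has in-degree equal to out-degree, and $C'$ is still connected.

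I would then show, by strong induction on $|C'|$, that any nonempty, connected, balanced multiset of edges can be ordered as a closed walk (an Eulerian circuit).
\begin{enumerate}
\item Start at any vertex and repeatedly follow an unused edge. By balance, each time we enter a vertex $v \neq$ start, $v$ has strictly more used in-edges than used out-edges, so an unused out-edge exists. Hence the process can only get stuck at the start vertex, and it produces a closed walk $Q$.
\item Remove the edges of $Q$. The remaining multiset $R$ is still balanced, because $Q$ uses equally many in- and out-edges at each vertex.
\item Split $R$ into its connected components. Each component is nonempty, balanced and smaller than $C'$, so by induction it has a closed walk.
\item By connectivity of $C'$, each component shares a vertex with $Q$. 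Splice its closed walk into $Q$ at the first occurrence of such a shared vertex.
\end{enumerate}
The result is a closed walk using every edge of $C'$ exactly once.

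Finally, rotate this closed walk so that the auxiliary edge $(t,s)$ comes last, and delete that edge. What remains is an ordering of $C$ forming a walk from $s$ to $t$.

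The only delicate step is the splicing in the induction. It requires checking that every leftover component really touches $Q$, which follows from connectivity of $C'$. It also requires handling multiple components and parallel copies of edges, which is where the convention that multiset elements are distinguishable is used. The rest is routine degree counting.
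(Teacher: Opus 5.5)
Your proposal is correct and follows the paper's argument: for necessity, count how each visit enters and leaves a vertex; for sufficiency, add the edge $(t,s)$, take an Eulerian circuit of the resulting balanced connected multiset, and delete $(t,s)$. The only difference is that you prove the existence of the Eulerian circuit yourself with a Hierholzer-style induction, while the paper takes that fact as known.
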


The `only if' direction follows from the fact that any $s$-$t$ walk traversing each edge once must enter and leave intermediate vertices an equal number of times. For the `if' direction, the degree constraints imply that $C \cup \{(t, s)\}$ satisfies the conditions for an Eulerian circuit; the connectivity of $C$ ensures this circuit covers all edges, which, upon removing $(t, s)$, yields the desired $s$-$t$ walk.

We generally adopt standard graph-theoretic terminology. 
For brevity, we will use {\em loop}  to mean a closed walk that does not
traverse any vertex more than once (instead of the more
common term ``simple cycle''). We also employ the
following path-loop decomposition of a walk as used in~\cite{DatalogoLinear}.

\begin{definition}
    Let $W$ be a walk from $s$ to $t$ in a directed graph. A {\em path-loop decomposition}
    of $W$ is a pair $(P, \mathcal L)$ where $P$ is a simple path from $s$ to $t$ and
    $\mathcal L$ is a multiset of loops. Here, the multiset of edges in $W$ is equal to
    the multiset of edges obtained by taking the (multiset) sum  of $P$ and the edges in $\mathcal L$.
\end{definition}

We sometimes use the term \textit{loop decomposition} referring to only $\mathcal L$.

\subsection{Nonlinear $\Name$}
\begin{figure}[th]
\begin{minipage}{0.45\textwidth}
  \caption{An example derivation tree \label{fig:transformbefore}}
  \centering
\begin{tikzpicture}[scale=0.6, every node/.style={transform shape}]
  \node [] at (0,1)   (Xroot) {$X$};
  \node [] at (-0.8,0) (aL)  {$a$};
  \node [] at (0.0,0)  (X2)  {$X$};
  \node [] at (1.2,0)  (Y)   {$Y$};
  \node [] at (0.0,-1) (a2)  {$a$};
  \node [] at (1.2,-1) (d)   {$d$};

  \draw [] (Xroot) -- (aL);
  \draw [] (Xroot) -- (X2);
  \draw [] (Xroot) -- (Y);
  \draw [] (X2) -- (a2);
  \draw [] (Y) -- (d);
\end{tikzpicture}

\end{minipage}%
\hfill
\begin{minipage}{0.45\textwidth}
  \caption{The corresponding compatible tree \label{fig:transformafter}}
  \centering
\begin{tikzpicture}[scale=0.7, every node/.style={transform shape}]
  \node [] at (0,1)   (Xroot) {$X$};
  \node [] at (-0.8,0) (aL)  {$t$};
  \node [] at (0.0,0)  (X2)  {$X$};
  \node [] at (1.2,0)  (Y)   {$Y$};
  \node [] at (0.0,-1) (a2)  {$t$};
  \node [] at (1.2,-1) (d)   {$t$};

  \draw [] (Xroot) -- node[midway, above] {$a$} (aL);
  \draw [] (Xroot) -- node[midway, above] {} (X2);
  \draw [] (Xroot) -- node[midway, above] {} (Y);
  \draw [] (X2) -- node[midway, right] {$a$} (a2);
  \draw [] (Y)  -- node[midway, right] {$d$} (d);
\end{tikzpicture}
\end{minipage}
\end{figure}

For our analysis of nonlinear $\Name$ it will be
convenient to modify the derivation trees,
in a manner similar to what we did for linear $\Name$ programs.
In particular we transform an arbitrary derivation tree $T$ into a tree $T'$, which we call a \textit{compatible} tree, such that
\begin{itemize}
\item The trees $T$ and $T'$ are isomorphic in the graph theoretic sense.
    \item The nonleaf nodes of $T'$ are identical to the nonleaf
    nodes of $T$. 
    \item Each edge of $T'$ has a label that is an element of the domain of the semiring.
    \item Each leaf of $T'$ is a new special sink symbol $t$.
\end{itemize}
 If the derivation tree $T$ has an edge $e=(X_i, X_j)$ in which 
 nonterminal $X_i$ is the parent of nonterminal $X_j$, then we simply label the edge $e \in T'$  with a multiplicative identity  of the semiring.  If the derivation tree $T$ has an edge $e=(X_i, c)$ in which 
 nonterminal $X_i$ is the parent of terminal/leaf $c$, we replace $e \in T'$ with the edge $(X_i, t)$ labeled with the semiring value $c$. Then for such a tree $T'$, we use $Z(T')$ to denote the product of the
 edge labels on $T'$. Note that  $Z(T)=Z(T')$. 
See Figures~\ref{fig:transformbefore} and \ref{fig:transformafter} for illustration of this transformation. 

\subsection{Proving Convergence with Absorption}
\label{subsec:absorptionreplacement}

We now explain how one can use the absorption properties
of stable semirings to prove convergence.
Recall the number of steps $q$ for $\Name$ to converge is the minimum $q$ such that $f_s^{(q)}(\bm 0)=f_s^{(q+1)}(\bm 0)$. 
Further recall that 
$f_s^{(q)}(\bm 0) = \bigoplus_{i=0}^q \bigoplus_{T \in \mathcal T_s^i} Z(T)$,
where $\mathcal T_s^i$ is the collection of derivation trees rooted
at $s$ with depth $i$. Thus to prove convergence by $q$ steps
it is sufficient to show that $\bigoplus_{T \in \mathcal T_s^{q+1}} Z(T)$ is absorbed by
$\bigoplus_{i=0}^q \bigoplus_{T \in \mathcal T_s^i} Z(T)$.
That is 
\begin{align}
    \bigoplus_{T \in \mathcal T_s^{q+1}} Z(T) + \bigoplus_{i=0}^q \bigoplus_{T \in \mathcal T_s^i} Z(T)=\bigoplus_{i=0}^q \bigoplus_{T \in \mathcal T_s^i} Z(T)
\end{align}
In turn, 
it is sufficient to show that each
$T \in \mathcal T_s^{q+1}$ is absorbed by 
$\bigoplus_{i=0}^q \bigoplus_{T' \in \mathcal T_s^i} Z(T')$, that is:
\begin{align}
Z(T) + \bigoplus_{i=0}^q \bigoplus_{T' \in \mathcal T_s^i} Z(T')=\bigoplus_{i=0}^q \bigoplus_{T' \in \mathcal T_s^i} Z(T')
\end{align}
We note a slight abuse of notation: we state that $T$ is absorbed rather than specifying the exact relation. Similarly, we may say that a collection of trees absorbs a tree. In turn it is sufficient to show that there exists
a subset $A$ of $\bigcup_{i=1}^q \mathcal T_s^i$ that
absorbs $T$, that is:
\begin{align}
Z(T) + \bigoplus_{T' \in A} Z(T')= \bigoplus_{T' \in A} Z(T')
\end{align}

The prior work  \cite{DatalogoLinear,DatalogoPolytime} 
identified the set of trees $A$ in the following manner.
First they showed that if $q$ is sufficiently big,
then the derivation tree $T$ must contain 
$p+1$ disjoint occurrences $S_1, \ldots, S_{p+1}$ 
of some derivation tree $S$, where  
removing any arbitrary collection of these copies
of $S$ in $T$ will result in a smaller (valid) derivation tree. 
Then the set $A$ consists of $p+1$ derivation trees,
where tree $i$ is $T$ with subtree occurrences
$S_1, S_2, \ldots, S_i$ deleted. 
The fact that $T$ is absorbed by the set $A$ then
follows from the stability of the semiring element
$Z(S)$.

\section{Absorption Replacement Process}
\label{sec:absorption}

In this section we explain and show correctness of the technique we use to prove convergence,
which we call the absorption replacement process.
This technique is substantially different from the process used in the prior work~ \cite{DatalogoLinear,DatalogoPolytime},
in which case the cardinality of the set $A$ which absorbs a tree can be exponentially large in $n$. 
First, we will establish some properties of stable commutative semirings which 
will be useful in our analysis. Some proofs have been moved to Appendix \ref{appendix:replacementprocess}. 
We start with a key lemma we call the \textit{Squeeze Lemma}.

\begin{restatable}[Squeeze Lemma]{lemma}{squeezelemma}
\label{lem:squeeze}
Consider a $p$-stable commutative semiring $\mathbf{S}$. Let $x$, $y$ and $z$ be elements
from the domain of $\mathbf S$ such that $x = x \oplus y \oplus z$. Then $x = x  \oplus
y$.
\end{restatable}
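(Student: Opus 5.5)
Starting from $x = x \oplus y \oplus z$, I would substitute this identity into itself repeatedly and then use Observation~\ref{obs:coef-limit} to collapse the resulting multiples of $y$ and $z$.

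Substituting once gives $x = (x\oplus y\oplus z)\oplus y \oplus z = x \oplus 2y \oplus 2z$. By induction on $k$,
\begin{align*}
x = x \oplus k y \oplus k z \qquad \text{for every } k \ge 1.
\end{align*}
Taking $k = p+1$ gives
\begin{align*}
x = x \oplus (p+1)y \oplus (p+1)z.
\end{align*}

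Now I add $y$ to both sides:
\begin{align*}
x \oplus y = x \oplus (p+2)y \oplus (p+1)z.
\end{align*}
By Observation~\ref{obs:coef-limit}, $(p+2)y = (p+1)y$. This holds because $1$ is $p$-stable, so $(p+1)\cdot 1 = (p+2)\cdot 1$; multiplying by $y$ and using distributivity gives the claim for $y$. Hence
\begin{align*}
x \oplus y = x \oplus (p+1)y \oplus (p+1)z = x,
\end{align*}
which is exactly $x = x \oplus y$.

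The main step is noticing that we can pump the multiplicity of $y$ up to $p+1$ using the hypothesis, after which adding one more copy of $y$ costs nothing. I expect no real obstacle beyond making the induction $x = x\oplus ky\oplus kz$ explicit and checking that Observation~\ref{obs:coef-limit} applies to an arbitrary element $y$. Commutativity of $\otimes$ is not even needed here; only commutativity and associativity of $\oplus$ together with stability of $1$ are used.
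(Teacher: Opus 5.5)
Your proposal is correct and follows essentially the same route as the paper's proof: apply the hypothesis $p+1$ times to get $x = x \oplus (p+1)y \oplus (p+1)z$, then use Observation~\ref{obs:coef-limit} to absorb one extra copy of $y$. Your side remark that only stability of $1$, distributivity, and the commutative-monoid structure of $\oplus$ are used (not commutativity of $\otimes$) is also accurate.
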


The Squeeze Lemma can be seen as a corollary of the fact that stable semirings are
naturally ordered (see  Theorem 5.2 of~\cite{Khamis0PSW22}). The following corollary of
the Squeeze Lemma states that the semiring value associated with an output tuple can never
transition from being nonzero to zero.

\begin{restatable}{corollary}{nonzerocorollary}
\label{cor:nonzero}
Consider a $p$-stable commutative semiring $\mathbf{S}$. If $y$ and $z$ are elements from the domain of $\bm S$ that are each nonzero, then $z \oplus y \ne 0$.
\end{restatable}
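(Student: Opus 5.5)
We argue by contradiction, applying the Squeeze Lemma (Lemma~\ref{lem:squeeze}) with $x = 0$. Suppose $y$ and $z$ are both nonzero but $z \oplus y = 0$.

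Set $x := 0$. Since $0$ is the additive identity, we get
\[
x \oplus y \oplus z = 0 \oplus y \oplus z = y \oplus z = 0 = x .
\]
So the hypothesis $x = x \oplus y \oplus z$ of the Squeeze Lemma holds. The lemma then gives $x = x \oplus y$, that is, $0 = 0 \oplus y = y$. This contradicts $y \neq 0$, so $z \oplus y \neq 0$.

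We only need the commutativity of $\oplus$ in the rewriting $y \oplus z = z \oplus y$, and it is part of the semiring axioms. The lemma also requires that $\bm S$ be $p$-stable and commutative, and the corollary assumes exactly this, so its hypotheses are met.

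The only non-routine ingredient is the Squeeze Lemma itself, which we take as given. Beyond that, the argument is a single instantiation, and no step is a real obstacle.
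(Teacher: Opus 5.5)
Your proof is correct and is the same argument the paper gives. Assume $z \oplus y = 0$, apply the Squeeze Lemma with $x = 0$ to conclude $y = 0$, and that contradicts $y \neq 0$.
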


The next lemma, which is central to our analysis, states that a monomial of degree at least $p+1$  is absorbed
by a sum of all possible products of up to $p$ multiplicands from the monomial. Let $\bm S$ be a
semiring and consider a set $\{u_1, \dots, u_r\}$ of $r$ elements of $\bm S$. For any
exponent vector $\bm m \in \N^r$, define $\bm u^{\bm m} := \bigotimes_{i=1}^r u_i^{m_i}$. We denote by $\binom{m}{m_1,\dots,m_r}$ the multinomial coefficient, where $m=\sum_{i=1}^rm_i$.

\begin{restatable}[Parallel Absorption]{lemma}{parallelabsorptionlemma}
\label{lemma:Semiring-AbsorptionLemma}
Consider a $p$-stable commutative semiring $\mathbf{S}$. Consider the product  $x = \bm u^{\bm m}$
where $\norm{\bm m}{1} > p$. Then $x$ is absorbed by
$\bigoplus_{i=0}^p \bigoplus_{\bm k : \norm{\bm k}{1}=i} \binom{i}{k_1,\dots,k_r} \bm u^{\bm k}$.
\end{restatable}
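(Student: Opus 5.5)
The plan is to recognize the absorbing sum as a truncated geometric series of a single semiring element, and then use $p$-stability together with the Squeeze Lemma (Lemma~\ref{lem:squeeze}). Let $s := u_1 \oplus u_2 \oplus \cdots \oplus u_r$.

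\textbf{Step 1 (multinomial expansion).} Since $\bm S$ is commutative, distributivity gives the multinomial theorem: for every $i \ge 0$,
\[
s^i = \bigoplus_{\bm k : \norm{\bm k}{1}=i} \binom{i}{k_1,\dots,k_r} \bm u^{\bm k},
\]
where an integer coefficient $c$ means the $c$-fold sum. This is checked by a routine induction on $i$: expand $s^{i+1} = s^i \otimes s$ and collect equal monomials, which is allowed because $\otimes$ commutes. Consequently the sum in the statement is exactly
\[
s^{(p)} = 1 \oplus s \oplus \cdots \oplus s^p .
\]

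\textbf{Step 2 (stability propagates).} We show $s^{(q)} = s^{(p)}$ for all $q \ge p$, by induction on $q$. The base case $q=p$ is trivial. For the step, use the identity $s^{(q+1)} = 1 \oplus s \otimes s^{(q)}$ and the induction hypothesis:
\[
s^{(q+1)} = 1 \oplus s \otimes s^{(q)} = 1 \oplus s \otimes s^{(p)} = s^{(p+1)} = s^{(p)},
\]
where the last equality is the $p$-stability of the element $s$.

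\textbf{Step 3 (isolate $x$ and squeeze).} Let $N = \norm{\bm m}{1} > p$. By Step 2,
\[
s^{(p)} = s^{(N)} = s^{(p)} \oplus s^{p+1} \oplus \cdots \oplus s^N .
\]
By Step 1, $s^N$ contains the summand $\binom{N}{m_1,\dots,m_r}\bm u^{\bm m}$. Its coefficient is at least $1$, so this summand equals $x \oplus (c-1)x$ for some $c \ge 1$; when $c=1$ the extra term is simply absent, i.e.\ it is the additive identity $0$. Hence
\[
s^{(p)} = s^{(p)} \oplus x \oplus z
\]
for some element $z$ collecting all remaining terms. The Squeeze Lemma then yields $s^{(p)} = s^{(p)} \oplus x$, which is exactly the statement that $x$ is absorbed by $s^{(p)}$.

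The only step needing care is Step 2: $p$-stability is stated only as $s^{(p)} = s^{(p+1)}$, and it must be extended to $s^{(N)}$ for arbitrary $N > p$. This is where the recurrence $s^{(q+1)} = 1 \oplus s\, s^{(q)}$ is used. Everything else is bookkeeping with the multinomial coefficients.
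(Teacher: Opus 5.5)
Your proof is correct and follows the paper's route. Both arguments use the multinomial theorem to rewrite the bound as $s^{(p)}$ with $s=\bigoplus_j u_j$, then use $p$-stability of $s$ to append higher powers of $s$ that contain $x$ as a summand, and finish with the Squeeze Lemma. Your Step 2 spells out the extension $s^{(N)}=s^{(p)}$ for all $N\ge p$, which the paper uses tacitly; unlike the paper's remark bounding $N\le rp$ by assuming each $m_i\le p$, your version needs no restriction on the exponents.
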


We next show that absorption has a type of transitivity property.

\begin{restatable}[Absorption Transitivity]{lemma}{absorptiontransitivity}
\label{lem:absorbfoundation}
Consider a $p$-stable commutative semiring $\mathbf{S}$. Let $s$ be an element from the domain of $\,\mathbf S$, let $A$ and $B$ be two multisets of elements from the domain of 
$\,\mathbf S$, let $a$ be an element of $A$ that is not an element of $B$, and let $C = ( A
\multisetminus \{a\} ) \uplus B$. Then if $s$ is absorbed by $\oplus(A)$ and $ a$ is
absorbed by $\oplus(B) $ then $s$ is absorbed by $\oplus(C) $.
\end{restatable}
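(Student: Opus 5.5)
The plan is a direct algebraic manipulation that never uses stability. The main care is with the max-union $\uplus$: since $\oplus$ is not idempotent in general, $\oplus(C)$ is not simply $\oplus(A\multisetminus\{a\})\oplus\oplus(B)$. So the first step is to rewrite $C$ as an ordinary multiset sum.

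\textbf{Rewriting $C$.} Set $A' = A\multisetminus\{a\}$ and let $E$ be the multiset intersection of $A'$ and $B$, meaning each element occurs $\min(m_{A'}(x),m_B(x))$ times. Set $A'' = A'\multisetminus E$. Since $\max(\alpha,\beta)=(\alpha-\min(\alpha,\beta))+\beta$ for each multiplicity, we get
\[
C = A'' + B, \qquad A = A'' + E + \{a\}, \qquad B = E + (B\multisetminus E).
\]
Here the multiset differences are exact, because $E$ is contained in both $A'$ and $B$. Writing $Y=\oplus(C)$, these identities give $Y=\oplus(A'')\oplus\oplus(B)$ and $\oplus(A)=\oplus(A'')\oplus\oplus(E)\oplus a$.

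\textbf{Chain of equalities.} First, $Y\oplus a = Y$, since $\oplus(B)$ absorbs $a$ and $Y$ contains $\oplus(B)$ as a summand. Using this and then splitting $\oplus(B)=\oplus(E)\oplus\oplus(B\multisetminus E)$, I would compute
\[
Y\oplus s = \oplus(A'')\oplus\oplus(E)\oplus a\oplus s\oplus\oplus(B\multisetminus E) = \oplus(A)\oplus s\oplus \oplus(B\multisetminus E).
\]
Next I apply the hypothesis $\oplus(A)\oplus s=\oplus(A)$, which turns the right-hand side into $\oplus(A)\oplus\oplus(B\multisetminus E)$. Expanding $\oplus(A)$ again and regrouping gives $\oplus(A'')\oplus\oplus(B)\oplus a = Y\oplus a = Y$. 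Hence $Y\oplus s = Y$, i.e.\ $s$ is absorbed by $\oplus(C)$.

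\textbf{Main obstacle.} The only delicate point is the bookkeeping that $C=A''+B$ exactly, together with the decompositions of $A$ and $B$ above. The trick that makes the argument work is to add the absorbed element $a$ for free at the start and remove it at the end, which reassembles a full copy of $\oplus(A)$ inside $Y\oplus a$. The hypothesis that $a\notin B$ is only needed so that removing $a$ from $A$ does not interact with the copies shared with $B$; it is harmless in the computation above. Neither the Squeeze Lemma nor $p$-stability appears to be needed, only commutativity and associativity of $\oplus$.
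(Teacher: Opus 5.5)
Your proof is correct and is essentially the paper's argument. The paper likewise inserts $a$ for free via its absorption by $\oplus(B)$ to get $\oplus(C)=\oplus(A\uplus B)$, and then uses that $s$, being absorbed by $\oplus(A)$, is absorbed by $\oplus(A\uplus B)$; when $a\notin B$ your leftover $B\multisetminus E$ is exactly $(A\uplus B)\multisetminus A$, so the two chains coincide. Your bookkeeping has the small bonus of not needing $a\notin B$, which the paper uses only to make $\oplus(C)=\oplus(A\uplus B)$ exact, and you are right that neither argument uses $p$-stability.
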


We are now ready to describe the \emph{absorption replacement process} that we will use to
show that a deep derivation tree $T$ is absorbed by shallower trees. We maintain a set $A$
of derivation trees that is initialized to contain only the tree $T$. Then, in each
iteration of this process, a deep tree $T'$ is removed from $A$, and replaced by a collection
$B$ of derivation trees that collectively absorb $T'$. 
To describe the process, we will describe how to produce $B$ from $T'$. 
In the analysis of this process, we need to argue that
the process terminates and that all resulting trees in $A$ are shallow.
Key building blocks here will be the Absorption Transitivity Lemma and the Parallel Absorption Lemma.
When we consider linear $\Name$ programs, we can simplify and consider walks instead of general
derivation trees.

\section{Upper Bounds for Linear Datalogo Programs}
\label{sec:linearupperbound}

The goal in this section is to establish Theorem \ref{thm:linearmain},
that linear $\Name$ programs converge in $q=(2p+1)n$ steps,
using an absorption replacement process that
we will define.

We start with a simple initial attempt at an absorption replacement process that fails, but provides an intuition about the result. Let $W$ be a walk of length $>(p+1)n$. Then, there exists a vertex $v$ which is visited $t>p+1$ times. Considering the $t-1$ cycles obtained between visits of $v$, let $\bm u$ be the semiring values of the cycles. By applying Lemma~\ref{lemma:Semiring-AbsorptionLemma}, we can create a set of walks $A$, that each visit $v$ at most $p+1$ times by using fewer cycles, such that $A$ absorbs $W$. By repeating this on vertices visited $>p+1$ times, we seemingly could reduce the length of walks to at most $n(p+1)$. Unfortunately, this fails because Lemma~\ref{lemma:Semiring-AbsorptionLemma} can increase the degree of some $u\in \bm u$, which increases the length of a walk by replacing short cycles with long ones. We therefore have to be more careful with how absorption is done.

\subsection{Preliminary Definitions}

We first define a loop tree, a convenient way to organize the loops of a walk into a hierarchy. We will use this hierarchy to repeatedly do absorption to remove loops, in a top-down manner.

\begin{definition}
Let  $W$ be a walk on a directed graph $G=(V, E)$, where $W$ has path-loop decomposition $(P,\mathcal{L})$.
The loop tree $\tau$ for $(P,\mathcal{L})$ is a tree
with the following properties:
\begin{itemize}
    \item 
The vertices of $\tau$ are the path $\{P\}$ and the loops in $\mathcal{L}$. 
\item 
The root of $\tau$ is the path $P$.
\item 
The vertices $\mathcal L_d$ at depth $d \ge 1$ in $\tau$
are the loops $L$ in $\mathcal L$ that intersect some loop (or path) $L'$ of depth $d-1$ in $\tau$, but no loop (or path) of depth less than $d-1$ in $\tau$. The parent of $L$ is an arbitrary such $L'$. $\mathcal L_0$ consists of the path $P$.
\end{itemize}
\end{definition}

Note that the loop tree is not uniquely determined by the path-loop decomposition.
We denote by ${\mathcal L}_d(v)$ the collection of loops in ${\mathcal L}_d$ that contain a vertex $v \in V$.
For some $L \in \mathcal L$, let $\tau(L)$ be the subtree of $\tau$ rooted at  node $L$.

Given a path-loop decomposition, we can often get many different walks, by ordering loops in different ways. Below, we describe a more specific way to create a walk from a loop tree.

\begin{definition}
Let $\tau$ be a loop tree, or a sub-tree of a loop-tree. Let $L'$ be the root of  $\tau$. If $L'$ is a path, let $v$ be its first vertex, and if $L'$ is a loop, let $v$ be any vertex in $L'$. The canonical walk $C(\tau, v)$ of $\tau$ starting at $v$ is recursively defined as follows:
\begin{itemize}
    \item If $\tau$ consists of one node then
    the walk $C(\tau, v)$ consists of traversing $L'$ from $v$ (in the only  way possible).
    \item Otherwise  the walk $C(\tau, v)$ starts traversing $L'$ from $v$. The first time that the walk along $L'$ visits a vertex
    $x$ contained in a child $L$ of $L'$ in $\tau$, where $L$ has not previously been traversed, $C(\tau, v)$ then recursively traverses
    $C(\tau(L), x)$, before returning
    to continue traversing $L'$ from $x$. If for a vertex $x$ there are multiple such children $L$,
    then they can be recursively traversed in arbitrary order.
\end{itemize}
\end{definition}

\begin{definition} For a path-loop decomposition $(P, {\mathcal L})$ on a graph
$G=(V, E)$,
we define the vertex level partitioning of $V$ inductively as follows:
    \begin{enumerate}
        \item $T_0$ is the set of vertices that occur in $P$.
        \item $T_{d}$ is the set of vertices that occur in
         a loop in $\mathcal L_d$ but not in a loop in $\mathcal L_{d-1}$.
    \end{enumerate}
\end{definition}

\begin{observation}
    \label{obs:GapPartition}
    Let  $L$ be a loop in $\mathcal{L}_d$.
   Then it must be the case that $L$ contains at least one vertex in $T_{d-1}$,
    and every vertex in $L$ is either in
    $ T_{d-1}$ or  $ T_{d}$.
\end{observation}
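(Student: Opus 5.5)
The plan is to unwind the definitions of the loop tree and of the vertex level partitioning, and to use induction on $d$. Two facts need to be established first. The sets $T_0, T_1, \dots$ are pairwise disjoint, and $T_{d'}$ is exactly the set of vertices that first appear in a loop (or the path) of depth $d'$. Put differently, let $U_d = \bigcup_{d' \le d} T_{d'}$. Then $U_d$ is the set of vertices covered by $\mathcal L_0 \cup \dots \cup \mathcal L_d$. This follows by induction from the definition: $T_d$ adds exactly the vertices of loops in $\mathcal L_d$ not already in $\mathcal L_{d-1}$. We also need that they are not in any earlier level, and this is where the structure of the loop tree enters.

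For the first claim, let $L \in \mathcal L_d$. By definition, $L$ intersects some loop or path $L'$ of depth $d-1$ (its parent). So $L$ shares a vertex $x$ with $L'$. The vertex $x$ lies in a loop of $\mathcal L_{d-1}$, so $x \in U_{d-1}$. It remains to show that $x \in T_{d-1}$ and not in some $T_{d''}$ with $d'' < d-1$. Suppose $x \in T_{d''}$ with $d''<d-1$. Then $x$ lies in a loop (or the path) of depth $d''$. Hence $L$ intersects a loop of depth less than $d-1$, which contradicts the defining condition that loops of depth $d$ meet no loop of depth less than $d-1$. For $d=1$ the statement is immediate, since $L$ meets $P$ and $T_0$ is the vertex set of $P$.

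For the second claim, take any vertex $y$ of $L$. Since $y$ lies in a loop of $\mathcal L_d$, either $y$ lies in a loop of $\mathcal L_{d-1}$ or $y \in T_d$ by definition. In the first case $y \in U_{d-1}$, and the same argument as above shows $y \notin T_{d''}$ for $d''<d-1$: otherwise $L$ would touch a depth-$d''$ element. Thus $y \in T_{d-1}$. To make this rigorous I will prove, by induction on $d$, the invariant that every vertex lying in some element of depth $d''$ belongs to $T_{d''} \cup T_{d''-1}$, and that $T_{d''}$ is disjoint from earlier levels.

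The main subtlety is the literal definition of $T_d$. It only excludes vertices of $\mathcal L_{d-1}$, not of all earlier levels, so disjointness across nonadjacent levels has to be derived from the "no intersection with depth $< d-1$" rule. I will handle this step first, inside the induction.
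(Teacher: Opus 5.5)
Your proof is correct, and it is the same unwinding of the definitions that the paper has in mind (the paper states this as an observation without proof); the key input in both is the rule that a depth-$d$ loop meets no element of depth $<d-1$. The induction and cross-level disjointness you set up are more than the statement needs: a vertex of $L$ that lies in an element of $\mathcal L_{d-1}$ lies, by that same rule, in no element of $\mathcal L_{d-2}$ (vacuous for $d=1$), which is literally the definition of $T_{d-1}$, and any other vertex of $L$ is in $T_d$ by definition --- though your disjointness claim is a useful by-product, since the counting in Lemma~\ref{lem:linear-bound} tacitly treats the $T_d$ as a genuine partition.
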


\begin{observation}
    Let $(P, {\mathcal L})$ be a path-loop decomposition. Then for all loops $L \in \mathcal L$
    the depth of $L$ is the same in all possible loop trees of $(P, {\mathcal L})$.
\end{observation}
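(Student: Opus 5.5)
The plan is to show that the depth of a loop is an intrinsic quantity of the pair $(P,\mathcal L)$: it equals a breadth-first-search distance in an auxiliary graph that does not depend on any choices made when building $\tau$. Let $H$ be the undirected \emph{intersection graph} whose nodes are $P$ together with the loops of $\mathcal L$, taken as distinct multiset copies. Two nodes are adjacent in $H$ if they share at least one vertex of $G$. For each $L\in\mathcal L$, let $\delta(L)$ be the shortest-path distance from $P$ to $L$ in $H$. The claim I will prove is that in every loop tree $\tau$ of $(P,\mathcal L)$ and for every $d\ge 0$, the set $\mathcal L_d$ of loops at depth $d$ equals $\{L : \delta(L)=d\}$. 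The statement follows immediately, since $\delta$ is defined without reference to $\tau$.

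First I check that $\delta$ is finite, i.e.\ that $H$ is connected. The walk $W$ uses exactly the multiset of edges of $P$ plus those of the loops in $\mathcal L$, and it is a connected walk from $s$ to $t$. Suppose some component of $H$ did not contain $P$. Then the loops in that component would share no vertex with the rest of $P\cup\bigcup\mathcal L$. Their edges would therefore be disconnected from the remaining edges of $W$, contradicting Observation~\ref{obs:walkcharactorization}.

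The main step is an induction on $d$. For $d=0$ we have $\mathcal L_0=\{P\}$ by definition, and $P$ is the unique node with $\delta=0$. Now assume $\mathcal L_{d'}=\{L:\delta(L)=d'\}$ for all $d'<d$. By definition, a loop lies in $\mathcal L_d$ exactly when it intersects some node of $\mathcal L_{d-1}$ and no node of $\mathcal L_{d''}$ for any $d''<d-1$. By the induction hypothesis, this says the loop is adjacent in $H$ to a node at distance $d-1$ and to no node at distance less than $d-1$. That is precisely the condition $\delta(L)=d$. The only freedom in building $\tau$ is which node of $\mathcal L_{d-1}$ serves as the parent. This choice changes the edges of $\tau$ but not which level a loop belongs to, so the levels are forced.

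I expect the only delicate point to be confirming that the level definition refers solely to the sets $\mathcal L_{d'}$ and never to parent pointers, so that the induction does not circularly depend on $\tau$. I also need every loop to be placed at some level, which is exactly the connectivity of $H$ established above.
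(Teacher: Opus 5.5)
Your proof is correct and matches the paper's implicit reasoning: the paper states this observation without proof, treating it as immediate because the levels $\mathcal L_d$ are defined inductively from $\mathcal L_0=\{P\}$ purely through the vertex-intersection relation, with only the choice of parent left arbitrary, and your induction on BFS distance in the intersection graph formalizes exactly that. The connectivity step is not needed for the statement as posed (if the intersection graph were disconnected, no loop tree would exist and the claim would hold vacuously), but it is harmless and shows that loop trees always exist.
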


This shows that the depth of a loop in a loop decomposition is well-defined, without any need to reference a particular loop tree.

An illustrative example walk is shown in Figure  \ref{fig:looptree}.
One possible path-loop decomposition is the path $(s, a, b, t)$, and a loop decomposition
consisting of the loop $(a, c, d)$,  the loop $(b, d, e)$,
the loop $(c, d, f)$, and the loop $(e, g,d)$. One possible loop tree is shown
in Figure  \ref{fig:looptree}. In the vertex partition for this path-loop
decomposition:
$T_0 =\{ s, a, b, t\}$, $T_1 = \{c, d, e\}$, and $T_2 =\{f, g \}$.
If one broke ties in favor of leftmost subtrees,
the canonical walk starting from $s$ for this loop tree would visit
the vertices in the following order:
$s, a, c, d, a, b, d, f, c, d, e, g, d, e, b, t $.


\begin{figure}
  \centering
    \begin{minipage}[b]{0.45\textwidth}
\centering
\scalebox{0.55}{\begin{tikzpicture}
[node distance=1cm,
every node/.style={circle, draw,minimum size=4mm, inner sep=1pt},
->,
>=Stealth,  
semithick,
scale=0.5,
box/.style={draw, rectangle, rounded corners, thick, red, inner sep=5pt}
]

\node (s) {s};
\node[right=of s] (a) {a};
\node[right=of a] (b) {b};
\node[right=of b] (t) {t};

\node[below=of a] (c) {c};
\node[right=of c] (d) {d};
\node[right=of d] (e) {e};

\node[below=of d] (g) {g};
\node[left=of g] (f) {f};

\draw[->] (s) -- (a);
\draw[->] (a) -- (b);
\draw[->] (b) -- (t);

\draw[->] (a) -- (c);
\draw[double,->] (c) -- (d);
\draw[->] (d) -- (a);

\draw[->] (f) -- (c);
\draw[->] (d) -- (f);

\draw[->] (b) -- (d);
\draw[double,->] (d) -- (e);
\draw[->] (e) -- (b);

\draw[->] (e) -- (g);
\draw[->] (g) -- (d);

\end{tikzpicture}}
\caption*{ Walk from vertex $s$ to $t$. Solid edges are traversed once, and double edges are traversed twice.}
\end{minipage}
\hfill
\begin{minipage}[b]{0.45\textwidth}
\centering
\scalebox{0.55}{\begin{tikzpicture}
[node distance=0.5cm,
every node/.style={circle, draw,minimum size=4mm, inner sep=1pt},
->,
>=Stealth,  
semithick,
scale=0.5,
box/.style={draw, rectangle, rounded corners, thick, red, inner sep=2pt}
]

\node (s) {s};
\node[right=of s] (a) {a};
\node[right=of a] (b) {b};
\node[right=of b] (t) {t};

\node[box, fit=(s)(a)(b)(t)] (box1) {};

\node[below left=of a, yshift=-0.1cm] (a2) {a};
\node[below=of a2] (c) {c};
\node[right=of c] (d1) {d};

\node[box, fit=(a2)(c)(d1)] (box2) {};

\node[below right=of b, yshift=-0.1cm] (b2) {b};
\node[below=of b2] (d2) {d};
\node[right=of d2] (e) {e};

\node[box, fit=(b2)(d2)(e)] (box3) {};

\node[below left=of d2, yshift=-0.1cm] (d3) {d};
\node[left=of d3] (c1) {c};
\node[below=of c1] (f) {f};

\node[box, fit=(d3)(c1)(f)] (box4) {};

\draw[->] (d3) -- (f);
\draw[->] (f) -- (c1);
\draw[->] (c1) -- (d3);

\node[below right=of d2, yshift=-0.1cm] (d4) {d};
\node[right=of d4] (e1) {e};
\node[below=of e1] (g) {g};
\node[box, fit=(d4)(e1)(g)] (box5) {};

\draw[->] (d4) -- (e1);
\draw[->] (e1) -- (g);
\draw[->] (g) -- (d4);

\draw[-] (box1) -- (box2);
\draw[-] (box1) -- (box3);
\draw[-] (box3) -- (box4);
\draw[-] (box3) -- (box5);

\draw[->] (s) -- (a);
\draw[->] (a) -- (b);
\draw[->] (b) -- (t);

\draw[->] (a2) -- (c);
\draw[->] (c) -- (d1);
\draw[->] (d1) -- (a2);

\draw[->] (b2) -- (d2);
\draw[->] (d2) -- (e);
\draw[->] (e) -- (b2);

\end{tikzpicture}}
\caption*{Possible loop tree, where the nodes of the tree are the red boxes.}
\end{minipage}
    \caption{Example construction of a loop tree.}
    \label{fig:looptree}
\end{figure}

\subsection{The Linear Absorption Replacement Process}

We are now ready to describe the absorption replacement process that we will use. Recall that we maintain a set of derivation trees, denoted by $A$, which is initialized with a single arbitrary tree. (Note that in the linear case, these trees are simply walks.) 
To aid in the analysis,  we will maintain a path-loop decomposition $(P, \mathcal L)$
of each walk $W \in A$. The path $P$ will be the same for all walks in
$A$, but the loop decomposition might be different for different walks in $A$. Let $s$ be the starting vertex in $P$ and $t$ the terminal vertex in $P$.

Let  $W$ be a walk in $A$, with an associated path-loop decomposition $(P, \mathcal L)$,
and let $\tau$ be a loop tree for $(P, {\mathcal L})$, that contains more than $(2p+1)n$ edges.
We now describe how to construct a set $B$ of walks (not containing $W$), with associated loop decompositions, that collectively absorb
$W$.  We break the construction into cases.

\noindent
{\bf First Case:}
If there is some loop $L$ with multiplicity at least $p+2$
in $\mathcal L$,
then $B$ will consist of $p+1$ walks $W_1, \ldots, W_{p+1}$, where $W_i$ is an arbitrary
walk on the edges of $W$ minus $i$ copies of $L$.

For the second and third cases, we can assume that no loop $L$ has multiplicity greater than $p+1$ in  $\mathcal L$. Let $d$ be the minimum such that there exists a vertex $v$
such that either
\begin{itemize}
    \item $v \in T_{d-1}$ and the cardinality of $\mathcal{L}_{d}(v)$ is at least $p+1$ (second case)
    \item $v \in T_{d}$ and the cardinality of $\mathcal{L}_{d}(v)$ is at least $p+2$ (third case)
\end{itemize}
Note that such a $d$ must exist. Otherwise every vertex would have degree at most $2p + 1$, which contradicts the assumption that the walk has more than $(2p+1)n$ edges.

\noindent
{\bf Second Case:}
In the second case assume that there is a $v \in T_{d-1}$ such that the cardinality of $\mathcal{L}_{d}(v)$ is at least $p+1$.
Let $\tau'$ be equal to $\tau$ with the subtrees rooted at each $L \in {\mathcal L}_{d}(v)$ removed.
Let  $\pi:[1, m] \rightarrow {\mathcal L}_{d}(v)$
be a function that specifies a sequence of $m$ loops $L_{\pi(1)}  \ldots L_{\pi(m)}  $ from ${\mathcal L}_{d}(v)$,
where $m$ is an integer in $[0, p]$. Note there are no other restrictions on $\pi$ other
than that it is a function, so in particular there is no limit on the number of times each
loop in ${\mathcal L}_{d}(v)$ can appear in the sequence.
For each integer $m\in[0, p]$, for each such $\pi$, there is a  walk $W(\pi) \in B$ that is defined as follows: 
 \begin{enumerate}
     \item First, $W(\pi)$ follows the canonical walk $C(\tau', s)$ until the
 vertex $v$ is reached,
 \item then iteratively for each $i \in [1, m]$, follows the canonical walk  $C(\tau(L_{\pi(i)}), v)$,
 \item and then finally follows the rest of the canonical walk $C(\tau', s)$.
 \end{enumerate}
The associated loop decomposition  for $W(\pi)$ is derived from $\mathcal L$ by deleting the loops
in the subtrees of $\tau$ rooted at loops in ${\mathcal L}_{d}(v)$, and adding the
loops in
the subtrees rooted at $L_{\pi(1)}  \ldots L_{\pi(m)}  $. More formally this is equal to
$$({\mathcal L} \multisetminus (\uplus_{L \in {\mathcal L}_{d}(v)} \uplus_{L' \in \tau(L)} \{L'\} ))
\uplus_{i=1}^m \cup_{L' \in \tau(L_{\pi(i)})} \{L'\} $$
In the boundary case that $p=0$, then $B$ is a singleton consisting of the canonical
walk on $\tau'$, with a loop decomposition derived from $\mathcal L$ by deleting
the loops in the subtree rooted at any loop in ${\mathcal L}_{d}(v)$.

\noindent
{\bf Third Case:}
In the third case assume that there is a  $v \in T_{d}$ such
that the cardinality of $\mathcal{L}_{d}(v)$ is at least $p+2$.
Let $L''$ denote an arbitrary loop in ${\mathcal L}_d(v)$.
Let $\tau'$ be equal to $\tau$ with the subtrees rooted at each $L \in {\mathcal L}_{d}(v) \multisetminus \{L''\}$ removed.
Let  $\pi:[1, m] \rightarrow {\mathcal L}_{d}(v) \multisetminus \{L''\}$
be a function that specifies a sequence of $m$ loops $L_{\pi(1)}  \ldots L_{\pi(m)}  $ from ${\mathcal L}_{d}(v) \multisetminus \{L''\}$,
where $m$ is an integer in $[0, p]$.
For each integer $m\in[0, p]$, for each such $\pi$, there is a  walk $W(\pi) \in B$ that is defined as follows: 
\begin{enumerate}
     \item First, $W(\pi)$ follows the canonical walk $C(\tau', s)$ until the
 vertex $v$ is reached,
 \item then iteratively for each $i \in [1, m]$, follows the canonical walk  $C(\tau(L_{\pi(i)}), v)$,
 \item and then finally follows the rest of the canonical walk $C(\tau', s)$.
 \end{enumerate}
The associated loop decomposition  is derived from $\mathcal L$ by deleting the loops
in the subtrees of $\tau$ rooted at loops in ${\mathcal L}_{d}(v) \multisetminus \{L''\}$, and adding
loops in
the subtrees rooted at $L_{\pi(1)}  \ldots L_{\pi(m)}  $. More formally this is equal to
$$({\mathcal L} \multisetminus ( \uplus_{L \in {\mathcal L}_{d}(v) \multisetminus \{L''\}} \uplus_{L' \in \tau(L)} \{L'\} ))
\uplus_{i=1}^m \uplus_{L' \in \tau(L_{\pi(i)})} \{L'\} $$
In the boundary case that $p=0$, then $B$ is a singleton consisting of the canonical
walk on $\tau'$, with a loop decomposition derived from $\mathcal L$ by deleting
the loops in the subtree rooted at any loop in ${\mathcal L}_{d}(v) \multisetminus \{L''\} $.

\begin{lemma}
    Collectively the walks in $B$ absorb $W$, and $W \notin B$.
\end{lemma}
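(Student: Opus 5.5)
We prove the lemma case by case. In each case we write $Z(W)$, using commutativity, as a product of a fixed ``base'' part and the yields of some removable pieces, and then apply either $p$-stability directly or the Parallel Absorption Lemma (Lemma~\ref{lemma:Semiring-AbsorptionLemma}). Throughout we use the fact that the yield of a walk depends only on its multiset of edges, since $\bm S$ is commutative.

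\textbf{First case.} Let $L$ have multiplicity $k\ge p+2$ in $\mathcal L$. Let $R$ be the multiset of edges of $W$ with $k$ copies of $L$ removed, and set $c=\bigotimes_{e\in R}x_e$ and $u=Z(L)$. Then $Z(W)=c\,u^{k}$ and $Z(W_i)=c\,u^{k-i}$ for $i\in[1,p+1]$.

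Each $W_i$ is a genuine $s$-$t$ walk by Observation~\ref{obs:walkcharactorization}, applied to the edge multiset of $W$ minus $i$ copies of $L$:
\begin{itemize}
\item Degrees stay balanced, because removing a closed loop changes no in- or out-degree difference.
\item Connectivity survives, because at least one copy of $L$ remains, so every edge of $L$ is still present.
\end{itemize}

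For absorption, $p$-stability of $u$ gives $u^{(p)}=u^{(p+1)}$. Multiplying by $c\,u^{k-p-1}$ yields
$$\bigoplus_{j=k-p-1}^{k-1} c\,u^{j}=\bigoplus_{j=k-p-1}^{k} c\,u^{j},$$
so $c\,u^k$ is absorbed by $\bigoplus_{i=1}^{p+1}Z(W_i)$. Finally, $W\notin B$ because every $W_i$ has strictly fewer edges than $W$.

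\textbf{Second case.} Let $\mathcal L_d(v)=\{L_1,\dots,L_r\}$ with $r\ge p+1$, counted with multiplicity and treating copies as distinguishable. Put $u_j=Z(\text{edges of }C(\tau(L_j),v))$ and $c=Z(C(\tau',s))$. The edges of $W$ are exactly those of $\tau'$ together with those of the subtrees $\tau(L_j)$, so
$$Z(W)=c\prod_{j=1}^r u_j=c\,\bm u^{\bm 1},\qquad \|\bm 1\|_1=r>p.$$

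Each $W(\pi)$ is a valid walk. Indeed, $C(\tau',s)$ passes through $v$, because $v\in T_{d-1}$ lies on a loop of depth $\le d-1$ that survives in $\tau'$; and each $C(\tau(L_{\pi(i)}),v)$ is a closed walk at $v$, since $v\in L_{\pi(i)}$. Its yield is $c\,\bm u^{\bm k}$, where $k_j=|\pi^{-1}(j)|$. For a fixed $\bm k$ with $\|\bm k\|_1=m$, the number of sequences $\pi$ realizing $\bm k$ is exactly $\binom{m}{k_1,\dots,k_r}$. Hence
$$\oplus(B)=c\otimes\bigoplus_{m=0}^p\bigoplus_{\|\bm k\|_1=m}\binom{m}{k_1,\dots,k_r}\bm u^{\bm k}.$$
By Lemma~\ref{lemma:Semiring-AbsorptionLemma} and distributivity (multiplying both sides of the absorption identity by $c$), this sum absorbs $Z(W)$. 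For $W\notin B$, each $W(\pi)$ uses at most $p<r$ of the subtrees, so it has fewer edges than $W$, because every loop has at least one edge. The boundary case $p=0$ is the $m=0$ term alone, and the same argument applies.

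\textbf{Third case.} The argument is identical, with $\mathcal L_d(v)\multisetminus\{L''\}$ in place of $\mathcal L_d(v)$. This set has $r\ge p+1$ elements, so $\|\bm 1\|_1>p$ again, and the retained loop $L''$ is folded into $c$. The one point that needs separate care is that $C(\tau',s)$ still visits $v$. Here $v\in T_d$, so $v$ need not lie on any loop of depth $<d$. It does lie on $L''$, however, and $L''$ is kept in $\tau'$; this is precisely why one loop is reserved.

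\textbf{Main obstacle.} I expect the hardest part to be the bookkeeping: checking that the edge multiset of $W(\pi)$ is exactly that of $\tau'$ plus the chosen subtrees, and that the declared loop decomposition is a valid path-loop decomposition, i.e.\ that the loop tree is well formed with depths preserved. For the lemma itself, only the yields and the walk validity matter, so I would keep the structural claims minimal.
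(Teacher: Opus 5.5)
Your absorption argument is correct in all three cases and follows the paper's proof. The first case uses $p$-stability of $Z(L)$, and the second and third cases feed the multinomial count of sequences $\pi$ into the Parallel Absorption Lemma. Treating the copies in $\mathcal L_d(v)$ as distinguishable, with exponent vector $\bm 1$, is a clean way to do the bookkeeping. Your walk-validity checks, including why $L''$ must be retained, supply details the paper leaves implicit.

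The gap is your justification of $W\notin B$ in the second and third cases; the paper's proof does not argue this part. Your claim that each $W(\pi)$ ``uses at most $p<r$ of the subtrees, so it has fewer edges than $W$'' is false for $p\ge 2$, because $\pi$ may select the same subtree repeatedly. If $\tau(L_1)$ is large and the other $\tau(L_j)$ are short single loops, then $\pi=(L_1,\dots,L_1)$ gives a walk longer than $W$. The paper remarks on exactly this right after the lemma. It is why termination needs the lexicographic order on $(|\mathcal L_1|,|\mathcal L_2|,\dots)$ rather than walk length.

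In the second case you can repair the argument by counting entries into $v$. Since $v\in T_{d-1}$, any loop through $v$ meets the depth-$(d-1)$ loop (or path) containing $v$, so it has depth at most $d$. Hence each $\tau(L_j)$ contributes exactly one entry into $v$, and $W(\pi)$ enters $v$ exactly $r-m\ge 1$ fewer times than $W$.

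In the third case no such count works, because depth-$(d+1)$ loops may pass through $v\in T_d$; in fact the plain walks can coincide. Take $p=2$, with $\gamma,\delta\in T_{d-1}$ and $v,a,x,b\in T_d$. Let $\mathcal L_d(v)=\{B,C,D,L''\}$ with $B=(v,\gamma,x,\delta)$, $C=(v,a,x,\delta)$ and $D=(v,\gamma,x,b)$, and let $B$ have the depth-$(d+1)$ child $A=(v,a,x,b)$. Then $A+B=C+D$ as edge multisets, so $\pi=(B,B)$ produces the same edge multiset as $W$. One can arrange the instance so that $W$ is literally the walk $W(\pi)$.

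So in the third case, $W\notin B$ holds only for walks taken together with their associated loop decompositions. The distinguishing fact is that the decomposition of $W(\pi)$ has exactly $r-m\ge 1$ fewer depth-$d$ loops. This holds because loops of depth less than $d$ are untouched, the re-inserted roots stay at depth $d$, and deeper loops cannot rise to depth $d$.
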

\begin{proof}
We break the proof into three cases, corresponding to the three cases in the construction of $B$ from $W$.
    In the first case, this follows from the $p$-stability of the loop $L$.
    In the second case,
    we can express the yield of $W$ algebraically as
    $W = \tau' \bigotimes_{i=1}^r u_i^{m_i}$
    where the $u_i$'s are the yields of the distinct subtrees $\tau_1, \ldots, \tau_r$ of the form $\tau(L)$ for $L \in \mathcal{L}_{d}(v)$.
    Then the number of distinct walks in $B$ that choose $k_i$ copies of $\tau_i$ is
    $\binom{m}{k_1,\dots,k_r}\bigotimes_{j=1}^ru_j^{k_j}$
    where $m = \sum_{i=1}^r k_i$ and $m \in [0, p]$. Thus $B$ absorbs $W$ by Lemma~\ref{lemma:Semiring-AbsorptionLemma} (Parallel Absorption Lemma).
    
    The proof for the third case is essentially identical to the proof for the second case.
\end{proof}

A subtle point here is that the walks in
$B$ may not be shorter than $W$, so neither termination
nor the fact that $A$ only contains walks shorter
than the original walk is obvious.
However, the construction of $B$ was carefully designed so
that this absorption replacement process will terminate with a set $A$ of shorter walks.
Roughly speaking, this is because
for every walk $W' \in B$, the sequence of cardinalities 
$| {\mathcal L}_1 |, | {\mathcal L}_2 |, \ldots $
is lexicographically smaller than the corresponding sequence
for the walk $W$ from which $B$ was created.
Thus this process must terminate.

The resulting walks $A$ are short because when the absorption replacement process terminates, when viewed as a loop tree, for a walk in $A$, a vertex $v\in T_d$ is visited by at most $p+1$ loops in $\mathcal{L}_d$, and $p$ loops in $\mathcal{L}_{d+1}$. Then each vertex is visited at most $2p+1$ times, giving the desired bound. This therefore establishes Theorem~\ref{thm:linearmain}.

\section{Upper Bounds for General Datalogo Programs}
\label{sec:generalupperbound}

The goal of this section is to give an overview of the proof of Theorem \ref{thm:nonlinearmain},
by defining an absorption replacement process producing a collection $A$
of shallow derivation trees that absorb an arbitrary deep derivation tree $T$,
by using the linear case as a black box.
This process starts by transforming $T$ into what we call a canonical tree $T'$.
We then explain how to construct the
multiset $B$ of derivation trees from a canonical tree by
applying the linear absorption replacement process.
We will use compatible trees throughout the section, as defined in Section~\ref{sec:prelim}, but refer to these as derivation trees.
Due to space limitations, the full proof of Theorem \ref{thm:nonlinearmain} \apref{is moved to Appendix~\ref{sec:generalupperbound:app}}{can be found in the full version of this paper}.

\subsection{ Path Tree}
\label{subsec:leftpathtree}

Consider a derivation tree $T$.  We will now define a new tree $L(T)$, which we call a \textit{path tree}, 
that is derived from $T$. This path tree allows us to view a derivation tree as a
hierarchical collection of paths/walks. This is crucial, as it will allow us to use the linear absorption from the previous section as a black box. Each node in $L(T)$ corresponds to some
ancestor-to-descendant path in $T$, but not all ancestor-to-descendant paths in $T$ will be
in $L(T)$.

\begin{restatable}[Path Tree]{definition}{pathtreedef} $L(T)$ denotes the decomposition, or \textit{path-tree}, of a derivation tree $T$, which is defined as follows: Let $P(T)$ be the left-most root-to-leaf path according to the child ordering. Set the root of $L(T)$ to correspond to $P(T)$. Consider deleting $P(T)$ from $T$ to get a collection of trees $T_1, T_2, \ldots$, where the root of each tree is a child of a node in $P(T)$. Recurse on $T_1, T_2, \ldots$ to get a collection of trees $L(T_1), L(T_2), \ldots$, and set each as a child of the root in $L(T)$.
\end{restatable}

\begin{figure}[h!]
\centering
\definecolor{ptRed}{HTML}{E27D7D}
\definecolor{ptGreen}{HTML}{7FB069}
\definecolor{ptBlue}{HTML}{6FA8DC}
\definecolor{ptYellow}{HTML}{F5C26B}
\begin{minipage}[b]{0.40\textwidth}
\centering
\scalebox{0.65}{\begin{tikzpicture}
[node distance=2.5mm and 1.5mm,
every node/.style={circle, draw,minimum size=6.5mm, inner sep=1pt},semithick]
\node[fill=ptRed] (n0) {1};
\node[fill=ptRed, below left=of n0] (n1) {2};
\node[fill=ptGreen, below right=of n0] (n2) {3};
\node[fill=ptRed, below left=of n1] (n3) {4};
\node[fill=ptGreen, below left=of n2] (n5) {6};
\node[fill=ptYellow, below right=of n2] (n6) {7};
\node[fill=ptBlue, below right=of n3] (n4) {5};
\node[fill=ptRed, below left=of n3] (n7) {8};
\draw (n0) -- (n1);
\draw (n0) -- (n2);
\draw (n1) -- (n3);
\draw (n3) -- (n4);
\draw (n2) -- (n5);
\draw (n2) -- (n6);
\draw (n3) -- (n7);
\end{tikzpicture}}
\end{minipage}
\hspace*{0.1\textwidth}
\begin{minipage}[b]{0.40\textwidth}
\centering
\scalebox{0.65}{\begin{tikzpicture}
[node distance=2.5mm and 1.5mm,
every node/.style={circle, draw,minimum size=6.5mm, inner sep=1pt},semithick]
\node[fill=ptRed, circle] (r) {1,2,4,8};
\node[fill=ptGreen, circle, below left=of r] (g) {3,6};
\node[fill=ptBlue, circle, below right=of r] (b) {5};
\node[fill=ptYellow, circle, below left=of g] (y) {7};
\draw (r) -- (g);
\draw (r) -- (b);
\draw (g) -- (y);
\end{tikzpicture}}
\end{minipage}
\caption{The left figure shows a derivation tree $T$  and the right shows its path tree $L(T)$. This example shows $4$ ancestor-to-descendant paths of $T$ and their $4$ corresponding path-nodes in $L(T)$.
A root-to-leaf path of $T$ can be thought of as a path from the root to some path-node in the decomposition $L(T)$, where only a prefix of the path of every path-node is traversed.
Here, $P(T) = (1,2,4,8)$.
}
\label{fig:tree-decomposition}
\end{figure}

Each node in $L(T)$ corresponds to a node-to-leaf path in $T$. To avoid confusion between
derivation trees and path trees, we will call a node in a derivation tree $T$ a
\textit{node} and a node in a decomposition $L(T)$ a \textit{path-node}. 
Note that a path-node corresponds to multiple nodes in the derivation
tree, which form the left-most path of some sub-tree of the derivation tree. 
When we write path-node, the reader should think of this corresponding path in $T$.
Figure~\ref{fig:tree-decomposition} illustrates this decomposition.

\subsection{Canonical Tree Transformation}
\label{sec:transform}

Here we will describe a procedure to transform any derivation tree into what we call a canonical tree. This transformation reorders different subtrees such that the tree becomes as ``left-deep'' as possible. This is useful, because the linear absorption procedure that we will use as a black-box works on paths/walks, so it is more effective on left-deep trees.

Before we state the canonical tree construction process, we need some additional definitions.

\begin{restatable}[Wedge]{definition}{wedgedef}
A wedge $W$ of a derivation tree $T$ is defined by an ancestor-to-descendant path $x, y_1,
\ldots, y_k, x$ between two nodes representing the same nonterminal $x$ from the underlying
context-free grammar $G$. Then $W$ is the subtree of $T$ rooted at the first $x$ with the subtree rooted at the second $x$ removed. The root-leaf path $x,y_1,\dots,y_k$ in $W$ is called $W$'s spine.
\end{restatable}

\begin{restatable}[Movable Wedge]{definition}{movablewedgedef}
Consider a derivation tree $T$. A wedge $W$ is considered \textit{movable} with respect to a
path-node $P$ (in $L(T)$) and a node $v$, if the nodes forming $W$'s spine
appear within descendant path-nodes of $P$, and both $P$ and $W$'s spine include $v$.
\end{restatable}

Intuitively, wedges generalize loops. We can move a movable wedge to a higher left-most path using the following \textit{wedge push}. The operation is illustrated in Figure~\ref{fig:wedgepush}.

\begin{figure}[H]
\def\xoff{-0.6}
\def\yoff{-0.65}
\begin{minipage}[c]{0.42\textwidth}
\centering
\scalebox{0.55}{
\begin{tikzpicture}[node distance=0.5cm,
every node/.style={minimum size=0mm, inner sep=0pt},
->,>=Triangle,thick]
\tikzset{
trianglenode/.style={
regular polygon,
regular polygon sides=3,
minimum size=0.7cm,
inner sep=0pt,
text centered
},
wedge/.style={draw=cyan!70!black,fill=cyan!18},
spine/.style={draw=cyan!70!black}
}

\node (n0) {};
\node (n1) at ($(n0)+(\xoff,\yoff)$) {$v$};
\node(n2) at ($(n1)+(\xoff,\yoff)$) {};
\node[trianglenode,draw] (n3) at ($(n1)+(-\xoff,\yoff)$) {$A$};

\draw(n0) -> (n1);
\draw (n1) -> (n2);
\draw (n1) -> (n3.north);

\node (m0) at ($(n0) + (0.7,-1.4)$) {$u$};
\node[trianglenode,wedge] (tB) at ($(m0)+(-\xoff,\yoff)$) {$B$};
\node (m1) at ($(m0)+(\xoff,\yoff)$) {$v$};
\node[trianglenode,wedge] (tC) at ($(m1)+(-\xoff,\yoff)$) {$C$};
\node (m2) at ($(m1)+(\xoff,\yoff)$) {$u$};
\node[trianglenode,draw] (tD) at ($(m2)+(-\xoff,\yoff)$) {$D$};
\node (wedge) at ($(tC)+(0.75,0)$) {{\color{cyan!70!black} $W$}};

\draw[spine] (m0) -> node[pos=0.4,left, xshift=-2pt,text=cyan!70!black] {$\alpha$} (m1);
\draw[spine] (m1) -> node[pos=0.4,left, xshift=-2pt,text=cyan!70!black] {$\beta$} (m2);
\draw[spine] (m0) -> (tB.north);
\draw[spine] (m1) -> (tC.north);
\draw (m2) -> (tD.north);
\draw[dashed,-] (n0) to[out=-25, in=160] (m0);

\node(A) at ($(n0)+(0,0.6)$)  {};
\node(B) at ($(A)+(-4.06,-4.4)$) {};
\node (C) at ($(A)+(4.06,-4.4)$) {};
\draw[thick,-] (A) -- (B) -- (C) -- (A);
\end{tikzpicture}
}
\end{minipage}%
\hfill
\begin{minipage}[c]{0.06\textwidth}
\centering
$\Longrightarrow$
\end{minipage}%
\hfill
\begin{minipage}[c]{0.42\textwidth}
\centering
\scalebox{0.55}{
\begin{tikzpicture}[node distance=0.5cm,
every node/.style={minimum size=0mm, inner sep=0pt},
->,>=Triangle,thick]
\tikzset{
trianglenode/.style={
regular polygon,
regular polygon sides=3,
minimum size=0.7cm,
inner sep=0pt,
text centered
},
wedge/.style={draw=cyan!70!black,fill=cyan!18},
spine/.style={draw=cyan!70!black}
}

\node (n0) {};
\node (n1) at ($(n0)+(\xoff,\yoff)$) {$v$};
\node[trianglenode,wedge] (tC) at ($(n1)+(-\xoff,\yoff)$) {$C$};
\node (n2) at ($(n1)+(\xoff,\yoff)$) {$u$};
\node[trianglenode,wedge] (tB) at ($(n2)+(-\xoff,\yoff)$) {$B$};
\node (n3) at ($(n2)+(\xoff,\yoff)$) {$v$};
\node[trianglenode,draw] (tA) at ($(n3)+(-\xoff,\yoff)$) {$A$};
\node(n4) at ($(n3)+(\xoff,\yoff)$) {};

\draw(n0) -> (n1);
\draw[spine] (n1) -> node[pos=0.4,left, xshift=-2pt,text=cyan!70!black] {$\beta$} (n2);
\draw[spine] (n2) -> node[pos=0.4,left, xshift=-2pt,text=cyan!70!black] {$\alpha$} (n3);
\draw (n3) -> (n4);

\node (m0) at ($(n0) + (0.7,-1.4)$) {$u$};
\node[trianglenode,draw] (tD) at ($(m0)+(-\xoff,\yoff)$) {$D$};

\draw[spine] (n1) -> (tC.north);
\draw[spine] (n2) -> (tB.north);
\draw (n3) -> (tA.north);
\draw[dashed,-] (n0) to[out=-25, in=160] (m0);
\draw (m0) -> (tD.north);

\node(A) at ($(n0)+(0,0.6)$)  {};
\node(B) at ($(A)+(-4.06,-4.4)$) {};
\node (C) at ($(A)+(4.06,-4.4)$) {};
\draw[thick,-] (A) -- (B) -- (C) -- (A);

\end{tikzpicture}
}
\end{minipage}
\caption{Illustration of a wedge push, where a wedge whose spine consists of subpaths $\alpha$ and $\beta$ is pushed to the left black path (or path-node) $P$. The triangles represent subtrees hanging off the path or the spine. Note that the spine is a descendant of $P$ and may intersect multiple path-nodes.}
\label{fig:wedgepush}
\end{figure}

\begin{restatable}[Wedge Push]{definition}{wedgepushdef}
Given a derivation tree $T$, consider a movable wedge $W$ with respect to a path-node $P \in
L(T)$ and a node $v$. Suppose $W$'s spine begins with $u$. A wedge push of $W$ into
$P$ is defined as follows: If $W$'s spine consists of two subpaths $\alpha$ and $\beta$
split by $v$, then an occurrence of $v$ on $P$ is replaced with a subpath starting with
$\beta$ and ending with $\alpha$. Here, all subtrees hanging off from the spine also move to
$P$, except the one hanging off the spine on its bottom end.
\end{restatable}

Note that this is still well defined if $v = u$; then we do not change the permutation of the spine. A wedge push moves around parts of the tree such that the semiring value remains the same, while the tree itself becomes more left-deep. 

Next, we define the {\bf Canonical Tree Transformation} procedure: it takes as input a derivation tree $T$, and then while there exists a path-node $P \in L(T)$ with a movable wedge $W$ w.r.t. node $v$, it performs a wedge-push on $W$ to $P$ w.r.t. $v$.


\begin{restatable}{lemma}{cantreetransformterminate}
The Canonical Tree Transformation Procedure terminates.
\end{restatable}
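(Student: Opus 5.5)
We will prove termination with a potential function. The potential must take values in a well-ordered set and must strictly decrease with every wedge push. The basic invariant is that a wedge push preserves the multiset of nodes of $T$ and the parent--child relations within each moved block. It only rearranges where blocks hang. So the number of nodes $N$ of $T$ is fixed throughout, and there are only finitely many derivation trees with $N$ nodes over the finite grammar. Hence it suffices to show that the procedure never revisits a tree, which we obtain from a strictly decreasing potential.

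The natural potential measures ``left-deepness'', which the push is designed to increase. For a tree $T$ with path tree $L(T)$, we assign to every node $x$ of $T$ the depth, in $L(T)$, of the path-node containing $x$. Our potential $\Phi(T)$ is the vector $(c_0, c_1, c_2, \dots)$, where $c_j$ counts the nodes lying in path-nodes of depth $j$. We compare these vectors lexicographically, with larger being better, and we want $\Phi$ to strictly increase. Because $\sum_j c_j = N$ is fixed and each $c_j \le N$, there are finitely many possible vectors, so a strictly increasing sequence must be finite.

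The key step is to show that a wedge push of $W$ into $P$ at $v$ strictly increases $\Phi$. Let $P$ have depth $j$. Before the push, the spine nodes of $W$ other than the one identified with $v$ lie in descendant path-nodes of $P$. At least one of them lies strictly below depth $j$: the spine has length at least one and ends at a node distinct from the occurrence of $v$ on $P$. After the push, the subpath $\beta\alpha$ is inserted into $P$, so all these spine nodes now lie in the path-node $P$ at depth $j$. We must also check that no node moves to a path-node of depth less than its old depth minus what is compensated, and more importantly that no node at depth $\le j$ gets pushed deeper. There are three groups to consider.
\begin{itemize}
\item Nodes on $P$ below the occurrence of $v$ stay on $P$, now further down the same path.
\item Nodes on path-nodes of depth $< j$ are untouched.
\item The subtree formerly hanging at the bottom $u$ of the spine (subtree $D$) reattaches where the wedge was removed. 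Its leftmost path and its hanging subtrees keep their relative structure, and they end up at a depth no greater than before, since they lose the wedge above them.
\end{itemize}
The subtrees hanging off the spine (such as $B$ and $C$) become children of $P$. They therefore sit at depth $j+1$, which is at most their old depth, because the spine was at depth $\ge j$ and strictly deeper in at least one position. With these three facts, the counts $c_0,\dots,c_{j-1}$ are unchanged and $c_j$ strictly increases, so $\Phi$ strictly increases lexicographically.

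The main obstacle is the careful bookkeeping around the removed wedge and the reattached $D$. The leftmost-path structure of the ancestors of the spine's top $u$ may change when the wedge is excised: the path-node that passed through $u$ may now continue into $D$'s left spine. We have to verify that this can only merge nodes into shallower or equal path-nodes and never deepen any node of depth $\le j$. If this direct monotonicity argument fails in an edge case, the fallback is a coarser potential: the sum over nodes of their path-tree depth, weighted by $(N+1)^{-\text{depth}}$. This potential is more tolerant of small depth increases at large depths, and we would show that the gain at depth $j$ dominates them.
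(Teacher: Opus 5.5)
Your argument is the paper's own. The paper also uses the vector of per-level node counts of $L(T)$ under lexicographic order and asserts that it strictly increases with each push ``since some wedge is moved higher''; your group-by-group check fills in that one-line justification. Your justification for the hanging subtrees is muddled, though: the flagged step closes once you note two things. First, a node's depth in $L(T)$ equals the number of non-leftmost child edges on its root path. Second, by movability \emph{every} spine node lies in a proper descendant path-node of $P$, hence at depth $\ge j+1$, not merely ``at least one''. So no node's depth increases, while all spine nodes drop to $j$. Your fallback is unnecessary: in its natural form $\sum_j c_j(N+1)^{-j}$ it is the same lexicographic order.
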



The proof follows by showing that each wedge-push operation strictly increases a vector in lexicographic order. Here, the $i$-th coordinate corresponds to the total number of nodes in $T$ that belong to path-nodes at level $i$ in $L(T)$. We also note that the transformation procedure does not modify the yield of the tree.

The following lemma states some structural properties of a canonical tree. The intuition here is that if these properties do not hold, we can do another wedge push.

\begin{restatable}{lemma}{PostTransProcedureProps}
\label{lemma:PostTransProcedureProps}
Let $T$ be a canonical tree produced by  the canonical tree transformation procedure.
Consider a root to leaf path $\rho$ of path-nodes in the path tree $L(T)$. Fix a node $v$ in
$T$ that appears in some path-node of $\rho$. Let $P_1, P_2, \ldots$ be the path-nodes in
$\rho$ containing $v$, ordered by increasing depth in $L(T)$. We have the
following.
\begin{enumerate}
\item Every path-node $P_i$ where $i\geq 2$ has at most $1$ copy of $v$ in the path
corresponding to the node in $T$.  The path corresponding to $P_1$ can have more than one
copy of $v$.
\item Fix any path-node $P_i$ where $i \geq 2$.  Consider the path in $T$ corresponding to
$P_i$. The set of nodes before $v$ in the path is disjoint from the
set of nodes occurring after $v$ in the path.
\end{enumerate}
\end{restatable}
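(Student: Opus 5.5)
The proof goes by contradiction in both parts. Since $T$ is canonical, no path-node of $L(T)$ has a movable wedge with respect to any node. So whenever a property fails, I will exhibit a path-node $P$ and a node $x$ for which some wedge is movable. The main bookkeeping is that ``$P_i$ contains $v$'' is read as ``the path of $T$ corresponding to $P_i$ contains a node labeled by nonterminal $v$'', and that descendant path-nodes of $P$ in $L(T)$ correspond to paths lying in subtrees of $T$ hanging off $P$'s path.

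\emph{Part (1).} Suppose some $P_i$ with $i\ge 2$ contains two occurrences $v^{(1)}, v^{(2)}$ of $v$, with $v^{(1)}$ an ancestor of $v^{(2)}$ along the path of $P_i$. The segment between them defines a wedge $W$ whose spine is the subpath of $P_i$ from $v^{(1)}$ down to just above $v^{(2)}$. This spine lies entirely within $P_i$, and $P_i$ is a descendant path-node of $P_1$ because $i\ge2$ and both lie on the root-to-leaf path $\rho$. The spine contains $v$, namely its top node $v^{(1)}$, and $P_1$ also contains $v$. By the definition of Movable Wedge, $W$ is movable with respect to $P_1$ and the node $v$, so a wedge push is possible. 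This contradicts termination of the Canonical Tree Transformation. For $P_1$ there is no strict ancestor on $\rho$ that contains $v$, so nothing prevents multiple copies there.

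\emph{Part (2).} Fix $P_i$ with $i\ge2$. By part (1), $P_i$ contains $v$ exactly once. Suppose some nonterminal $x$ appears both before $v$ (at node $x^{(1)}$) and after $v$ (at node $x^{(2)}$) on the path of $P_i$. Then $x^{(1)},\dots,v,\dots,x^{(2)}$ is an ancestor-to-descendant path between two copies of $x$, and it defines a wedge $W$ whose spine contains the node $v$. The spine lies in $P_i$, which is a descendant path-node of $P_1$, and $P_1$ contains $v$. So $W$ is movable with respect to $P_1$ and $v$: we split the spine at $v$ into $\alpha$ (from $x^{(1)}$ to $v$) and $\beta$, and push. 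This again contradicts canonicity.

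The main obstacle is checking that the spine ``appears within descendant path-nodes of $P$'' is satisfied in the strict sense the definition intends. In particular, the definition may require the spine to lie in \emph{strict} descendants of $P$, and the two copies of $x$, or of $v$, are on the same path-node $P_i$. I would verify that $P_i\neq P_1$ suffices, which holds since $i\ge 2$. I would also check that the wedge push is a genuine change, i.e.\ the lexicographic potential from the termination lemma strictly increases, so that the existence of such a wedge really contradicts the procedure having stopped. The boundary subtlety is when $v$ is the top node $x^{(1)}$ of the spine, i.e.\ the case $v=u$. The definition of Wedge Push explicitly allows this, so the argument covers it.
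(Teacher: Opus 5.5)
Your proposal is correct and follows essentially the same argument as the paper. In both parts you exhibit a wedge lying on $P_i$ whose spine contains $v$ and which is therefore movable with respect to $P_1$, contradicting that the Canonical Tree Transformation halted. Your worry about the lexicographic potential is unnecessary: the loop guard of the procedure is simply the existence of a movable wedge, so exhibiting one already gives the contradiction.
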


\subsection{Application of the Linear Absorption Replacement Process to Derivation Trees}
\label{subsec:reductiontolinear}

Given our assumptions about the structure of derivation trees, our linear
absorption replacement process $\textsf{LinAbsorb}(P)$ can be applied to any
ancestor-to-descendant path $P$ of a derivation tree $T$. This creates a collection $A$ of
paths, derived from $P$, which absorb $P$. We will explain how to use
$A$ and $T$ to create a collection of trees $A'$ that absorb $T$.
Let $(X, Y)$ be a directed edge in $P$, where $X$ is the parent of $Y$ in $T$.
Let $T_{X, 1}, \ldots, T_{X, k}$ be the subtrees of the children of $X$ in $T$, other than
$Y$. We will think of the subtrees  $T_{X, 1}, \ldots, T_{X, k}$ as being
associated with the edge $(X, Y)$ in $T$; when \textsf{LinAbsorb} duplicates $(X, Y)$, the
subtrees  $T_{X, 1}, \ldots, T_{X, k}$ are also duplicated, with each copy of $X$
getting duplicate copies of the subtrees as children. Similarly when \textsf{LinAbsorb} deletes $(X, Y)$ the
subtrees $T_{X, 1}, \ldots, T_{X, k}$  are deleted.
Note for completeness that the trees $A'$ produced are all compatible. 
See Figure~\ref{fig:path-subtrees} for
an illustration, where for simplicity we assume $k$ is universally 1.

\begin{figure}[H]
\begin{minipage}{\textwidth}
\centering
\begin{minipage}[c]{0.41\textwidth}
\centering
\scalebox{0.55}{
\begin{tikzpicture}
[node distance=0.55cm,
        every node/.style={circle, draw,minimum size=5mm, inner sep=1pt},
        ->,
        >=Stealth,  
        semithick,
        scale=0.5,
        box/.style={draw, rectangle, rounded corners, thick, red, inner sep=5pt}
        ]
     \tikzset{
    trianglenode/.style={
      regular polygon,
      regular polygon sides=3,
      minimum size=1.2cm,
      inner sep=0pt,
      text centered
    }}
\node (a) {$X$};
\node[right=of a] (b) {$Y$};
\node[right=of b] (c) {$X$};
\node[right=of c] (d) {$Y$};
\node[right=of d] (e) {$t$};

\node[trianglenode,draw=cyan, fill=cyan!30,below=of a] (ta) {$T_X$};
\node[trianglenode,draw=red, fill=red!30,below=of b] (tb) {$T_Y$};
\node[trianglenode,draw=olive, fill=olive!30,below=of c] (tc) {$U_X$};
\node[trianglenode,draw=orange, fill=orange!30,below=of d] (td) {$U_Y$};

\draw[cyan,->] (a) -- (b);
\draw[red,->] (b) -- (c);
\draw[olive,->] (c) -- (d);
\draw[orange,->] (d) -- (e);

\draw[-] (a) -- (ta);
\draw[-] (b) -- (tb);
\draw[-] (c) -- (tc);
\draw[-] (d) -- (td);

\end{tikzpicture}
}
\end{minipage}%
\hfill
\begin{minipage}[c]{0.05\textwidth}
\centering
$\Longrightarrow$
\end{minipage}%
\hfill
\begin{minipage}[c]{0.53\textwidth}
\centering
\scalebox{0.55}{
\begin{tikzpicture}
[node distance=0.55cm,
        every node/.style={circle, draw,minimum size=5mm, inner sep=1pt},
        ->,
        >=Stealth,  
        semithick,
        scale=0.5,
        box/.style={draw, rectangle, rounded corners, thick, red, inner sep=5pt}
        ]
     \tikzset{
    trianglenode/.style={
      regular polygon,
      regular polygon sides=3,
      minimum size=1.2cm,
      inner sep=0pt,
      text centered
    }}
\node (a) {$X$};
\node[right=of a] (b) {$Y$};
\node[right=of b] (c) {$X$};
\node[right=of c] (d) {$Y$};
\node[right=of d] (e) {$X$};
\node[right=of e] (f) {$Y$};
\node[right=of f] (g) {$t$};

\node[trianglenode,draw=cyan, fill=cyan!30,below=of a] (ta) {$T_X$};
\node[trianglenode,draw=red, fill=red!30,below=of b] (tb) {$T_Y$};
\node[trianglenode,draw=cyan, fill=cyan!30,below=of c] (tc) {$T_X$};
\node[trianglenode,draw=red, fill=red!30,below=of d] (td) {$T_Y$};
\node[trianglenode,draw=olive, fill=olive!30,below=of e] (te) {$U_X$};
\node[trianglenode,draw=orange, fill=orange!30,below=of f] (tf) {$U_Y$};

\draw[cyan,->] (a) -- (b);
\draw[red,->] (b) -- (c);
\draw[cyan,->] (c) -- (d);
\draw[red,->] (d) -- (e);
\draw[olive,->] (e) -- (f);
\draw[orange,->] (f) -- (g);

\draw[-] (a) -- (ta);
\draw[-] (b) -- (tb);
\draw[-] (c) -- (tc);
\draw[-] (d) -- (td);
\draw[-] (e) -- (te);
\draw[-] (f) -- (tf);

\end{tikzpicture}
}
\end{minipage}
\captionof{figure}{Illustration of duplication (and possible removal) of subtrees when \textsf{LinAbsorb} is applied to a path. The left shows a node-to-leaf path $P$ in $T$ and subtrees, shown as triangles.
Note that $P$ ends with the sink $t$.
While $T_X$ is a subtree of $X$, we treat it as being associated with edge $(X,Y)$. Note that two copies of the same edge can have different subtrees. The right shows a new path/walk created by \textsf{LinAbsorb} along with duplicated subtrees.
}
\label{fig:path-subtrees}
\end{minipage}
\end{figure}

\subsection{Absorption Replacement Process}
\label{subsec:nonlinearreplacementprocess}

We here describe one step of our absorption replacement process (Algorithm~\ref{alg:absorb}), which produces a collection $B$ of trees from a tree $T$. 
Each step applies the linear absorption replacement process to a path-node in $L(T)$. This process makes the derivation trees shallower, and  shallower trees correspond to shorter proofs for the same output tuple. In Algorithm~\ref{alg:absorb}, $\text{length}(P)$ is the number of edges in $P$.

\subsection{Analysis of Replacement Process}
\label{sec:nonlinearanalysis}

\begin{wrapfigure}[9]{r}{0.50\textwidth}
\vspace{-2.5\baselineskip}
\begin{minipage}{0.50\textwidth}
\small
\captionof{algorithm}{Absorption Replacement}\label{alg:absorb}
\begin{algorithmic}[1]
    \REQUIRE Input derivation tree $T$
    \STATE \textbf{Procedure} Absorb($T$)
    \STATE $T \gets$ Canonical Transformation of $T$
    \STATE \textbf{if } $\exists P \in L(T)$ with $\text{length}(P) > n(2p+1)$
    \STATE \hspace{1em} $P \gets$ a lowest-depth such path in $L(T)$
    \STATE \hspace{1em} $P_1,\dots,P_k \gets \text{\textsf{LinAbsorb}}(P)$
    \STATE \hspace{1em} $T_1,\dots,T_k \gets$ trees from $T$ using $P_1,\dots,P_k$
    \STATE \hspace{1em} $B \gets \{T_1,\dots,T_k\}$
    \STATE \textbf{return} $B$
\end{algorithmic}
\end{minipage}
\end{wrapfigure}

To establish Theorem~\ref{thm:nonlinearmain}, we need to prove that the absorption replacement process terminates, and that the trees in the resulting set are not too deep.

\begin{restatable}{lemma}{GeneralAbsReplacementTermination}
The absorption replacement process terminates.
\end{restatable}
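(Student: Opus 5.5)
We prove termination by exhibiting a well-founded potential on derivation trees that strictly decreases from any tree $T$ to every tree in the multiset $B$ produced from it by Algorithm~\ref{alg:absorb}. The process then terminates by König's lemma: the replacement tree (each tree's children are the members of $B$) is finitely branching, so if it had no infinite branch it is finite. This mirrors the linear case, where termination came from a lexicographic decrease of $(|\mathcal L_1|,|\mathcal L_2|,\dots)$. Here we lift that measure through the path tree.

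\textbf{Step 1: the measure.} For a canonical tree $T$, list the path-nodes of $L(T)$ by depth in $L(T)$. Attach to each path-node the linear measure of its path, namely the vector $(|\mathcal L_1|,|\mathcal L_2|,\dots)$ of the path-loop decomposition maintained by $\textsf{LinAbsorb}$. Let $\mu(T)$ be, for each depth $i$ of $L(T)$, the multiset of these measures, ordered by the multiset (Dershowitz--Manna) extension of the lexicographic order. We then compare $\mu$ lexicographically across depths, shallower depths first. The linear vectors have bounded length, at most $n$ levels, so the lexicographic order on them is a well-order. Multiset extensions of well-orders are well-founded, and so is the lexicographic combination over bounded depth; bounding the depth of $L(T)$ requires a separate check. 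Hence $\mu$ is well-founded.

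\textbf{Step 2: the canonical transformation is harmless.} By the termination lemma for the canonical transformation, each wedge push lex-increases the vector of node counts per level of $L(T)$ and preserves the yield. This increase points the wrong way, so $\mu$ should be taken after canonicalization. One has to argue that $\textsf{LinAbsorb}$ output, once rechecked into canonical form, does not reintroduce long paths at depths shallower than the chosen $P$. Here we use Lemma~\ref{lemma:PostTransProcedureProps}: in a canonical tree, deeper path-nodes contain each nonterminal at most once and are split disjointly around it. Pushes therefore only move wedges into the shallowest path-node containing the repeated vertex, and these are exactly the paths already of length at most $n(2p+1)$ at depth below $P$, or $P$ itself.

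\textbf{Step 3: the decrease.} Since $P$ is a lowest-depth long path, all path-nodes at shallower depth are unchanged by the step. At $P$'s depth, $P$ is replaced by a path $P_j$ whose linear measure is lex-smaller, by the linear termination argument. Deeper path-nodes may be duplicated arbitrarily when subtrees hanging off duplicated edges are copied, but they sit at larger depth and do not matter lexicographically. If we also compare the multiset extension at each depth, duplication of deeper material is dominated by the strict decrease at $P$'s depth.

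\textbf{Main obstacle.} The hardest part is the interaction in Step 2: showing that re-canonicalizing a tree $T_j$ does not increase the measure at depths up to $P$'s depth. A push could merge a long wedge into a shallower path-node. The first approach to try is to show that any wedge pushed after $\textsf{LinAbsorb}$ was already movable in $T$ only if it lies within the loops of $P$, contradicting canonicality of $T$. Failing that, fold the canonicalization into the measure by ordering first by the node-count vector reversed, and check compatibility.
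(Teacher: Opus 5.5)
Your Step~2 is a genuine gap, and it is the crux of the argument. Algorithm~\ref{alg:absorb} re-canonicalizes at the start of every call. A wedge push inserts a spine into some path-node, for instance an ancestor of $P$, which lengthens that path and raises its loop-count vector. So the decrease you get in Step~3 from $T$ to $T_j$ says nothing about $T$ versus the re-canonicalized $T_j$. Your observation that pushes land in paths already of length at most $n(2p+1)$ does not keep them short; it is exactly how a shallower path could become long again.

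Neither remedy you suggest works.
\begin{itemize}
\item The claim that \textsf{LinAbsorb} creates no new movable wedge is false in general. Suppose $P$ traverses the loop $(z,w,z)$ before $(z,x,z)$, padded with further loops at $z$ so that $P$ is long. Suppose the edge $(z,w)$ carries a hanging subtree with leftmost path $c,v,x,\dots$, where $v$ occurs in an ancestor path-node $Q$ but not in $P$. Then $T$ can be arranged to be canonical. But for $p\ge 2$ the second case can output a walk that traverses $(z,x,z)$ first, and the resulting tree contains the ancestor--descendant path $x,z,c,v,x$. This is a wedge movable into $Q$, and pushing it lengthens $Q$, which lies above $P$.
\item Using the node-count vector as a primary key also fails, because the two operations move it in opposite directions. \textsf{LinAbsorb} strictly lowers it: $P$'s level loses nodes since $\mathrm{length}(P_j)\le(2p+1)n<\mathrm{length}(P)$, and shallower levels are untouched. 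Wedge pushes strictly raise it.
\end{itemize}

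The paper sidesteps the issue. Its argument, like the induction in the proof of Lemma~\ref{lem:non-linear-key}, starts at $T_0$ (the transformed input) and afterwards applies only \textsf{LinAbsorb}, so canonicalization is effectively one-time preprocessing. Termination then follows from three facts about \textsf{LinAbsorb} applied to a lowest-depth long $P$:
\begin{itemize}
\item it changes only $P$ and the subtrees hanging off it;
\item it does not increase the depth of $L(T)$;
\item it returns paths of length at most $(2p+1)n$, so $P$ is never selected again.
\end{itemize}
The second fact is the ``separate check'' you left open, and without it your lexicographic combination over depths is not well-founded. Together they make the vector counting long path-nodes at each depth decrease lexicographically over a bounded number of coordinates.

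Under that reading your Steps~1 and~3 would go through, but the loop-count vectors are unnecessary, since the length drop of $P$ already gives the strict decrease. They also force you to fix a decomposition for every never-processed path-node, consistent with the one \textsf{LinAbsorb} starts from.
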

\begin{proof}
The process recursively applies \textsf{LinAbsorb} on path-nodes in order of increasing depth. Each call only modifies subtrees hanging off the input path (without increasing the depth of $L(T)$) and outputs a path of length $\leq (2p+1)n$.
\end{proof}

While Lemma~\ref{lemma:PostTransProcedureProps} describes canonical trees \textit{before} the absorption replacement process, the next lemma describes their structure \textit{after}, helping us bound the resulting tree depth. For intuition, the reader can observe the similarities between the properties listed in these two lemmas.

\begin{restatable}{lemma}{KeyPropertiesAfterAbsorptionGeneral}
\label{lem:non-linear-key}
Let $A$ be the collection of derivation trees obtained after the absorption replacement process (which started with $A = \{T_0\}$). Let $T \in A$ and let $Q$ be a root-leaf path in $T$.
We denote by $\beta(Q)$ the unique path in the path tree $L(T)$ corresponding to $Q$ (this goes from the root path-node of $L(T)$ to some other path-node in $L(T)$).
For a path-node $P \in \beta(Q)$, we will denote by $s(P)$ the maximal subpath (prefix) of $P$ that intersects with $Q$.
Let $v$ be an arbitrary node in $T$. Then:
\begin{enumerate}
\item If there are path-nodes $P_1$ and $P_2$ in $\beta(Q)$, where $P_1$ is above $P_2$ in $\beta(Q)$, such that $v$ appears in both the walk $P_1$ and prefix $s(P_2)$, then (i) $P_2$ has exactly one occurrence of $v$; (ii) all nodes before $v$ in $P_2$ are disjoint from all nodes after $v$ in $P_2$; and (iii) there is no third descendant path-node $P_3$ of $P_2$ in $\beta(Q)$ such that $P_3$ includes $v$.
\item The first path-node $P$ in $\beta(Q)$ containing $v$ (if it exists) has at most $2p+1$ appearances of $v$.
\end{enumerate}
\end{restatable}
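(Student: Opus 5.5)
The plan is to prove the two properties as invariants. They hold for every tree produced by Algorithm~\ref{alg:absorb} once the process has finished acting on it, that is, once no path-node of $L(T)$ has length exceeding $(2p+1)n$. The argument is an induction over the steps of the absorption replacement process, ordered by the depth of the path-node processed. Each call first applies the Canonical Tree Transformation. At that moment Lemma~\ref{lemma:PostTransProcedureProps} supplies, for every root-to-leaf chain of path-nodes containing $v$, exactly the structure in item~1: every path-node containing $v$ except the first has a single copy of $v$, and the parts before and after it are disjoint. So the work is to show that a step of \textsf{LinAbsorb} on a lowest-depth long path-node $P$ preserves this structure for the path-nodes strictly below $P$ and enforces item~2 on $P$.

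\emph{Item 2.} When \textsf{LinAbsorb} terminates on $P$, the bound from the linear analysis applies. In the loop-tree view, a vertex in $T_d$ is covered by at most $p+1$ loops of $\mathcal L_d$ and at most $p$ loops of $\mathcal L_{d+1}$, so each nonterminal occurs at most $2p+1$ times on every output path $P_i$. Any later step acts only on path-nodes of larger depth, or on a path-node after re-canonicalization. I will argue that a wedge push can only lengthen a path-node whose number of occurrences of $v$ is already controlled, because a spine moved into $P$ contains $v$ exactly once, by item~1(i). Then a re-run of \textsf{LinAbsorb} on $P$ restores the $2p+1$ bound. Since the process terminates, the final trees satisfy item~2 for the first path-node containing $v$.

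\emph{Item 1.} Take $P_1$ above $P_2$ in $\beta(Q)$, with $v$ in $P_1$ and in $s(P_2)$. If after the final canonicalization some $P_2$ below $P_1$ contained $v$ twice, or had a repeated nonterminal straddling $v$, there would be a movable wedge with respect to $P_1$ and $v$. This would contradict canonicity, exactly as in the proof of Lemma~\ref{lemma:PostTransProcedureProps}. For (iii), suppose a third path-node $P_3$ below $P_2$ also contains $v$. Then the spine from the occurrence of $v$ on $P_2$ down to $v$ on $P_3$ forms a wedge. Because $v$ lies in the prefix $s(P_2)$ intersecting $Q$, this wedge's spine lies in descendants of $P_1$ and passes through $v$, so it is movable to $P_1$, again a contradiction.

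The central point is that \textsf{LinAbsorb} on $P$ only duplicates or deletes edges of $P$ together with their attached subtrees (Section~\ref{subsec:reductiontolinear}), and it never changes the internal sequence of any descendant path-node. Descendant path-nodes are therefore copied intact, and their single-occurrence and disjointness structure is inherited.

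\emph{Main obstacle.} I expect the hardest step to be the interaction between steps. Duplicating subtrees along $P$ can place copies of a descendant path-node $P_2$ under new occurrences of $v$ on $P$, and it may create a fresh movable wedge. So I must check that the canonicalization at the start of each call cannot re-lengthen an already processed, shallower path-node beyond $(2p+1)n$. My approach is to show that after processing in increasing depth, no movable wedge with respect to a processed path-node arises, because the nonterminals on its descendants already satisfy (ii) and (iii) relative to it. If this fails, the fallback is a lexicographic potential argument. It would combine the termination vector from the canonical transformation with the lexicographic decrease of $(|\mathcal L_1|,|\mathcal L_2|,\dots)$ from the linear process, and would show that the final fixed point satisfies both items simultaneously.
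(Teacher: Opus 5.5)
\textbf{Gap 1: item 1 assumes the final tree is canonical.} Your argument for item 1 needs the final tree to be canonical. But the final tree is an output of \textsf{LinAbsorb}, and \textsf{LinAbsorb} does not preserve canonicity. The walks $W(\pi)$ insert the chosen loop subtrees at the first visit of the branching vertex, so a loop through some nonterminal $x$ can be moved in front of an occurrence of $u$ on $P$. Suppose a subtree containing $v$, and below it $x$, hangs at that $u$, and some ancestor path-node contains $v$. Then $x,\dots,u,\dots,v,\dots,x$ is a new movable wedge with respect to that ancestor.

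So you must either re-canonicalize after the last step, or show directly that item 1 survives \textsf{LinAbsorb}. Re-canonicalizing (your reading of Algorithm~\ref{alg:absorb}) pushes wedges into shallower path-nodes, including ones \textsf{LinAbsorb} has already shortened. Each push with respect to $v$ replaces one occurrence of $v$ by $v,\dots,u,\dots,v$. Hence the number of copies of $v$ on such a path-node goes up; it is not ``already controlled.'' A rerun of \textsf{LinAbsorb} on that path-node is not guaranteed, since it is triggered only by length. Termination of the alternation between canonicalization and \textsf{LinAbsorb} is exactly the obstacle you leave open, and the fallback potential is never constructed. On your route, item 2 and termination are therefore unproved.

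\textbf{Missing idea: item 1 as an invariant of \textsf{LinAbsorb} steps.} Canonicalize only once, then prove item 1 by induction along the replacement chain $T_0,\dots,T_k$. The base case is Lemma~\ref{lemma:PostTransProcedureProps}; your wedge argument for (iii) in a canonical tree is correct there. Your remark that descendant path-nodes are copied intact handles two cases:
\begin{itemize}
\item the processed node $P$ lies above $P_2$ (for $P=P_1$, use its precursor, which still contains $v$);
\item $P$ lies below $P_2$ (the subtree hanging off $P_2$ contained no $v$ before, and \textsf{LinAbsorb} creates no new nonterminals).
\end{itemize}
What you miss is the key case $P=P_2$, where the internal sequence of $P_2$ does change. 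There you need Observation~\ref{obs:NoMoveAcrossV}: a vertex visited once, with disjoint before/after vertex sets, keeps this property when loops are deleted or duplicated. Consequently, the edges after $v$ on the new $P_2$, together with their attached subtrees, are copies of edges after $v$ on the precursor. A subtree hanging below $v$ that contains $v$ would then already violate (iii) in $T_{k-1}$. Without re-canonicalization, \textsf{LinAbsorb} never revisits shallower path-nodes, and item 2 follows from Observation~\ref{obs:lowdegree}.
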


\begin{figure}[H]
\centering
\begin{minipage}[c]{0.62\textwidth}
 
We show an example of properties given by the above lemma.
Here we show an example derivation tree $T$, which consists of $3$ path-nodes $P_1, P_2, P_3$. A root-leaf path $Q$ is highlighted in blue, and in this case $\beta(Q)$ consists of all three path-nodes. $s(P_1), s(P_2), s(P_3)$ are the prefixes of the path-nodes which use the dotted lines. In this example, $(1)(iii)$ is not satisfied since $P_3$ includes $v$.

\end{minipage}\hspace{0.01\textwidth}%
\begin{minipage}[c]{0.34\textwidth}
\centering
\scalebox{0.48}{\begin{tikzpicture}[node distance=5mm and 5mm,
    every node/.style={circle, draw, minimum size=5.5mm, inner sep=1pt, font=\small},
    ->, >=Stealth, semithick]
    \node (n0) {$u_1$};
    \node[below left=of n0] (n1) {$v$};
    \node[below left=of n1] (n2) {$u_2$};
    \node[below left=of n2] (n3) {$v$};
    \node[below left=of n3] (n4) {$u_3$};
    \node[below left=of n4] (n5) {$v$};
    \node[below right=of n1] (n6) {$u_4$};
    \node[below left=of n6] (n7) {$v$};
    \node[below left=of n7] (n8) {$u_5$};
    \node[below left=of n8] (n9) {$u_6$};
    \node[below right=of n8] (n10) {$u_7$};
    \node[below left=of n10] (n11) {$v$};
    \begin{scope}[on background layer]
      \node[draw=gray, dashed, thick, rectangle, rounded corners=2pt, rotate fit=-45, fit={(n0)(n5)}, inner xsep=5pt, inner ysep=6pt] (box1) {};
      \node[draw=gray, dashed, thick, rectangle, rounded corners=2pt, rotate fit=-45, fit={(n6)(n9)}, inner xsep=5pt, inner ysep=6pt] (box2) {};
      \node[draw=gray, dashed, thick, rectangle, rounded corners=2pt, rotate fit=-45, fit={(n10)(n11)}, inner xsep=5pt, inner ysep=6pt] (box3) {};
    \end{scope}
    \node[draw=none, font=\large, gray] at ([xshift=10pt,yshift=10pt]n0.north east) {$P_1$};
    \node[draw=none, font=\large, gray] at ([xshift=10pt,yshift=10pt]n6.north east) {$P_2$};
    \node[draw=none, font=\large, gray] at ([xshift=10pt,yshift=10pt]n10.north east) {$P_3$};
    \draw[blue,thick,densely dotted] (n0) -- (n1);
    \draw (n1) -- (n2); \draw (n2) -- (n3); \draw (n3) -- (n4); \draw (n4) -- (n5);
    \draw[blue,thick,densely dotted] (n1) -- (n6);
    \draw[blue,thick,densely dotted] (n6) -- (n7);
    \draw[blue,thick,densely dotted] (n7) -- (n8);
    \draw (n8) -- (n9);
    \draw[blue,thick,densely dotted] (n8) -- (n10);
    \draw[blue,thick,densely dotted] (n10) -- (n11);
\end{tikzpicture}}
\end{minipage}
\label{fig:non-linear-key-illustration}
\end{figure}

The final lemma establishes that our method of absorbing along paths obtains absorbing sets of trees which are shallow, which will imply convergence as we want.

\begin{lemma}
\label{lem:nonlinear-final-in-t}
When the absorption replacement process terminates, any resulting derivation tree has depth at most $n(2p+2)$.
\end{lemma}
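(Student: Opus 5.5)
We bound the length of an arbitrary root-to-leaf path $Q$ in a tree $T$ in the final collection $A$. Since $T$ is compatible, the nodes of $Q$ other than the final sink $t$ are nonterminals, of which there are only $n$. So it suffices to show that every fixed nonterminal $v$ occurs at most $2p+2$ times along $Q$. Summing over the $n$ nonterminals then bounds the number of internal nodes of $Q$ by $n(2p+2)$, and hence the depth of $T$ by $n(2p+2)$, as claimed.

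To count the occurrences of $v$ on $Q$, we split $Q$ along $\beta(Q)$. Every node of $Q$ lies in exactly one prefix $s(P)$ for some path-node $P\in\beta(Q)$, so the number of occurrences of $v$ on $Q$ equals $\sum_{P\in\beta(Q)}(\text{occurrences of } v \text{ in } s(P))$. Let $P^*$ be the first path-node of $\beta(Q)$ whose walk contains $v$. Path-nodes above $P^*$ contribute nothing. By part (2) of Lemma~\ref{lem:non-linear-key}, $P^*$ contributes at most $2p+1$, since $s(P^*)$ is a prefix of $P^*$.

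Consider any later path-node $P_2$ in $\beta(Q)$ with $v\in s(P_2)$. Taking $P_1=P^*$, part (1)(i) of Lemma~\ref{lem:non-linear-key} says that $P_2$ contains $v$ exactly once, so it contributes at most one occurrence. Now let $P_2$ be the first such path-node below $P^*$. Part (1)(iii) forbids any further descendant $P_3$ of $P_2$ in $\beta(Q)$ from containing $v$ at all. Hence at most one path-node below $P^*$ contributes, and it contributes exactly one. The total count is therefore at most $(2p+1)+1=2p+2$.

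The step requiring the most care is matching the hypotheses of part (1). The lemma requires $v$ to lie in the full walk $P_1$ but only in the prefix $s(P_2)$, and we apply (iii) to exclude every $P_3$ below $P_2$ in $\beta(Q)$, including those whose prefixes might contain $v$. We also need to handle the degenerate case in which $v$ lies in the walk $P^*$ but not in $s(P^*)$. That case is harmless: the contribution of $P^*$ only decreases, and part (1) still applies with $P_1=P^*$, because it only asks that $v$ appear in the walk $P_1$. Finally, we must confirm that "node $v$" in the lemma refers to the nonterminal label rather than an individual tree node, so that the counts above are per nonterminal. This is consistent with how wedges and loops are defined earlier.
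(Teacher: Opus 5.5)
Your proof is correct and follows the paper's argument. You bound the occurrences of each nonterminal $v$ on a root-to-leaf path $Q$ by $2p+1$ from the first path-node of $\beta(Q)$ containing $v$ (Property (2) of Lemma~\ref{lem:non-linear-key}), add at most one more from a single later path-node (Properties (1)(i) and (1)(iii)), and sum over the $n$ nonterminals; counting only occurrences inside the prefixes $s(P)$ is slightly more careful than the paper's phrasing that at most two path-nodes of $\beta$ contain $v$, since (1)(iii) only excludes path-nodes below one whose prefix contains $v$.
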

\begin{proof}
Consider any such tree $T$ and any root-to-leaf path $Q$ in $T$. Let $\beta$ be the path in
$L(T)$ corresponding to $Q$. Because of (iii) of Property (1) of
Lemma~\ref{lem:non-linear-key}, there are at most two path-nodes in $\beta$ including $v$.
Further, the first and second path-nodes (if they exist) can have at most $(2p+1)$ and $1$
occurrences of $v$, respectively. Thus, $Q$ can have at most $2p+2$ occurrences of $v$.
\end{proof}

The lemma above gives that any tree in the absorbing set $A$ has depth at most $(2p+2)n$, which establishes a convergence time of $2(p+1)n$ for a naive evaluation of an arbitrary
$\Name$ instance, proving Theorem~\ref{thm:nonlinearmain}.

\section{Lower Bounds}
\label{sec:lowerbounds}

In this section, we show lower bounds on a class of algorithms that compute the solution of
a fixpoint of linear equations over a semiring. We begin by introducing relaxation algorithms~\cite{CLRS}, the class of algorithms we establish lower bounds on. Then, we show in Section~\ref{sec:warmup} the
lower bound for a class of instances that are $0$-stable, and the computation graph is a complete graph, which highlights the core technique. Finally, we generalize to $p$-stable semirings and sparse instances.

\subsection{Relaxation Algorithms}
Consider the following generic form of a system of linear polynomial equations:
\begin{equation} \label{eq:lineq}
    \mathbf{X} = \mathcal{A}^\top \otimes \mathbf{X} \oplus \mathbf{b}
\end{equation}
where $\mathbf{X}$ is an $n$-dimensional vector of variables, $\mathcal{A}$ is an $n  \times
n$ matrix that takes values from the domain of $\mathbf{S}$, and $\mathbf{b}$ is an
$n$-dimensional vector with values from the same domain. This system can be thought of as the
grounding of the following linear Datalog program
\begin{align*}
    & R(x) \dl B(x) && & R(x) \dl R(y), E(y,x)
\end{align*}
where $E$ can be viewed as a representation of the edges of a directed graph
$G_{\mathcal{A}}$ with vertices in $[n]$. Recall that our goal is to compute the fixpoint of this system. We denote by $m$ the number of non-zero entries in the matrix $\mathcal{A}$ --
the number of edges in $G_{\mathcal{A}}$.

We first analyze the semiring complexity of naïve evaluation. Using our upper bound on the number of iterations ($(2p+1)n$ for linear systems), and the fact that every iteration requires $m$ semiring multiplications and at most $m$
semiring additions, we obtain the following upper bound.

\begin{theorem}
The fixpoint of Eq.~\eqref{eq:lineq} over a $p$-stable semiring $\mathbf{S}$ can be computed via naïve evaluation with $2(2p+1)mn$ semiring operations.
\end{theorem}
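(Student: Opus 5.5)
The plan is to read the linear system \eqref{eq:lineq} as a linear $\Name$ program and to count operations under naïve evaluation, using Theorem~\ref{thm:linearmain} for the number of iterations. To make the count work, I first fix the grounding so that the component functions are
$f_x(\mathbf X) = b_x \oplus \bigoplus_{y:\,\mathcal A_{yx}\neq 0} \mathcal A_{yx}\otimes X_y$.
In the transition-graph picture there is one edge $(x,y)$ labelled $\mathcal A_{yx}$ for each nonzero entry, and an edge $(x,t)$ labelled $b_x$. This is exactly the setting of Section~\ref{sec:linearupperbound}. The number of IDB tuples is at most $n$, because variables that remain $0$ only make the output smaller, and Theorem~\ref{thm:linearmain} is monotone in $n$. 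Applying Theorem~\ref{thm:linearmain} then gives $\bm f^{(q)}(\bm 0)=\bm f^{(q+1)}(\bm 0)$ for some $q\le (2p+1)n$.

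Next I bound the cost of a single iteration, meaning one evaluation of $\bm f$ at the current vector. Each nonzero entry $\mathcal A_{yx}$ contributes one multiplication $\mathcal A_{yx}\otimes X_y$, which is $m$ multiplications in total. For a fixed $x$, combining $b_x$ with the $d_x$ products takes exactly $d_x$ additions, where $d_x$ is the in-degree of $x$ in $G_{\mathcal A}$. Summing over all $x$ gives $\sum_x d_x = m$ additions. A single iteration therefore costs at most $2m$ operations; when $b_x=0$ one addition is saved, which only helps.

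Finally I multiply the two counts. Naïve evaluation performs $q+1\le (2p+1)n+1$ applications of $\bm f$, because it needs one extra application to detect that the fixpoint has been reached. Two points keep the total within $(2p+1)n$ iterations. First, the very first application $\bm f(\bm 0)=\mathbf b$ needs no semiring operations, since every product $\mathcal A_{yx}\otimes 0$ is known to be $0$. Second, the algorithm can simply stop after the guaranteed $(2p+1)n$ iterations without checking for a fixpoint, since Theorem~\ref{thm:linearmain} already tells us the value obtained is the fixpoint. With $(2p+1)n$ iterations at $2m$ operations each, the total is $2(2p+1)mn$.

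I expect the main obstacle to be only bookkeeping. One issue is the off-by-one between "iterations" and "applications of $\bm f$", which I handle with the stopping argument above. The other is confirming that Theorem~\ref{thm:linearmain} applies with $n$ equal to the dimension of $\mathbf X$. That theorem is stated for a commutative semiring, so I will assume $\mathbf S$ is commutative here as throughout the paper.
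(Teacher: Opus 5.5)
Your proposal is correct and follows the paper's own argument: the paper likewise multiplies the $(2p+1)n$ iteration bound from Theorem~\ref{thm:linearmain} by the $m$ multiplications and at most $m$ additions needed per iteration. Your extra bookkeeping adds precision to points the paper leaves implicit: the free first application $\bm f(\bm 0)=\mathbf b$ covers the extra fixpoint-detection step, and you state explicitly the commutativity assumption inherited from Theorem~\ref{thm:linearmain}.
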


It is known that to find a fixpoint over the Boolean semiring ($0$-stable), we need $O(m)$ semiring operations~\cite{GroundingBasedApproach}, and for the tropical semiring (also $0$-stable), we need $O(m \log n)$ operations~\cite{GroundingBasedApproach}. There is no known algorithm that works for any $0$-stable semiring with $O(m)$ semiring complexity.

{\em Relaxation algorithms} are algorithms that evaluate the fixpoint of
Eq.~\eqref{eq:lineq} over a given semiring $\mathbf{S}$. A relaxation algorithm starts with
initializing the value of each vertex $i \in [n]$ to $\mathbf{X}_i \gets b_i$.  At
each relaxation step, it chooses an edge $(i,j)$ in $G_{\mathcal{A}}$, and updates the value of
$\mathbf{X}_j$ using the newest value of $\mathbf{X}_i$ (and keeping the values of all other
variables the same). If the semiring is 0-stable, then a relaxation via the edge $(i,j)$ can be implemented as the operation $\mathbf{X}_j \gets \mathbf{X}_j \oplus (\mathcal{A}_{ij} \otimes \mathbf{X}_i)$, which needs two semiring operations (this is because 0-stability implies $\oplus$-idempotence). For a general $p$-stable semiring, a relaxation step needs $O(\log n)$ semiring operations\footnote{Since we need to update the value of a sum of size $O(n)$, which can be achieved by a binary tree structure.}.
We give further motivation behind relaxation algorithms in \apref{Appendix~\ref{sec:lowerbound:app}}{the full version of this paper}.

The {\em edge schedule} of an execution of a relaxation algorithm is the sequence of edges that traces the relaxation steps of the algorithm. The output of an edge schedule on an instance is what the corresponding execution of the relaxation algorithm would return. We say that an edge schedule is {\em correct} if the resulting algorithm produces the correct output. To show a lower bound on the semiring complexity of relaxation algorithms, it suffices to show a lower bound on the length of a correct edge schedule.  We first give some definitions that will be important for the lower bounds.

\begin{definition}[Edge Schedule Cover]
An edge schedule $\sigma$ {\em covers} a derivation tree (walk) $W$ of an IDB fact if $W$ is a subsequence of $\sigma$.
\end{definition}

\begin{lemma} \label{lemma:ValueSumTrees} Let $\sigma$ be the edge schedule of a relaxation
algorithm. Then, for any $i \in [n]$, the value of  $\mathbf{X}_i$ is the sum of the yields
of all walks ending at $i$ and covered by $\sigma$.
\end{lemma}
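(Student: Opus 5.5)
We argue by induction on the length of the edge schedule $\sigma$. Throughout, a walk ending at $i$ means a walk that starts at some vertex $j$, picking up the initial value $b_j$ (think of a virtual source edge carrying $b_j$), and then traverses edges of $G_{\mathcal A}$ to $i$. Its yield is $b_j\otimes\bigotimes_{e}\mathcal A_e$; since $\mathbf S$ is commutative, the ordering convention of the product does not matter. The length-zero walk at $i$ has yield $b_i$. A walk is covered by $\sigma$ if its edge sequence occurs as a subsequence of $\sigma$.

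\textbf{Invariant.} After any prefix $\sigma'$ of $\sigma$, every $\mathbf X_i$ equals the $\oplus$ of the yields of all walks ending at $i$ covered by $\sigma'$, where each walk is counted once per distinct edge-sequence. For the empty prefix this holds: the only covered walk ending at $i$ is the trivial one, and $\mathbf X_i=b_i$ by initialization.

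\textbf{Inductive step.} Suppose $\sigma'$ is extended by one relaxation of the edge $(i,j)$. Every variable other than $\mathbf X_j$ is unchanged. Likewise, the set of covered walks ending at $k\neq j$ is unchanged, because any subsequence of $\sigma'(i,j)$ that uses the final edge must end at $j$.

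For $j$, the walks ending at $j$ covered by $\sigma'(i,j)$ split into two disjoint classes:
\begin{itemize}
\item those covered by $\sigma'$ itself;
\item walks $W\cdot(i,j)$ with $W$ a walk ending at $i$ covered by $\sigma'$, not already in the first class.
\end{itemize}
The relaxation is meant to set $\mathbf X_j$ to $b_j\oplus\bigoplus_{k}\mathcal A_{kj}\otimes(\text{latest }\mathbf X_k)$. By the induction hypothesis, each latest $\mathbf X_k$ is the sum over walks ending at $k$ covered by the prefix at the time $(k,j)$ was last relaxed, and the latest $\mathbf X_i$ is the sum over walks ending at $i$ covered by $\sigma'$. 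Distributivity then turns $\mathcal A_{ij}\otimes\mathbf X_i$ into the sum of the yields of the walks $W\cdot(i,j)$.

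\textbf{Main obstacle: avoiding double counting.} We must show that the new value equals exactly the sum over all covered walks ending at $j$. A covered walk ending at $j$ is identified by its edge sequence and hence by its last edge $(k,j)$ together with its prefix walk ending at $k$. That prefix is covered by the part of $\sigma$ before the last occurrence of $(k,j)$. This gives a bijection between covered walks ending at $j$ and the summands $b_j$ and $\mathcal A_{kj}\otimes(\text{walk ending at }k\text{ covered before the last }(k,j))$, taken over all in-neighbours $k$.

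This is precisely why the relaxation is implemented as recomputing the in-sum with the newest value of $\mathbf X_i$, via the binary-tree structure mentioned in the footnote, rather than as blind accumulation. In the $0$-stable case the accumulation form $\mathbf X_j\gets\mathbf X_j\oplus\mathcal A_{ij}\otimes\mathbf X_i$ coincides with recomputation, because $\oplus$ is idempotent and repeated terms are harmless.

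One point must be checked. If the latest relaxation is $(i,j)$ and $\mathbf X_i$ has grown since an earlier relaxation of $(i,j)$, the replaced summand accounts exactly for the new walks through that occurrence of $(i,j)$. A walk using an earlier occurrence of $(i,j)$ as its last edge is also covered using the latest occurrence, so it is represented exactly once by its edge sequence. This establishes the invariant, and taking $\sigma'=\sigma$ yields the lemma.
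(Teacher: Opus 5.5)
Your proof is correct and follows the same route as the paper: induction on the length of the schedule, using that relaxing $(i,j)$ changes only $\mathbf{X}_j$ and that the newly covered walks ending at $j$ are exactly the walks ending at $i$ covered by $\sigma'$, extended by $(i,j)$. Your last-occurrence bijection, and your remark that in a non-idempotent semiring the update must replace the stored summand for edge $(i,j)$ rather than accumulate into $\mathbf{X}_j$, make explicit the double-counting point that the paper leaves implicit in ``updating $\mathbf{X}_j$ with the newest value of $\mathbf{X}_i$.''
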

\begin{proof}
We can show this by induction on the length of the edge schedule. For the base case of an
empty schedule, this is true because only length $0$ walks are covered and hence the sum is
$b_i$.

For the induction step, assume that we have already run schedule $\sigma'$ and we see a new edge
$(i,j)$ in $G_{\mathcal{A}}$. By the induction hypothesis, for any $k \in [n]$, the value of
$\mathbf{X}_k$ before the update is the sum of the yields of all walks of $k$ covered by
$\sigma'$. Observe that the addition of $(i,j)$ to the schedule only changes the set of walks of
$j$ covered by $\sigma$. Hence, it suffices to look at how $\mathbf{X}_j$ is updated. Indeed,
when $(i,j)$ is added to the schedule, then to obtain the new set of covered walks it
suffices to consider the subset of walks of $j$ that consist of walks of $i$ in $\sigma'$ plus
the edge $(i,j)$. Thus, the new value will correspond exactly to updating $\mathbf{X}_j$
with the newest value of $\mathbf{X}_i$.
\end{proof}

This is key for our lower bounds: we will see that relaxation algorithms producing shorter schedules than the lower bound gives sums of yields of covered walks not giving the correct result.

\subsection{Cliques with 0-stable Semirings}
\label{sec:warmup}

We give here a lower bound on the length of a correct edge schedule when
$G_{\mathcal{A}}$ is a (directed) clique with $n$ nodes and the semiring is 0-stable. This argument highlights the core ideas used in the general lower bound.

\subsubsection{Intuition of Lower Bound}
We embed our hard instance in a \emph{color semiring} attached to the
complete graph. Each simple path $P$ visiting all $n$ nodes is given a unique color $c(P)$, and every
edge on $P$ is annotated with that color. A relaxation algorithm can only compute the color $c(P)$
at the destination vertex if $\otimes$ is applied for every edge along $P$, in the
order that the edges appear along the path.
The schedule must then be long enough to cover all $n!$ Hamiltonian paths in the graph.

The problem is then reduced to a combinatorial covering
problem: \emph{How long must a sequence of edges be so that every Hamiltonian path occurs
as a subsequence?} Using a counting argument (Lemma \ref{lemma:AbsorptiveN3LB}) inspired by~\cite{Arratia1999} we show that
any sequence achieving full coverage has length $\Omega(n^{3})$, yielding an $\Omega(n^{3})$ lower bound on the number of semiring operations performed by \emph{any} relaxation algorithm. Since naïve evaluation performs $\Theta(n^{3})$ operations on this instance, the bound is tight.

\subsubsection{Lower Bound Analysis}

We will consider the semiring $\mathbf{C}_k = (\mathcal{P}([k]), \cup, \cap, \{\}, [k])$, which we call the {\em color semiring}. This semiring annotates each edge with a subset of the $k$ colors (represented as integers). Reachability over this semiring (say from node $u$ to $v$) computes the set of colors $c$ for which there is a path from $u$ to $v$ such that every edge in the path contains $c$. The color semiring is 0-stable -- in addition, it is also $\otimes$-idempotent.

We construct an \emph{input instance} $I_n$ for some $n \geq 3$ as follows. Let $\mathcal{P}_n$ be the set of all simple paths that visit all nodes in a directed clique with $n$ nodes (every such path has length $n-1$). Note that $|\mathcal{P}_n| = n!$. We take the semiring $\mathbf{C}_k$ with $k=|\mathcal{P}_n|=n!$ colors, and we assign a unique color $c(P)$ to path $P \in \mathcal{P}_n$. Then, let $\mathcal{A}_{ij} = \{c(P) \mid P \in \mathcal{P}_n: (i,j) \in P\}$ and $\mathbf{b}_i = \{c(P) \mid P \in \mathcal{P}_n: P \text{ starts at } i \}$.
In other words, for each simple path $P$, we create a new unique color and add this color to each edge along the path, and to the unique node that $P$ starts at. An edge schedule must return all colors corresponding to paths in  $\mathcal{P}_n$ to be correct.

\begin{lemma} \label{lemma:clique:cover}
If an edge schedule $\sigma$ is correct for the input instance $I_n$ over the semiring $\mathbf{
C}_{n!}$, then every path in the set $\mathcal{P}_n$ is covered by $\sigma$.
\end{lemma}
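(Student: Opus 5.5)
We argue by contraposition: if some Hamiltonian path $P \in \mathcal{P}_n$ is not covered by $\sigma$, then the color $c(P)$ is missing from the output at the endpoint of $P$, so $\sigma$ is not correct. Let $P$ start at $i$ and end at $j$. Two facts are needed. First, the correct fixpoint value $\mathbf{X}_j$ contains $c(P)$. Second, by Lemma~\ref{lemma:ValueSumTrees}, the value that $\sigma$ produces at $j$ is the union of the yields of the walks ending at $j$ that are covered by $\sigma$, and none of these yields contains $c(P)$.

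The first fact is immediate. The walk that starts at $i$ with the initial term $\mathbf{b}_i$ and then follows the edges of $P$ has yield $\mathbf{b}_i \cap \bigcap_{e\in P}\mathcal{A}_e$. This set contains $c(P)$ by construction, and the true fixpoint is the union over all walks. It is also covered by the naïve schedule, so the "correct output" indeed contains $c(P)$.

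For the second fact, I would establish the key claim: if $W$ is any walk ending at $j$ whose yield contains $c(P)$, then $W$ equals $P$ as a sequence of edges. A walk here consists of a starting vertex $s$, contributing $\mathbf{b}_s$, followed by edges $e_1,\dots,e_\ell$. The yield contains $c(P)$ only if $c(P)\in\mathbf{b}_s$ and $c(P)\in\mathcal{A}_{e_r}$ for every $r$.
\begin{enumerate}
\item By the definitions of $\mathbf{b}$ and $\mathcal{A}$, this forces $s=i$ and every $e_r$ to be an edge of $P$.
\item The edges of $P$ form a simple directed path that visits each vertex at most once. So from any vertex, at most one edge of $P$ leaves it.
\item Starting at $i$ and using only $P$-edges, the walk is therefore a prefix of $P$, since no $P$-edge re-enters an earlier vertex.
\item Because the walk must end at $j$, which is the last vertex of $P$, the prefix is all of $P$.
\end{enumerate}
Hence the only walk whose yield contains $c(P)$ is $P$ itself. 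If $P$ is not a subsequence of $\sigma$, no covered walk's yield contains $c(P)$, and neither does their union. This contradicts correctness.

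The main subtlety is matching the notion of "walk" in Lemma~\ref{lemma:ValueSumTrees} with the initialization $\mathbf{X}_i\gets b_i$: a walk there means a start vertex contributing its $b$-value followed by edges. One must also check that coverage of a walk means its edge sequence appears as a subsequence of $\sigma$, consistent with the definition of edge schedule cover. Once that is pinned down, the uniqueness argument in the enumeration above is routine.
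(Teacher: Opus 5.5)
Your proposal is correct and follows essentially the same route as the paper. Both argue by contraposition through Lemma~\ref{lemma:ValueSumTrees}: the color $c(P)$ must appear at the endpoint of $P$, and the only walks whose yield contains $c(P)$ start at the first vertex of $P$ and use only edges of $P$. Your prefix argument, which uses that $P$ is simple so such a walk ending at the last vertex of $P$ must be $P$ itself, spells out rigorously the step the paper states informally as colors having to ``propagate in the order of the edges of the path.''
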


\begin{proof}
Suppose there is a simple path $P \in \mathcal{P}_n$ from node $u$ to node $v$ that is not covered by the schedule $\sigma$. We will show that the set of walks ending at $v$ and covered by $\sigma$ does not compute the correct value for $\mathbf{X}_v$, which implies the result from Lemma~\ref{lemma:ValueSumTrees}. The key observation is that the unique color $c(P)$ of this path must be included in the output value of  $\mathbf{X}_v$, and for this to happen it must be that $P$ is included in the set. Indeed, each edge $(x,y)$ in the schedule “propagates” a color from node $x$ to $y$, but only if the color occurred at $x$ at the start of the propagation. To propagate the color from the start to the end of a path, we therefore have to follow the order of the edges of the path. Thus, to obtain $c(P)$ we must have a walk that uses only the edges in $P$, and any such walk must start at vertex $u$.
\end{proof}

\begin{lemma}
\label{lemma:AbsorptiveN3LB}
    Let $\theta\in(0,1]$ be some constant. Then, every edge schedule $\sigma$ that covers at least a $\theta$-fraction of the paths in $\mathcal{P}_n$ satisfies $|\sigma| > \theta^{\frac{2}{n}}\frac{n^3}{3e^3}$.
\end{lemma}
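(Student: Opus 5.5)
The plan is a counting (union-bound) argument. Let $\ell = |\sigma|$. Each Hamiltonian path $P \in \mathcal{P}_n$ is a sequence of $n-1$ edges. If $\sigma$ covers $P$, then $P$ occurs as a subsequence of $\sigma$, so it is determined by some choice of $n-1$ positions $i_1<\dots<i_{n-1}$ in $\sigma$ with $\sigma_{i_j}$ equal to the $j$-th edge of $P$. Conversely, a set of positions determines at most one path. So the number of covered paths is at most the number of position sets $\{i_1<\dots<i_{n-1}\}$ whose edges $\sigma_{i_1},\dots,\sigma_{i_{n-1}}$ form a Hamiltonian path read in order. In particular, consecutive chosen edges must chain: the head of $\sigma_{i_j}$ equals the tail of $\sigma_{i_{j+1}}$.

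The naive bound $\binom{\ell}{n-1}$ is too weak, so the chaining constraint has to be exploited. I would encode a covered path by its first position $i_1$ together with the gaps $g_j = i_{j+1}-i_j \ge 1$. Given the position $i_j$, whose edge ends at some vertex $x$, the next position must be some later occurrence in $\sigma$ of an edge leaving $x$ toward an unvisited vertex. Rather than counting directly, I will bound the count in terms of gaps as follows. Write $\ell = \sum_j g_j + (\text{slack})$, so $\sum_{j} g_j \le \ell$. For a fixed path and fixed embedding, reorganize: the embedding is determined by the sequence of gaps. However, the number of distinct paths achievable with gap sequence bounded by given values is at most $\prod_j \min(g_j, n-j)$-ish, because within a window of $g_j$ positions at most $g_j$ candidate edges exist, and only $n-1-j$ unvisited targets remain.

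The cleaner route, which I expect follows the approach of \cite{Arratia1999}, is the following. The number of covered paths is at most $\sum_{i_1}\sum_{(g_j)} \prod_j 1$ with the constraint that edge choices are forced; I would instead bound it as $\#\text{covered} \le \ell \cdot \prod$ over steps of (number of choices). Concretely, I would argue that the $j$-th step from $x$ can reach at most $n-j$ new vertices. The canonical embedding (greedy, earliest next occurrence) makes the gap needed to reach a given target grow with the target's rank among earliest occurrences. Hence a path whose step-$j$ target has rank $r_j$ consumes gap at least $r_j$, giving $\sum_j r_j \le \ell$ with $1\le r_j\le n-j$. Covered paths then inject into tuples $(r_1,\dots,r_{n-1})$ with $r_j \le n-j$ and $\sum r_j \le \ell$, together with a starting vertex, which contributes a factor of $n$.

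Then $\theta n! \le n\cdot \#\{(r_j): \sum r_j\le \ell\}$, which is at most $n\binom{\ell+n-1}{n-1}\le n\left(\frac{e(\ell+n)}{n-1}\right)^{n-1}$. Solving gives, roughly, $\ell \gtrsim \theta^{1/(n-1)}\,(n-1)!^{1/(n-1)}\, n/e \approx \theta^{1/n} n^2/e^2$, which is only quadratic. To gain the third factor of $n$, the injection must be stronger: the rank $r_j$ should measure the number of \emph{positions} skipped, not the number of targets. Because any vertex has $n-1$ outgoing edges interleaved in $\sigma$, reaching the $r$-th distinct new target greedily requires passing over about $r$ distinct out-edges of $x$. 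Summing over $x$ with an averaging or AM--GM argument, as in the supersequence bound for permutations, should give $\prod_j r_j \ge (n-j)!$-type mass with $\sum_j g_j \le \ell$, where the $g_j$ scale like $n\cdot r_j$ on average. The main obstacle is exactly this step: proving rigorously that the gaps scale by a factor of $n$ beyond the rank, rather than just by the rank. I would attempt it by partitioning $\sigma$ by tail vertex and applying AM--GM to $\prod_j g_j$ under $\sum g_j\le\ell$, yielding $(\ell/(n-1))^{n-1}\ge \theta n!\cdot n!/n^{O(1)}$. Stirling's formula then gives $\ell > \theta^{2/n} n^3/(3e^3)$, where the exponent $2/n$ arises from the $\theta$ factor combined with the lower-order terms.
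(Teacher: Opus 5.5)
\textbf{There is a genuine gap, at exactly the step you flag.}

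Your rank encoding is sound. With the greedy embedding, a step-$j$ target of rank $r_j$ forces gap $g_j\ge r_j$, and $(v_1,r_1,\dots,r_{n-1})$ determines the path. But it only yields $\sum_j r_j\le \ell$, so you count roughly $n\binom{\ell}{n-1}$ paths. That is no better than the naive subsequence bound and gives only $\ell=\Omega(n^2)$, as you observe.

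The proposed strengthening, that gaps scale like $n\cdot r_j$, is false path by path. If $\sigma$ begins with the edges of some Hamiltonian path $P$ in order, every gap and every rank of $P$ equals $1$. Any valid version would have to be an averaging statement over paths, and you do not supply one. Both the inequality $(\ell/(n-1))^{n-1}\ge\theta\, n!\cdot n!/n^{O(1)}$ and the origin of the factor $\theta^{2/n}$ are asserted rather than derived. As written, the proposal proves only a quadratic lower bound.

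\textbf{The missing idea} is a much cruder use of the chaining. A Hamiltonian path $v_1,\dots,v_n$ is already determined by every other edge, $(v_1,v_2),(v_3,v_4),\dots$, that is, by $\lfloor n/2\rfloor$ of its edges. For odd $n$, the last vertex is the unique one not yet listed.

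So if $\sigma$ covers $P$, fix an embedding and take the positions of these alternate edges. They form a $\lfloor n/2\rfloor$-subset of positions of $\sigma$ that determines $P$. Hence the number of covered paths is at most $\binom{|\sigma|}{\lfloor n/2\rfloor}$, and covering a $\theta$-fraction forces $\binom{|\sigma|}{\lfloor n/2\rfloor}\ge\theta\, n!>\theta(n/e)^n$.

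Now use $\binom{a}{b}\le(ea/b)^b$ and $\lfloor n/2\rfloor\ge n/3$. The base $3e|\sigma|/n$ is at least $1$, since a covering schedule has $|\sigma|\ge n-1$, which lets you round the exponent up to $n/2$. This gives $(3e|\sigma|/n)^{n/2}>\theta(n/e)^n$, that is, $|\sigma|>\theta^{2/n}n^3/(3e^3)$.

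Halving the exponent from $n-1$ to $n/2$ is exactly what produces the third factor of $n$. It is also where the $\theta^{2/n}$ comes from.
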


\begin{proof}
    
    First, observe: $|\mathcal{P}_n| = n!>(n/e)^n$. The inequality is obtained from Stirling's approximation. We next bound the number of paths from $\mathcal{P}_n$ which are covered by $\sigma$. Note next that a length $n$ path in a graph can be fully defined by $\lfloor n/2 \rfloor$ edges, since by listing every other edge, we fully describe the sequences of vertices of the path. A sequence of edges of size $|\sigma|$ has only $\binom{|\sigma|}{\lfloor n/2 \rfloor}$ subsequences of length $\lfloor n/2 \rfloor$, and hence it covers at most that many paths.
    Thus, we must have
    $
    \binom{|\sigma|}{\lfloor n/2 \rfloor} \geq \theta |\mathcal{P}_n| > \theta \left(\frac{n}{e}\right)^n
    $
    , if we hope to cover a $\theta$-fraction of paths.
    
    We can bound the left-hand side of this inequality as:
       $ \binom{|\sigma|}{\lfloor n/2 \rfloor}
        \leq \left( \frac{e|\sigma|}{\lfloor n/2 \rfloor} \right)^{\lfloor n/2 \rfloor}
        \leq \left( \frac{3e|\sigma|}{n} \right)^{n/2}$.
    By plugging in our first observation, we finally obtain the desired inequality.
\end{proof}

\begin{theorem}
Any correct edge schedule for the instance $I_n$ over the semiring $\mathbf{C}_{n!}$ has length $> \frac{n^3}{3e^3}$.
\end{theorem}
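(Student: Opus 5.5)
The theorem follows by chaining Lemma~\ref{lemma:clique:cover} with Lemma~\ref{lemma:AbsorptiveN3LB}, taking $\theta = 1$.

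First, let $\sigma$ be any correct edge schedule for $I_n$ over $\mathbf{C}_{n!}$. Lemma~\ref{lemma:clique:cover} gives that every path in $\mathcal{P}_n$ is covered by $\sigma$. Hence $\sigma$ covers a $1$-fraction of $\mathcal{P}_n$.

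Second, apply Lemma~\ref{lemma:AbsorptiveN3LB} with $\theta=1$. Since $1^{2/n}=1$, this yields $|\sigma| > \frac{n^3}{3e^3}$, which is exactly the claimed bound.

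The only step needing care is checking that the hypotheses of Lemma~\ref{lemma:AbsorptiveN3LB} are really met with $\theta = 1$. In particular, its counting argument should use the full cardinality $|\mathcal{P}_n| = n! > (n/e)^n$. I would also re-check the step $\lfloor n/2\rfloor \ge n/3$ used there for $n \ge 3$, since it is what gives $\binom{|\sigma|}{\lfloor n/2\rfloor} \le (3e|\sigma|/n)^{n/2}$.

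Granting that lemma as stated, the whole argument is a two-line composition, and I expect no real obstacle. If a strict-versus-nonstrict issue arises, the strict inequality $n! > (n/e)^n$ already propagates to give strictness.
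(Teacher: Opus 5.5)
Your proposal is correct and matches the paper's proof exactly: the paper also obtains the theorem as a direct corollary of Lemma~\ref{lemma:clique:cover} and Lemma~\ref{lemma:AbsorptiveN3LB} with $\theta = 1$. The side check you flag also goes through, since $\lfloor n/2\rfloor \ge n/3$ holds for every $n \ge 3$, with equality at $n=3$.
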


\begin{proof}
This is a direct corollary of the above two lemmas by setting $\theta=1$.
\end{proof}

\subsection{General Lower Bounds}

Our next result generalizes the prior lower bound to graphs with an arbitrary number of edges $m$. To do this, instead of using a clique graph, we use a family of instances where the transition graph resembles a fractal, consisting of cliques of subgraphs which themselves are cliques. We describe the construction in detail in \apref{the appendix}{the full version of this paper}, including with examples.

\begin{restatable}{lemma}{SparseLowerBound}
    For any constant integer $\alpha \geq 1$, there is a family of instances with $\Theta(n)$ nodes and $m=\Theta(n^{1+1/\alpha})$ edges such that any correct edge schedule over the color semiring has length $\Omega(m \cdot n)$.
\end{restatable}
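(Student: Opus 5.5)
We build a ``fractal'' instance whose vertex set has $\Theta(n)$ nodes and whose edge set has $\Theta(n^{1+1/\alpha})$ edges. We then repeat the color-semiring argument of Lemma~\ref{lemma:clique:cover} and Lemma~\ref{lemma:AbsorptiveN3LB} inside it.

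\textbf{Construction.} Set $k = n^{1/\alpha}$. The level-$1$ gadget is a directed clique on $k$ nodes. The level-$j$ gadget consists of $k$ copies of the level-$(j-1)$ gadget, and each ordered pair of copies is joined by ``clique-like'' edges. To keep the edge count down, these joining edges should go only between designated port vertices, or else we should use $k$ parallel layers of the clique on $k$ nodes. The parameters are tuned so that level $\alpha$ has $\Theta(n)$ nodes and every node has degree $\Theta(k)$, which gives $m=\Theta(nk)=\Theta(n^{1+1/\alpha})$.

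\textbf{Hard paths.} In the color semiring $\mathbf{C}_K$ we assign a unique color to each path in a large family $\mathcal{P}$ of ``hierarchical Hamiltonian-like'' paths, one color per path. Each such path traverses $\Theta(n)$ edges: it visits the level-$(\alpha-1)$ blocks in some permutation, and recursively inside each block it visits the sub-blocks in some permutation. As in Lemma~\ref{lemma:clique:cover}, colors propagate only along edges of a path and in path order. Hence Lemma~\ref{lemma:ValueSumTrees} forces any correct schedule to contain every $P\in\mathcal{P}$ as a subsequence.

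\textbf{Counting.} First, $|\mathcal{P}|$ is at least roughly $(k!)^{\Theta(n/k)}\approx (k/e)^{\Theta(n)}$, since there are $\Theta(n/k)$ cliques on $k$ nodes, each contributing a free permutation. Second, each path is determined by every other edge it uses, that is, by $\Theta(n)$ of its edges. So a schedule $\sigma$ covers at most $\binom{|\sigma|}{cn}\le (e|\sigma|/(cn))^{cn}$ paths. Comparing the two bounds gives $|\sigma|/n \gtrsim k^{\Theta(1)}$. The exponent constants have to be arranged so that the result is $|\sigma|=\Omega(nk\cdot n/k\cdot\ldots)$, that is, $\Omega(mn)=\Omega(n^2k)$.

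\textbf{Main obstacle.} The naive count yields only $|\sigma|\gtrsim nk$, which is $m$ and not $mn$. The clique case succeeded because there each path had length $n$ while there were $n^{n}$ paths, and the resulting ratio $n^2$ is what produced $n^3$. To reach $\Omega(mn)$ we need about $(m n/\ell)^{\ell}$ paths, each of length $\ell$.

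My first attempt would be to take long walks of length $\ell=\Theta(n)$ that revisit structure at every level of the fractal, so that the number of choices per step is $\Theta(nk)=\Theta(m)$ rather than $k$. Concretely, at each step the path should be able to move to essentially any of the $\Theta(m/\ell)\cdot\ell$ edges compatible with the hierarchy.

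If that fails, I would fall back on a more direct covering argument. The fallback is to show that every subsequence-cover of $\mathcal{P}$ must relax each edge of each clique $\Omega(n)$ times, by induction over the levels. This is the step where I expect the real difficulty, namely calibrating the recursive construction so that the path count beats $\binom{|\sigma|}{\ell}$ for $|\sigma|=o(mn)$.
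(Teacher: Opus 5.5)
Your proposal stops exactly where the proof has to happen, so there is a genuine gap. The setup is fine. The color-semiring reduction (one color per hierarchical Hamiltonian path, so a correct schedule must contain each such path as a subsequence) is what the paper uses. Your ``port vertex'' option is essentially the paper's $(k,\alpha)$-recursive clique: an enter and an exit vertex per group, exit-to-enter edges between sibling subgroups, $\Theta(k^\alpha)$ vertices and $\Theta(k^{\alpha+1})$ edges.

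But, as you note, comparing $|\mathcal P|$ with $\binom{|\sigma|}{\Theta(n)}$ does not reach $mn$. Using that every other edge determines a path, the count gives $\Omega(nk^2)=\Omega(mk)$, not just $nk$. That is still short of $mn=nk^{\alpha+1}$ by $k^{\alpha-1}$, one factor of $k$ for each level above the bottom cliques. Neither repair you suggest closes this:
\begin{itemize}
\item Walks with $\Theta(m)$ choices per step do not exist when every vertex has degree $O(k)$.
\item In the color semiring a non-simple walk $W$ cannot be forced into the schedule at all. Any walk from its start to its end inside the edge set of $W$ already delivers the color of $W$, so only simple paths are forced.
\item ``Calibrating the construction so that the path count beats $\binom{|\sigma|}{\ell}$'' is the same counting argument again. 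The missing $k^{\alpha-1}$ is not a matter of tuning constants.
\end{itemize}

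The missing idea is a level-by-level amplification. The extra factor $k$ per level comes from forced \emph{repetitions} of sub-paths, not from a sharper count.
\begin{itemize}
\item A Hamiltonian path of a group $g$ with children $g_1,\dots,g_k$ is a permutation of a configuration $\{p_1,\dots,p_k\}$ with $p_i\in P_{g_i}$.
\item If $\sigma$ covers a $\theta$-fraction of $P_g$, then by averaging a $\theta/2$-fraction of configurations have a $\theta/2$-fraction of their permutations covered.
\item Lemma~\ref{lemma:VertexScheduleLemma} says a sequence containing $\theta k!$ permutations of $k$ items has at least $k/2$ items occurring more than $\theta^{2/k}k/(2e^2)$ times. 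It yields at least $k/4$ children $i$ for which a $\theta/8$-fraction of $P_{g_i}$ occurs $q=\Omega(\theta^{2/k}k)$ times in succession, i.e.\ $p^{(q)}$ is a subsequence of $\sigma$.
\item Lemma~\ref{lemma:InductiveScheduleAppend} says covering a $\theta_q$-fraction of $H^{(q)}$ costs at least $q\,L(\theta_q/q)$. This converts the repetition into cost $\Omega(\theta^{2/k}k)\,L_c(\theta/k)$ for each such child.
\item Distinct children have disjoint edge sets, so these costs add. This gives $L_g(\theta)\ge c\,\theta^{2/k}k^2\,L_c(\theta/k)$ for an absolute constant $c$.
\end{itemize}
Start from the $k^3$ clique bound of Lemma~\ref{lemma:AbsorptiveN3LB} and iterate over the $\alpha-1$ upper levels. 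This gives $k^{2\alpha+1}=mn$. The accumulated loss $k^{-O(\alpha^2/k)}$ from the shrinking coverage fractions is at least $2^{-O(\alpha^2)}$, since $k^{1/k}\le 2$.

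Note that the induction must carry the coverage fraction $\theta$. Even when the top level demands full coverage, what is passed down to a child is only that a fraction of its paths be covered, repeatedly. Your fallback target, ``every edge is relaxed $\Omega(n)$ times'', has the right order on average. However, the inductive hypothesis that proves it has to be about partial coverage, not full coverage.
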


Finally, we lift the lower bound to $p$-stable semirings. The construction mirrors the $0$-stable setting, with one crucial addition: every Hamiltonian path must now be traversed at least $p$ times, introducing the extra multiplicative factor. To enforce this, we employ a new semiring consisting of tuples whose coordinates correspond to Hamiltonian paths. 

\begin{restatable}{theorem}{GeneralLowerBound}
\label{thm:mainsemiringcomplexity}
Consider any constant integer $\alpha \geq 1$, and $p \geq 0$. Then, there is a family of instances over a $p$-stable semiring with $\Theta(n)$ nodes and $m=\Theta(n^{1+1/\alpha})$ edges such that, for $p \leq \left(1/\phi\right)^{n^{1/\alpha}/2\alpha}-1$ (where $\phi\in(0,1)$ is some constant) any correct edge schedule must have length
$|\sigma|=\Omega(nm(p+1)).$
\end{restatable}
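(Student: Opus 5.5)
We will combine the sparse fractal construction behind the lemma on sparse lower bounds with a new ``counting'' semiring that forces every Hamiltonian-type path to be covered $p+1$ times rather than once. The plan has three steps: build the semiring, show that correctness forces $(p+1)$-fold coverage, and count subsequences to lower bound $|\sigma|$.

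\textbf{Semiring.} Take the product semiring $\mathbf{S}=\mathbf{N}_p^{k}$. Here $\mathbf{N}_p=(\{0,1,\dots,p+1\},\oplus,\otimes,0,1)$ is the truncated natural numbers, with $a\oplus b=\min(a+b,p+1)$ and $a\otimes b=\min(ab,p+1)$. This is commutative and $p$-stable: for $u\ge 1$ we get $u^{(p)}=p+1$, and for $u=0$ we get $u^{(p)}=1$. Products of $p$-stable semirings are $p$-stable, since stability is checked coordinatewise. Coordinate $c$ corresponds to one ``long path'' $P$ from the family $\mathcal{P}$ used in the sparse construction. Edge labels have $1$ in coordinate $c(P)$ if the edge lies on $P$ and $0$ otherwise, and $\mathbf b$ is defined the same way.

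\textbf{Correctness forces $(p+1)$-fold coverage.} By Lemma~\ref{lemma:ValueSumTrees}, coordinate $c(P)$ of $\mathbf X_v$ equals the number of covered walks ending at $v$ that use only edges of $P$ and start at its source, truncated at $p+1$. A simple path admits exactly one such walk. So coordinate $c(P)$ counts how many \emph{distinct occurrences} of $P$ the schedule $\sigma$ covers. Lemma~\ref{lemma:ValueSumTrees} sums over covered walks as sequences of positions in $\sigma$, and I need to check that it really counts embeddings and not sets. Its inductive proof supports this: each relaxation adds the new walks ending with that position. The true fixpoint value is $1$ per path, so this naive version forces only one cover. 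To force $p+1$ covers, I will label each edge on $P$ with value $1$ in $c(P)$ and set the source $\mathbf b$ entry to $1$, while the correct answer in the fixpoint semiring is $p+1$. The fixpoint must count repeated walks, and these arise only if $P$ contains a cycle. To get repeats, I will replace $\mathbf b$ with its value on an added self-loop at the source, or I will choose the instance so that the fixpoint coordinate is $p+1$ but each distinct embedding contributes $1$. This forces at least $p+1$ distinct embeddings of $P$ as a subsequence of $\sigma$. \textbf{This step is the main obstacle.} I would first try this: make the fixpoint value $p+1$ by giving the source vertex a self-loop of value $1$ in every coordinate, so that walks wrapping the loop $j$ times each contribute. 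Then argue that reaching value $p+1$ requires $p+1$ distinct covered walk-embeddings, which cost at least as much as $p+1$ embeddings of $P$ itself.

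\textbf{Counting.} Having reduced the problem to showing that every path in $\mathcal{P}$ has at least $p+1$ distinct subsequence embeddings in $\sigma$, we repeat the argument of Lemma~\ref{lemma:AbsorptiveN3LB}. Let $\ell$ be the path length. A path is determined by a fixed subset of about $\ell/2$ of its edges, with the exact count coming from the fractal construction and depending on $\alpha$. A distinct embedding yields a distinct choice of $\lfloor \ell/2\rfloor$ positions, possibly with bounded multiplicity loss. This gives
\[
\binom{|\sigma|}{\lfloor \ell/2\rfloor}\ \ge\ (p+1)\,|\mathcal P|\cdot \phi^{\ell}
\]
for a constant $\phi$ absorbing the multiplicity loss. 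Taking the $\lfloor\ell/2\rfloor$-th root, the previous $\Omega(mn)$ bound gets multiplied by a factor $(p+1)^{2/\ell}$, which is not linear in $p$. So I would instead split $\sigma$ into $p+1$ blocks, using a greedy leftmost-embedding argument: leftmost embeddings can be chosen disjoint, or nearly so, with positions strictly increasing. Each block must then cover a constant fraction of $\mathcal P$, and applying the $\theta$-fraction version of the sparse lemma to each block gives $\Omega(mn)$ per block. The hypothesis $p\le(1/\phi)^{n^{1/\alpha}/2\alpha}-1$ is what keeps $\theta^{2/\ell}$ bounded below by a constant after the split. Summing over the blocks gives $|\sigma|=\Omega(nm(p+1))$.
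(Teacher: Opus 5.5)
Your third step is essentially the paper's argument, but your second step, which you flagged as the main obstacle, has a genuine gap, and the self-loop repair does not close it. Lemma~\ref{lemma:ValueSumTrees} sums over the \emph{set} of distinct covered walks, not over embeddings. A relaxation of $(i,j)$ overwrites that edge's contribution with the newest $\mathbf X_i$; it does not add once per occurrence. If it counted embeddings, a schedule embedding a simple path twice would output $2\neq 1$ in your truncated counting semiring.

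With a self-loop $\lambda$ at the source, the $p+1$ walks $\lambda^jP$ for $0\le j\le p$ are all covered by relaxing $\lambda$ $p$ times and then embedding $P$ once. That is an additive $O(pn)$ overhead, not a factor $p+1$.

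Worse, the target you reduce to, namely $p+1$ distinct subsequence embeddings of each $P$, is itself too weak. Take a schedule covering every path once and append $p$ copies of the full edge list. Every path then has $p+1$ distinct embeddings, differing only in where its last edge is matched, at total length $O(nm+pm)$. Distinct embeddings can share almost all positions, so no leftmost or greedy argument makes them disjoint.

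What you need is that $P$ repeated $p+1$ times, i.e.\ $P^{(p+1)}$ as in Lemma~\ref{lemma:InductiveScheduleAppend}, is a subsequence of $\sigma$, so the copies are consecutive and position-disjoint. The paper forces this as follows:
\begin{itemize}
\item It uses only Hamiltonian paths $P\colon u\to v$ for which $(v,u)$ is an edge.
\item It puts $c(P)$ on every edge of the closed cycle $\bar P$, so the only walks alive in that coordinate are $\bar P^{j}P$.
\item It works in $(\mathbb S_p)^{h\cdot N}$, with coordinates $(c(P),\ell)$ recording visits to $\ell$, so correctness forces $(P,(v,u))^{(p)}P$ to be covered (Lemma~\ref{lem:loop:semiring}).
\end{itemize}
Your own truncated counting semiring would also work once the cycle is $\bar P$ rather than a self-loop. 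The covered walks $\bar P^jP$ are downward closed in $j$, so reaching $p+1$ forces $\bar P^pP$, and hence $P^{(p+1)}$, to be covered.

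Your third step is then Lemma~\ref{lemma:InductiveScheduleAppend} with $q=p+1$, $\theta_q=1$, $H=\mathcal P_{k,\alpha}$ and $L(\theta)=\psi_\alpha\theta^{2\alpha/k}k^{2\alpha+1}$ from Lemma~\ref{lem:helper:rclique}. The block argument is not that each block covers a constant fraction. It runs by contradiction:
\begin{itemize}
\item Suppose $|\sigma|\le (p+1)L(1/(p+1))$, and cut $\sigma$ into at most $p+1$ blocks of length at most $L(1/(p+1))$.
\item Each block covers fewer than a $1/(p+1)$ fraction of $H$, so together they cover less than all of $H$.
\item But the at most $p$ cut points can break at most $p$ of the $p+1$ consecutive copies, so every path lies inside a single block. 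This is the contradiction.
\end{itemize}
Finally, the hypothesis on $p$ gives $(1/(p+1))^{2\alpha/k}\ge\phi$, which yields $\Omega(nm(p+1))$.
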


The missing proofs of both statements can be found in~\apref{\autoref{sec:lowerbound:app}}{the full version of this paper}.

\section{Conclusion}
\label{sec:conclusion}

We believe there is much more to explore about evaluating $\Name$ programs. 
An exact tight iteration bound of $(p+1)n-1$ seems achievable, and is only a factor of $(2+\epsilon)$ away.
Furthermore, beyond fixpoint iteration, bounding the runtime and number of iterations of Newton-like methods is
an open problem~\cite{DBLP:journals/jacm/EsparzaKL10}. 
Lastly, operations beyond semiring operations are useful: 
for example, transitive closure can be solved with fast matrix multiplication. 
As another example, for semirings where the Kleene star
operator is defined ({\em closed} semirings), faster
algorithms like Lehmann's algorithm~\cite{DBLP:journals/tcs/Lehmann77} can be used. 
Finally, beyond $p$-stable semirings, non-stable semirings such as the reals, where convergence can be extended to approximate convergence, are of interest. 
Such semirings are useful in numerical optimization such as gradient descent algorithms.

\bibliographystyle{ACM-Reference-Format}
\bibliography{references}

\appendix
\section{Missing Proofs from Preliminaries}

\begin{proof}[Proof of Observation~\ref{obs:coef-limit}]
Indeed, we have $(p+2) u = (1 \oplus 1 \oplus 1^2 \oplus \dots \oplus 1^{p+1}) u = (1 \oplus 1 \oplus 1^2 \oplus \dots \oplus 1^{p}) u  = (p+1)u$, where the second equality holds because $1$ is a $p$-stable element in $\bm S$.
\end{proof}

\section{Absorption Replacement Process}
\label{appendix:replacementprocess}

This section contains the proofs for Section \ref{sec:absorption} on the absorption
replacement process.

\squeezelemma*
\begin{proof}
    \begin{align*}
        x  & = x \oplus y \oplus z   = x \oplus (p+1) y \oplus (p+1) z &[ x = x \oplus y \oplus z, \text{ applied $p+1$ times}]\\
        & = x \oplus (p+2) y \oplus (p+1) z &  [\text{ Observation } \ref{obs:coef-limit} ]\\
        & = x \oplus (p+1) y \oplus (p+1) z  \oplus y   = x  \oplus y&   \qedhere
    \end{align*}
\end{proof}

\nonzerocorollary*
\begin{proof}
Assume to reach a contradiction that $z \oplus y = 0$. Then applying the Squeeze Lemma with
$x=0$, we can conclude that $y$ must be zero, which is a contradiction.
\end{proof}

\parallelabsorptionlemma*
\begin{proof}
\begin{align*}
  \bigoplus_{i=0}^p \bigoplus_{\bm k : \norm{\bm k}{1}=i} \binom{i}{k_1,\dots,k_r} \bm u^{\bm k}
  &= \bigoplus_{i=0}^p\left(\bigoplus_{j=1}^ru_j\right)^i &[\text{Multinomial Theorem}] \\
  &= \bigoplus_{i=0}^p\left(\bigoplus_{j=1}^ru_j\right)^i + \bigoplus_{i=p+1}^{rp} \left(\bigoplus_{j=1}^ru_j\right)^i & [p\text{-stability of } \bigoplus_{j=1}^ru_j] \\
  &=  \bigoplus_{i=0}^p\left(\bigoplus_{j=1}^ru_j\right)^i  \oplus x & \text{[Squeeze Lemma]}\\
  &= \bigoplus_{i=0}^p \bigoplus_{\bm k : \norm{\bm k}{1}=i} \binom{i}{k_1,\dots,k_r} \bm u^{\bm k} \oplus x &[\text{Multinomial Theorem}]
\end{align*}
Note that the Squeeze Lemma is applicable when it is used because $x$ is a term of $\bigoplus_{i=p+1}^{rp} \left(\bigoplus_{j=1}^ru_j\right)^i$,
since $x$ has $r$ distinct multicands, each $m_i$ is an integer in the range $[0, p]$,
  and $\sum_{i=0}^r m_i \ge p+1$.
\end{proof}

\absorptiontransitivity*
\begin{proof}
\begin{align*}
    \oplus(C) &=  \oplus \left( (A  \multisetminus \{a\})  \uplus B\right)  & \text{[Definition of } C]\\
&=  \oplus \left( ((A  \multisetminus \{a\})  \uplus B) \multisetminus B\right) \oplus \oplus(B) & [a \notin B]\\
&=  \oplus \left( ((A  \multisetminus \{a\})  \uplus B) \multisetminus B\right) \oplus a \oplus \oplus(B)  &\mbox{[$a$ is absorbed by $B$]} \\
&=  \oplus \left( (A  \uplus B) \multisetminus B\right) \oplus  \oplus(B)  &\mbox{[$a \not \in B$]} \\
&=  \oplus ( A  \uplus B) & \mbox{[$((A \uplus B) \multisetminus B) \uplus B = B$]}
\end{align*}

Thus, using the fact that $s$ is absorbed by $A$, and therefore by $A  \uplus B$ as well, we have,
\[
s \oplus  \oplus(C)  =  s \oplus  \oplus ( A  \uplus B)
=   \oplus ( A  \uplus B)
= \oplus(C)
\]
\end{proof}

\section{Background and Intuition for the Linear Case}

The purpose of this section is to give the backstory that
led to our new upper bound analysis of linear $\Name$ convergence. 
The $O((p+1) n^3)$ upper bound analysis from \cite{DatalogoLinear} adopted the following
strategy:
\begin{itemize}
\item Consider a walk $W$ of length at least $(p+2)  n^3$ hops.
    \item Then $W$ is expressed as $W = W' \otimes C^{p+1}$, where $C$ is some subgraph of
    $W$.
    \item Then it is observed that $W_i$, which is equal to $ W$ minus $i$ copies of the
collection $C$ of edges, is a shorter walk than $W$ for all $i \in [1, p+1]$.
    \item Finally, by the $p$-stability of $C$,  it is concluded that $W$ is absorbed by the
    set $A = \{ W_i \mid i \in [1, p+1] \}$ of shorter walks, that is $W \oplus
    \bigoplus_{i=1}^{p+1} W_i = \bigoplus_{i=1}^{p+1} W_i$.
\end{itemize}
The only reasonable candidate for the subgraph $C$ is a closed walk, or more simply a
simple cycle, as $C$ needs to have the property that deleting a copy of $C$ from $W$ results
in a valid walk. To find such a simple cycle $C$, \cite{DatalogoLinear} creates a particular
path-loop decomposition $(P, \mathcal{L})$ of $W$, where $P$ is a simple path, $\mathcal L $ is a
multiset of loops (simple cycles) such that $W = P + \mathcal L$. This path-loop
decomposition has the property that $\mathcal L$ contains $p+2$ copies of some  simple cycle
$C$.

Then \cite{DatalogoLinear} attempted to give a matching $\Omega( (p+1) n^3)$ lower bound
using a collection $G =\{G_{p,n}\}$ and a semiring in which multiplication was intuitively
``maximum''. The lower bound given in \cite{DatalogoLinear} is incorrect.\footnote{This has
been confirmed by the authors of \cite{DatalogoLinear}.} Convergence will occur in
$O((p+1)n)$ steps, for every commutative $p$-stable semiring $\mathbf{S}$ on these graphs.
We will explain (a simplified version of) this $O((p+1)n)$ bound on this key example as it
nicely illustrates the feature of stable semirings that is the basis for our analysis.

\paragraph{Key Example:}
The directed graph $G_{n,p}=(V, E)$ has vertex set $V =[n]$, has  a directed base edge $(i,
 i+1)$ labeled with the multiplicative identity of $\mathbf{S}$, for each $i \in [1, n-1]$,
 and  an edge $(i, j)$ associated with a distinct element $x_{i, j} \ne 1 $ of the domain of $\mathbf{S}$, for $i \in [2n/3, n]$ and $j \in [1, n/3]$. An example with $n=9$
 is shown in Figure \ref{fig:incorrectlowerbound}.

\begin{figure}[H]
    \centering
    \begin{tikzpicture}[node distance=1.5cm,
        every node/.style={circle, draw,minimum size=4mm, inner sep=1pt},
        src/.style={fill=blue!22},
        mid/.style={fill=orange!22},
        dst/.style={fill=green!24},
        ->,
        >=Stealth,  
        semithick,
        scale=0.6,
        ]

        \foreach \i in {1,...,9} {
            \ifnum\i<4
                \node[src] (n\i) at (\i*1.5, 0) {\i};
            \else\ifnum\i<7
                \node[mid] (n\i) at (\i*1.5, 0) {\i};
            \else
                \node[dst] (n\i) at (\i*1.5, 0) {\i};
            \fi\fi
        }

        \foreach \i in {1,...,8} {
            \pgfmathtruncatemacro{\j}{\i + 1}
            \draw (n\i) -- (n\j);
        }

        \foreach \i in {7, 8, 9} {
            \foreach \j in {1, 2, 3} {
                \draw[bend right=5+5*\i] (n\i) to (n\j);
            }
        }

    \end{tikzpicture}

    \caption{Example of the purported lower bound instance from \cite{DatalogoLinear} }
    \label{fig:incorrectlowerbound}
\end{figure}

Let us for simplicity assume that  $p=1$, so for all $x$ in the domain of the semiring
$\bm S$, it is the case that $1 \oplus x \oplus x^2 = 1 \oplus x$, so $x^2$ is absorbed by
$1 \oplus x$. To show that a walk $W$ is absorbed by shorter walks using 1-stability, it
would seem that one would minimally need one multiplier of $W$ to be a square, that is of
the form $x^2$ for some $x$. Yet there are walks of length $\Omega(n^3)$ that contain no
square term, which is presumably why the authors of \cite{DatalogoLinear} believed that the
right bound was cubic in $n$. To see how we can circumvent this issue, let $W$ be a closed
walk starting and ending at vertex $1$ in $G_{n,p}$, consisting of two simple cycles $C_i$
and $C_j$, each containing a single edge with semiring values not equal to 1. Denote these
semiring values as $x_i$ and $x_j$ respectively. Then  $W =  C_i \otimes C_j  =  x_i \otimes
x_j$. But $( C_i \oplus C_j)^2$ is absorbed by $1 \oplus (C_i \oplus C_j) $ by the
$1$-stability of $C_i \oplus C_j$. Thus,  $W =  C_i \otimes C_j $  is absorbed by $1 \oplus
(C_i \oplus C_j) $ as $W =  C_i \otimes C_j$ is merely one term of $( C_i \oplus C_j)^2$. We
can conclude that $W$ is absorbed by shorter walks by noting that for each term of $1 \oplus
(C_i \oplus C_j) $ there is a distinct shorter walk that has product equal to this term. In
short, terms with at least $p+1$ multipliers can be collectively absorbed by exponentially
many ``smaller'' terms. In particular, the product  $\bigotimes_{i=1}^{p+1} x_i$ is absorbed
by the $\frac{(p+1)^{p+1}-1}{p}$ terms in $\bigoplus_{j=0}^p \left( \bigoplus_{i=1}^{p+1} x_i
\right)^j$.

\paragraph{Extension to a General Upper Bound:} 
The key hurdle is how to identify closed walks for a general instance
that serve  the role of the cycles with a single edge not equal to the multiplicative identity in the type of instance
shown in Figure~\ref{fig:incorrectlowerbound}.
The eventual insight was that subtrees in the loop tree 
that are
of the same height,
and that share a common vertex,
serve the roles of the simple cycles in Figure~\ref{fig:incorrectlowerbound}.

\iffullpaper
\section{Convergence Analysis for the Linear Case}

In this section, we complete the analysis of our linear absorption replacement process. We
will show that, starting from a walk $W$  with path-loop decomposition $(P, {\mathcal L})$,
the absorption replacement process terminates with a set  $A$ of walks that are each of
length at most $(2p+1)n$. To accomplish this we need to make some observations about the
absorption replacement process.

\begin{observation}
Let $W'$ be a walk, with associated path-loop decomposition $(P, {\mathcal L}')$, that was an
element of the set $A$ at some point in the absorption replacement process. Then if $L
\in {\mathcal L}'$ then $L \in {\mathcal L}$.
\end{observation}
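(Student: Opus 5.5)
The statement says that every loop carried by a walk that ever enters $A$ was already a loop in the original decomposition $\mathcal L$. I would prove it by induction on the number of replacement steps performed. The induction hypothesis is that every walk currently in $A$ has an associated loop decomposition $\mathcal L'$ whose underlying set of loops is contained in that of $\mathcal L$.

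The base case is immediate. Initially $A$ contains only $W$, and its associated decomposition is $\mathcal L$ itself.

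For the inductive step, fix a walk $W'$ in $A$ with decomposition $(P,\mathcal L')$ satisfying the hypothesis. I would check, case by case, that every walk in the set $B$ built from $W'$ has a decomposition using only loops of $\mathcal L'$.
\begin{enumerate}
\item In the \textbf{first case}, each $W_i$ is assigned the decomposition $\mathcal L'$ with $i$ copies of $L$ removed. This is a sub-multiset of $\mathcal L'$. Together with $P$, its edges are exactly those of $W_i$, so it is a valid path-loop decomposition.
\item In the \textbf{second case}, the decomposition of $W(\pi)$ is obtained in two moves. First we delete from $\mathcal L'$ all loops in the subtrees $\tau(L)$ for $L\in\mathcal L'_d(v)$. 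Then we add the loops of $\tau(L_{\pi(1)}),\dots,\tau(L_{\pi(m)})$. Every loop we add lies in some subtree of the loop tree $\tau$ of $(P,\mathcal L')$, so it is a vertex of $\tau$ and hence an element of $\mathcal L'$. Multiplicities may grow, but the underlying set does not change.
\item The \textbf{third case} is the same argument with ${\mathcal L}'_d(v)\multisetminus\{L''\}$ in place of ${\mathcal L}'_d(v)$.
\item The boundary cases $p=0$ only delete loops, so they are covered as well.
\end{enumerate}
In every case each loop of the new decomposition lies in $\mathcal L'$, and hence in $\mathcal L$ by the induction hypothesis.

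The one point that needs real care is that the set attached to $W(\pi)$ is a genuine path-loop decomposition of $W(\pi)$. The containment claim only makes sense once this holds. I would argue it directly from the construction of the canonical walks. $C(\tau',s)$ traverses $P$ together with exactly the loops of $\tau'$. Each inserted $C(\tau(L_{\pi(i)}),v)$ is a closed walk at $v$ that traverses exactly the loops of $\tau(L_{\pi(i)})$. Since $v$ lies on $L_{\pi(i)}$ and also lies in a node of $\tau'$ (a depth-$(d-1)$ loop or path in the second case, or $L''$ in the third), $v$ really is visited by $C(\tau',s)$. So the splicing yields a walk from $s$ to $t$ whose edge multiset is $P$ plus the stated multiset of loops.
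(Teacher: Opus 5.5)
Your proof is correct and is essentially the paper's argument: an induction over replacement steps resting on the fact that no case ever introduces a loop absent from the current decomposition, here spelled out case by case, plus an extra check (which the paper does not make) that the spliced walks genuinely carry the stated decompositions. One small wording slip: in the second and third cases the underlying set of loops can shrink (e.g.\ when $m=0$), so ``does not change'' should read ``does not grow,'' which is all the argument needs.
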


\begin{proof}
No case in the absorption replacement creates a loop in a path-loop decomposition that was
not present in the prior path-loop decomposition.
\end{proof}

\begin{observation}
\label{obs:maxcopies}
Let $W'$ be a walk, with associated path-loop decomposition $(P, {\mathcal L}')$,
that was an element of the set $A$ at some point in the absorption replacement process.
Then the cardinality of ${\mathcal L}'$ is at most $ p (p+1) $ times the
cardinality of $\mathcal L$.
\end{observation}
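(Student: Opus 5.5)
The plan is to prove a per-loop multiplicity invariant and then count distinct loops. By the preceding observation, every loop occurring in $\mathcal L'$ already occurs in $\mathcal L$, so $\mathcal L'$ has at most $|\mathcal L|$ distinct loops. It therefore suffices to show that every loop has multiplicity at most $\max(p+1, p(p+1))$ in the loop decomposition of any walk ever placed in $A$. For $p\ge 1$ this maximum is $p(p+1)$. The claimed bound $|\mathcal L'| \le p(p+1)|\mathcal L|$ then follows by summing over the distinct loops.

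The invariant is proved by induction over the steps of the absorption replacement process, starting from the observation that the initial walk trivially satisfies it with respect to its own $\mathcal L$. Let $W \in A$ be the walk being replaced and let $W' \in B$ be a walk produced from it.
\begin{itemize}
\item \emph{First case.} $W'$ is $W$ minus $i$ copies of a loop. No multiplicity increases, so the invariant is inherited.
\item \emph{Second and third cases.} These apply only when the first case does not, so every loop has multiplicity at most $p+1$ in the decomposition of $W$; this stronger bound on $W$ is what drives the argument. Fix a loop $K$ and let $c_{\mathrm{out}}$ be its number of occurrences outside the removed subtrees $\tau(L)$, $L \in \mathcal L_d(v)$ (resp.\ $\mathcal L_d(v)\multisetminus\{L''\}$). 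Let $c_{\max}$ be the maximum number of occurrences of $K$ inside a single removed subtree. Then $c_{\mathrm{out}} + c_{\max} \le p+1$.
\item The decomposition of $W'$ keeps the $c_{\mathrm{out}}$ outside copies and adds the loops of $m \le p$ subtrees $\tau(L_{\pi(1)}),\dots,\tau(L_{\pi(m)})$. Hence $K$ has multiplicity at most $c_{\mathrm{out}} + p\,c_{\max}$ in $W'$.
\item This bound is linear in $c_{\max}$ under the constraint $c_{\mathrm{out}} + c_{\max} \le p+1$, so it is maximized at $c_{\max}=p+1$, $c_{\mathrm{out}}=0$, giving $p(p+1)$ when $p \ge 1$, or at $c_{\max}=0$, giving $p+1$.
\end{itemize}

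The main obstacle is the priority order of the cases. The argument needs that the second and third cases are invoked only on walks where every loop has multiplicity at most $p+1$. The construction states this explicitly ("For the second and third cases, we can assume that no loop has multiplicity greater than $p+1$"), so I will cite it. I also need that the multiplicity count inside a subtree is taken with the multiset semantics used in the formula for the new loop decomposition. I would check carefully that $\uplus$ over the removed subtrees there really removes all copies lying in those subtrees, so that the outside count is exactly $c_{\mathrm{out}}$.

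In the degenerate case $p=0$ the second and third cases only delete subtrees, and the first case deletes loops, so $|\mathcal L'| \le |\mathcal L|$. I would state this separately, reading the observation's bound as $\max(1, p(p+1))\,|\mathcal L|$ in that boundary case, since the literal bound $p(p+1)|\mathcal L| = 0$ would fail whenever $\mathcal L$ is nonempty.
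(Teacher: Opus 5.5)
Your route is the paper's own: every loop of $\mathcal L'$ comes from $\mathcal L$, multiplicities are at most $p+1$ whenever the second or third case fires, and those cases replicate by a factor of at most $p$. Your $c_{\mathrm{out}}+c_{\max}\le p+1$ accounting is what shows that the paper's terse ``each iteration multiplies by at most $p$'' does not compound. Your $p=0$ remark is also right: then $W$ itself lies in $A$, and $|\mathcal L|>0=p(p+1)|\mathcal L|$ whenever $\mathcal L\neq\emptyset$.

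However, the invariant you induct on fails at the base case, because nothing bounds the multiplicities in the original decomposition. Take $p=1$ and let $\mathcal L$ contain ten copies of a loop $L$. The initial walk has multiplicity $10>\max(p+1,p(p+1))=2$, and the two walks the first case produces from it have multiplicities $9$ and $8$. So the claim that every walk ever placed in $A$ has all multiplicities at most $\max(p+1,p(p+1))$ is false, and ``the initial walk trivially satisfies it'' does not hold. Consequently, the final count (number of distinct loops times a uniform cap) is not justified as written.

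The repair is to relativize the invariant. Write $\mu_K$ for the multiplicity of $K$ in $\mathcal L$, and prove that for $p\ge 1$ every walk in $A$ has multiplicity at most $\max(\mu_K,\,p(p+1))$ for each loop $K$. Your induction then goes through unchanged. The first case never increases a multiplicity. The second and third cases fire only when all multiplicities are at most $p+1$, so your computation bounds the result by $p(p+1)$ regardless of history. The summation needs one extra inequality, $\max(a,b)\le ab$ for integers $a,b\ge 1$:
\[
|\mathcal L'| \le \sum_{K} \max\bigl(\mu_K,\,p(p+1)\bigr) \le p(p+1)\sum_{K}\mu_K = p(p+1)\,|\mathcal L|,
\]
with $K$ ranging over the distinct loops of $\mathcal L$.

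On your worry about $\uplus$: the paper defines it as a max-union. But the new loop decomposition matches the walk $W(\pi)$ only if $\uplus$ there is read as multiset sum, which is the reading your count uses.
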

\begin{proof}
Every loop in $\mathcal L'$ is contained in $\mathcal L$. If any loop has
multiplicity at least $p+2$ in $\mathcal L$, the first case of the absorption
replacement process will reduce the multiplicity of cycles until all cycles have
multiplicity at most $p+1$. Finally, each iteration of the absorption replacement
process multiplies the number of copies of any loop by at most $p$.
\end{proof}

\begin{observation}
Throughout the absorption replacement process the depth of a loop $L$ is monotonically
nondecreasing.
\end{observation}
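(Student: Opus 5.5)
The natural route is induction on the number of replacement steps. It suffices to show that a single step of the absorption replacement process, applied to a walk $W$ with path-loop decomposition $(P,\mathcal L)$ and producing $B$, satisfies the following: every $W'\in B$, with its associated decomposition $(P,\mathcal L')$, gives each loop $L\in\mathcal L'$ a depth at least its depth in $\mathcal L$. Observation~\ref{obs:GapPartition} and the earlier observation give that depth is a property of the decomposition, not of the particular loop tree, so we may compute it in any loop tree of $(P,\mathcal L')$. Since every loop of $\mathcal L'$ already lies in $\mathcal L$, the comparison is meaningful. I will treat the three cases of the construction separately.

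The key monotonicity fact is this. The depth of a loop $L$ is the length of a shortest chain $P=L_0,L_1,\dots,L_k=L$ in which consecutive members share a vertex (intersect) and every $L_i\in\mathcal L$. The depth is exactly the minimal such $k$, because the definition of $\mathcal L_d$ is a BFS layering of the intersection graph on $\{P\}\cup\mathcal L$ rooted at $P$. Every step of the process produces $\mathcal L'$ as a sub-multiset of $\mathcal L$ (as a set of distinct loops), apart from multiplicities. In the second and third cases, the added loops are copies of loops in subtrees that were deleted, so the underlying set of distinct loops only shrinks. In the first case, copies of $L$ are removed. If the same loop survives, the intersection graph on distinct loops of $\mathcal L'$ is an induced subgraph of the one for $\mathcal L$. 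Multiplicities do not matter for intersections, since copies of a loop share the same vertices.

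Shortest-path distance from $P$ in an induced subgraph is at least the distance in the full graph. Hence the depth of each surviving loop is nondecreasing. The main point to check is that every loop of $\mathcal L'$ is still connected to $P$ in the new intersection graph, so that its depth is defined; this follows because $W'$ is a valid walk. Concretely, in the second and third cases each retained subtree $\tau(L_{\pi(i)})$ is reattached at $v$, which still lies on a kept loop of depth $d-1$ (or on $L''$). In the first case, connectivity of $W$ minus copies of $L$ is guaranteed because at least one copy of $L$ remains. The one subtle point, and the one I expect to need care, is the first case: removing copies of $L$ must not disconnect anything. This holds because a copy of $L$ remains, since the multiplicity is at least $p+2>i$.
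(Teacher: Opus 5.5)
Your proposal is correct and takes essentially the same approach as the paper, which justifies the observation in one line: deleting or replicating loops cannot decrease any loop's depth. Your argument makes that line precise. You characterize depth as BFS distance from $P$ in the intersection graph, note that the set of distinct loops only shrinks so the new graph is an induced subgraph of the old one, and check that every surviving loop stays reachable from $P$.
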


\begin{proof}
This is because deleting or replicating loops from $\mathcal L$ can not decrease the depth of any
loop.
\end{proof}

We define a partial ordering $\prec$ on possible loop decompositions created during
the absorption replacement process.

\begin{definition}
Let $\mathcal L' $ and ${\mathcal L}''$
be loop decompositions. We say that ${\mathcal L' } \prec {\mathcal L}'' $
if there exists a $k$ such that for all $i < k$
it is the case that ${\mathcal L }'_i$ and  ${\mathcal L}''_i $
have the same cardinality, and
it is the case that the cardinality of ${\mathcal L }'_k$ is strictly
less than the cardinality of ${\mathcal L}''_k $.
\end{definition}

\begin{lemma}
\label{lem:downward}
Consider a walk $W'$ with associated loop decomposition $\mathcal L'$.
Let $B$ be the set of walks produced from $W'$ in one step of the
absorption replacement process. Let $W''$ be a walk in $B$ with
associated loop decomposition ${\mathcal L}''$. Then ${\mathcal L }'' \prec {\mathcal L}' $.
\end{lemma}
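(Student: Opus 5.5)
We argue by cases, following the three cases of the construction of $B$ from $W'$, and in each case exhibit the index $k$ witnessing ${\mathcal L}'' \prec {\mathcal L}'$. Two facts are used throughout. First, the depth of a loop is determined by the path $P$ and by the loops of smaller depth that it intersects. Second, deleting whole subtrees $\tau(L)$, or replicating them, never changes the depth of any loop outside those subtrees. The reason for the second fact is that the loops of depth $<d$ form a down-closed part of the loop tree, and the operations in the second and third cases only touch subtrees rooted at depth $d$. Hence for every $i<d$ we get ${\mathcal L}''_i = {\mathcal L}'_i$ as multisets, so in particular their cardinalities agree.

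\textbf{Second and third cases.} Let $d$ be the depth chosen in the construction. A copy of a subtree $\tau(L_{\pi(j)})$ is reinserted at the same vertex $v$. Its root $L_{\pi(j)}$ still intersects, at $v$, the same parent of depth $d-1$. Moreover, all loops of depth $<d-1$ are unchanged, so the root keeps depth exactly $d$, and its descendants keep their depths relative to it. The key point is that a duplicated child loop cannot acquire a smaller depth, since it meets exactly the same loops it met before.

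Now we count ${\mathcal L}''_d$. It consists of ${\mathcal L}'_d$ with the removed roots deleted, plus the $m \le p$ reinserted roots. In the second case we remove $|{\mathcal L}_d(v)| \ge p+1$ roots and add $m\le p$, so $|{\mathcal L}''_d| < |{\mathcal L}'_d|$. In the third case we remove $|{\mathcal L}_d(v)|-1 \ge p+1$ roots, keep $L''$, and add $m \le p$, which again gives a strict decrease. Thus $k=d$ witnesses the relation. For the boundary case $p=0$ the same count applies with $m=0$.

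\textbf{First case.} Here $W''$ is $W'$ with $i\ge 1$ copies of some loop $L$ removed. The issue is that the path-loop decomposition after removal must be chosen so that the order decreases. Let $d_0$ be the depth of $L$, and remove a copy of $L$ that is a leaf-most copy. Since $L$ has multiplicity $\ge p+2 \ge 2$, another copy of $L$ remains at the same depth. Any loop that was attached through the removed copy can be re-attached through the remaining copy, because it intersects the same vertices. So no other loop changes depth, and $|{\mathcal L}_{d_0}|$ drops by $i$ while the levels $<d_0$ are unchanged. Taking $k=d_0$ finishes this case. Note that here the walk $W_i$ is built from the edge multiset, and connectivity is ensured by the surviving copy of $L$, via Observation~\ref{obs:walkcharactorization}.

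\textbf{Main obstacle.} The hardest step is the depth-invariance claim in the second and third cases. We must show that reinserting subtrees at $v$ leaves every remaining loop of depth $\le d$ at its old depth, and that no loop at depth $<d$ gains or loses members. We will prove this by induction on depth: a loop's depth equals one plus the minimum depth of the loops it intersects, and for depths $<d$ the family of intersecting loops is literally unchanged.
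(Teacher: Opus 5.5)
Your proposal is correct and follows the same route as the paper's proof. You use the same three-case split, with witnessing index $k$ equal to the depth of $L$ in the first case and to the chosen $d$ in the second and third cases, where the levels below $d$ are untouched and $|\mathcal L_d|$ strictly drops because at least $p+1$ roots are removed and at most $p$ are re-added. You also spell out the depth-invariance argument that the paper leaves implicit. One small slip: a reinserted root need not meet its depth-$(d-1)$ parent \emph{at} $v$ (in the third case $v\in T_d$, so no depth-$(d-1)$ loop contains $v$), but it still meets some retained depth-$(d-1)$ loop, and that is all your argument uses.
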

\begin{proof}
In the first case of the absorption replacement process, the lemma follows from the fact that the multiplicity of $L$ strictly decreases
and the multiplicity of every other loop doesn't change.
In the second case and the third case of the absorption replacement process the lemma follows from the fact that the cardinality of
$\mathcal L_d$ decreases, and the cardinality of each $\mathcal L_i$, for $i < d$ does not change.
\end{proof}

\begin{lemma}
This absorption replacement process terminates.
\end{lemma}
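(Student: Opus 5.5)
The plan is a well-founded-ordering argument. The process can be viewed as a tree of replacements: the root is the initial walk $W$, and each walk $W'$ that is replaced has as children the walks in the set $B$ produced from it. The process terminates exactly when this replacement tree is finite. By construction each $B$ is finite; it has at most $\sum_{m=0}^{p} |\mathcal L_d(v)|^m$ elements. So by König's lemma it suffices to rule out an infinite chain $W = W^{(0)}, W^{(1)}, W^{(2)}, \ldots$ in which each $W^{(j+1)}$ is produced from $W^{(j)}$ in one step.

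Along such a chain, Lemma~\ref{lem:downward} gives $\mathcal L^{(j+1)} \prec \mathcal L^{(j)}$ for the associated loop decompositions. The task is therefore to show that $\prec$, restricted to the loop decompositions that can actually arise, admits no infinite descending chain.

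To do this, I encode each loop decomposition $\mathcal L'$ by the vector $(|\mathcal L'_1|, |\mathcal L'_2|, \ldots, |\mathcal L'_D|)$. By definition, $\prec$ is exactly the strict lexicographic order on these vectors. Two facts make the set of vectors finite.

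\begin{itemize}
\item Depths are bounded. Every loop in $\mathcal L'$ is a loop of the original $\mathcal L$ (the first observation above), and the depths are determined by intersection structure with at most $|\mathcal L'|$ loops. More simply, the number of nonempty levels is at most the number of loops, so $D \le |\mathcal L'|$.
\item Cardinalities are bounded. By Observation~\ref{obs:maxcopies}, $|\mathcal L'| \le p(p+1)|\mathcal L|$, so each coordinate is bounded by $N := p(p+1)|\mathcal L|$, and $D \le N$ as well.
\end{itemize}

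Padding with zeros, every vector lies in $\{0,\dots,N\}^N$, which is a finite set. Lexicographic order on a finite set is a strict total order and hence well-founded, so the chain must be finite. (Even without the length bound, lexicographic order on vectors of bounded length over $\mathbb N$ is well-founded, so the bound on $D$ is only a convenience.)

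The main obstacle is the uniform bound of the second fact. Observation~\ref{obs:maxcopies} is stated with a somewhat informal argument ("each iteration multiplies copies by at most $p$"), and it is what keeps coordinates from growing unboundedly across repeated second/third-case steps. If that bound were doubtful, I would instead rely only on the bounded vector length $D \le n$ and use well-foundedness of lex order on $\mathbb N^{n}$. The bound $D \le n$ holds because each level $T_d$ with $d\ge 1$ is nonempty whenever $\mathcal L_d$ is nonempty: by Observation~\ref{obs:GapPartition} a loop at depth $d$ not contained in $T_{d-1}$ contributes new vertices, and otherwise it would have depth $\le d-1$. The $T_d$ partition $V$, so there are at most $n$ levels.

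Since lex order on $\mathbb N^n$ is well-founded (by induction on the first coordinate), no infinite $\prec$-descending chain exists. The replacement tree is therefore finite and the process terminates.
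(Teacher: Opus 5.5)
Your argument is correct and is essentially the paper's proof. The paper also forms the graph of replacements, uses Lemma~\ref{lem:downward} to get a strict lexicographic decrease of $(|\mathcal L_1|,|\mathcal L_2|,\dots)$ along every chain, and bounds the coordinates via Observation~\ref{obs:maxcopies} and the number of levels (it bounds the levels by $n$ rather than by $|\mathcal L'|$); your finite-branching/K\"onig step is the part the paper leaves implicit.

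Two small repairs are needed in your fallback. First, it is not true that $T_d\neq\emptyset$ whenever $\mathcal L_d\neq\emptyset$, since a depth-$d$ loop can lie entirely inside $T_{d-1}$. What does hold is that $\mathcal L_{d+1}\neq\emptyset$ forces $T_d\neq\emptyset$: the vertex a depth-$(d+1)$ loop shares with its depth-$d$ parent occurs at no smaller depth. This still gives $D\le n$. Second, in your parenthetical it is the bound on the coordinate values, not the bound on $D$, that is dispensable.
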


\begin{proof}
To aid in our analysis consider the following directed graph $T$.
The nodes of $T$ are the walks in $A$ at some point in the absorption replacement process.
There is a directed edge $(W', W'')$ in $T$ if $W''$ is part of the set $B$ that
is produced from one step of the absorption replacement process on $W'$.
Let $Q$ be a directed walk in $T$ that starts at the original walk $W$ considered
in our absorption replacement process.
Consider the loop cardinality sequence $|\mathcal L_1|, |\mathcal L_2|, \ldots$ of the cardinalities of
the sequence of walks in $Q$. By Lemma \ref{lem:downward}
the partial order on the loop decomposition has to strictly decrease as one traverses $Q$.
Thus $Q$ must be a simple path. Note that by  Observation \ref{obs:maxcopies} for all nodes in
$T$ it is the case that $|\mathcal L_i| \le p(p+1) |\mathcal L|$.
Thus as the maximum  depth in a loop tree is at most $n$, the length of $Q$  is at most
$ (p(p+1) |\mathcal L| )^n$.
\end{proof}

\begin{lemma}
\label{lem:linear-bound}
Let $A$ be the final collection of walks produced by the absorption replacement
process. Let $W'$ be a walk in $A$ with associated loop decomposition $\mathcal L'$.
Then the length of $W'$ is at most $(2p+1) n$.
\end{lemma}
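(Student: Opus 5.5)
We argue by contradiction, using the fact that a walk in the final collection $A$ is one to which none of the three cases of the linear absorption replacement process applies. The process only stops on a walk once that walk has at most $(2p+1)n$ edges; equivalently, a walk is replaced only if it is longer than $(2p+1)n$. So it suffices to show the following: if no case applies to $W'$, then $W'$ has at most $(2p+1)n$ edges. This is the contrapositive of the remark in the construction that ``such a $d$ must exist'', and our task is to make that remark rigorous.

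\textbf{Setup.} Suppose no case applies to $W'$, with path-loop decomposition $(P,\mathcal L')$ and loop tree $\tau$. Then for every $d\ge 1$ and every vertex $v$ the following bounds hold.
\begin{enumerate}
\item If $v\in T_{d-1}$, then $|\mathcal L'_d(v)|\le p$.
\item If $v\in T_d$, then $|\mathcal L'_d(v)|\le p+1$.
\end{enumerate}
The vertex level sets $T_0,T_1,\dots$ partition the vertices that occur in $W'$. By Observation~\ref{obs:GapPartition}, a loop of depth $d$ contains only vertices of $T_{d-1}\cup T_d$. Hence a vertex $v\in T_d$ can lie only on the path $P$ (if $d=0$), on loops of depth $d$, or on loops of depth $d+1$.

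\textbf{Counting visits.} Count occurrences of $v$ over all components of the decomposition, where each loop containing $v$ contributes one occurrence because a loop is a simple cycle.
\begin{enumerate}
\item For $d=0$, the simple path $P$ visits $v$ once, and loops of depth $1$ contribute at most $p$ occurrences, for a total of at most $p+1$.
\item For $d\ge 1$, loops of depth $d$ contribute at most $p+1$ occurrences and loops of depth $d+1$ contribute at most $p$, for a total of at most $2p+1$.
\end{enumerate}

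\textbf{Relating visits to length.} The edges of $W'$ form the multiset sum of the edges of $P$ and of the loops in $\mathcal L'$. Charge each edge to its tail vertex. Every loop edge has a tail that is an occurrence of that vertex in the loop. The path $P$ has one edge fewer than it has vertices, because the sink $t$ contributes no outgoing edge. Therefore the number of edges of $W'$ is at most the number of occurrences of non-sink vertices, counted over $P$ and all loops. Summing the per-vertex bound of $2p+1$ over at most $n$ non-sink vertices gives length at most $(2p+1)n$.

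\textbf{Main subtlety.} The step needing care is the boundary bookkeeping. A vertex in $T_d$ must indeed meet only loops of depth $d$ or $d+1$. This holds because a loop containing $v$ with depth at most $d-1$ would place $v$ in an earlier level set, and a loop of depth at least $d+2$ cannot contain $v$ by Observation~\ref{obs:GapPartition}. One must also check that the sink $t$ (which lies on $P$, in $T_0$) contributes no outgoing edges, so counting only the $n$ IDB vertices is justified. Finally, the bounds in the Setup rest on the fact that no case applies, that is, that $W'$ was never subsequently replaced; termination of the process, established by the previous lemma, guarantees this for every walk in the final set $A$.
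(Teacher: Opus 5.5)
Your proof is correct and follows the paper's argument. You bound the occurrences of each $v \in T_d$ by at most $p+1$ loops of depth $d$ plus at most $p$ loops of depth $d+1$, then count each edge at its tail (out-degree) and sum over the $n$ non-sink vertices to get $(2p+1)n$. You are in fact more careful than the paper's short proof about the $d=0$ boundary, the partition property of the level sets, and the sink $t$ having no outgoing edges.
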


\begin{proof}
Consider an arbitrary vertex $v$ in some vertex level $T_d$. Since the absorption replacement
process terminates, there can be at most $p+1$ loops in $\mathcal L_d$ that intersect $v$,
and at most $p$ loops in $\mathcal L_{d+1}$ that intersect $v$. Thus the number of edges in
$W'$ entering $v$ is at most $2p+1$ and the number of edges in $W'$ exiting $v$ is also at
most $2p+1$. Since an edge exits one vertex and enters one vertex, the total number of edges
in $W'$ can be at most $(2p+1) n$.
\end{proof}

This concludes our proof of Theorem     \ref{thm:linearmain}.
\fi

\iffullpaper
\section{Ancillary Observations for Linear $\Name$}

The following observations about the absorption replacement process for linear $\Name$ are used in the nonlinear analysis.

\begin{observation}
    \label{obs:always-end-with-t}
    When the absorption replacement process replaces a walk $W \in A$ with a collection of walks, $B$, they all end with the same vertex as $W$ does.
\end{observation}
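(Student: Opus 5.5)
The plan is to check each of the three cases of the linear absorption replacement process separately. In every case, the path $P$ of the path-loop decomposition is never touched, so each new walk should end where $W$ ends. Recall that the path-loop decomposition of $W$ is $(P,\mathcal L)$ with $P$ a simple path from $s$ to $t$, and that every walk in $A$ carries a decomposition with this same path $P$.

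In the first case, each $W_i$ is an arbitrary walk on the edge multiset of $W$ minus $i$ copies of the loop $L$. Removing a loop changes the in-degree and out-degree of each of its vertices by the same amount. Hence the degree conditions of Observation~\ref{obs:walkcharactorization} still hold with the same endpoints $s$ and $t$. Connectivity also survives, because at least one copy of $L$ remains. Any ordering of these edges into a walk is therefore an $s$--$t$ walk, and so it ends at $t$, just as $W$ does.

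In the second and third cases, $W(\pi)$ is built in three parts. It first follows the canonical walk $C(\tau',s)$ up to $v$. It then inserts the closed walks $C(\tau(L_{\pi(i)}),v)$, each of which starts and ends at $v$. Finally it finishes $C(\tau',s)$. The root of $\tau'$ is still $P$, since only subtrees rooted at depth $d\ge 1$ are removed. By the recursive definition, $C(\tau',s)$ traverses $P$ from $s$ and, once all recursive excursions have returned, continues along $P$ to its last vertex $t$. Inserting closed walks at $v$ does not change where the walk ends. The boundary case $p=0$ is the same argument with no insertions.

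The only step needing care is to confirm two facts about the canonical walk. First, each recursive call $C(\tau(L),x)$ returns to $x$: the traversal of a loop starting at $x$ ends at $x$, by induction on subtree height. Second, the top-level traversal of a path root ends at the path's final vertex. Both follow directly from the definition of the canonical walk, so I expect no real obstacle.
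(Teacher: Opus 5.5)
Your proof is correct and matches the paper's reasoning. The paper states this observation without proof, relying on two facts: every walk in $A$ keeps the same simple path $P$ from $s$ to $t$ in its path-loop decomposition, and the three cases only delete or duplicate loops, which are closed walks. Your case-by-case check makes this explicit: in the first case, degree balance holds and the edge support is unchanged; in the second and third cases, closed excursions are inserted into $C(\tau',s)$.
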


\begin{observation}
    \label{obs:lowdegree}
    Let $A$ be the final collection of walks produced by the absorption replacement
    process. Let $W'$ be a walk in $A$ and $v$ a vertex in $W'$.
    Then the in-degree and out-degree of $v$ in $W'$ are at most $2p+1$. Thus, $v$ occurs in $W'$ at most $2p+1$ times.
\end{observation}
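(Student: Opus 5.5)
The plan is to read off the degree bound from the fact that, for a walk left in the final collection $A$, none of the three cases of the linear absorption replacement process applies. One subtlety comes first. As literally described, the process is invoked only on walks with more than $(2p+1)n$ edges. The termination argument (Lemma~\ref{lem:downward} and the lexicographic decrease of $|\mathcal L_1|,|\mathcal L_2|,\ldots$) never uses the length hypothesis, though. So I would run the process until no case applies to any walk in $A$; the resulting walks still absorb the original walk, by Absorption Transitivity. For such a final walk $W'$ with loop decomposition $(P,\mathcal L')$ and vertex level partitioning $T_0,T_1,\ldots$, three facts hold for every $d\ge 1$:
\begin{itemize}
\item every vertex $v\in T_{d-1}$ has $|\mathcal L'_d(v)|\le p$;
\item every vertex $v\in T_d$ has $|\mathcal L'_d(v)|\le p+1$;
\item no loop has multiplicity above $p+1$.
\end{itemize}

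Next, fix a vertex $v$ and let $d$ be its level, $v\in T_d$. By Observation~\ref{obs:GapPartition}, a loop of depth $d'$ only contains vertices of $T_{d'-1}\cup T_{d'}$. Hence the only loops that can contain $v$ are those in $\mathcal L'_d(v)$ and $\mathcal L'_{d+1}(v)$. For $d=0$ the role of $\mathcal L'_0$ is played by the simple path $P$. Each loop is a simple cycle, so a loop through $v$ contributes exactly one edge entering $v$ and one edge leaving $v$. The simple path $P$ contributes at most one of each. Since the multiset of edges of $W'$ is $P$ plus the loops, summing these contributions gives the bounds. For $d\ge1$ the in- and out-degree of $v$ are at most $(p+1)+p=2p+1$. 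For $d=0$ they are at most $1+p\le 2p+1$.

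Finally, I would convert the degree bound into an occurrence bound. Each occurrence of $v$ in $W'$ other than the first vertex of the walk is entered by an edge of $W'$. So the number of occurrences is the in-degree of $v$, plus one if $v=s$. When $v=s$, Observation~\ref{obs:walkcharactorization} gives in-degree $=$ out-degree $-1$, so the occurrence count equals the out-degree, which is at most $2p+1$. Otherwise the count equals the in-degree, also at most $2p+1$.

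The main obstacle is the first step: making sure the terminal walks genuinely satisfy the no-case-applies conditions rather than merely the length bound. The cleanest fix is to define termination by "no case applies" and check that termination still holds, which it does because the lexicographic potential argument is independent of length.
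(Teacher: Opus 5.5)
Your proof is correct and follows the paper's argument. The paper's proof simply notes that an in- or out-degree above $2p+1$ at $v$ would let case 2 or case 3 fire, which is exactly your level-by-level count: at most $p+1$ loops from $\mathcal L_d(v)$ plus at most $p$ from $\mathcal L_{d+1}(v)$, or $1+p$ when $v\in T_0$. Your explicit choice to run the process until no case applies, rather than stopping once the length drops to $(2p+1)n$, is a clarification the paper's one-line proof silently relies on; your remark that the lexicographic termination argument never uses the length hypothesis is what makes that reading legitimate.
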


\begin{proof}
Then if $v$ had  in-degree or
    out-degree greater than $2p+1$ then we would be able to apply either case 2
    or case 3 in the construction of an absorbing set $B$ to $v$.
\end{proof}

\begin{observation}
    \label{obs:NoMoveAcrossV}
Let $W'$ be a walk that was in the set $A$ at some point in the absorption
replacement process, and let $B$ be the collection derived from $W'$ in one step of the absorption replacement process.
\begin{itemize}
    \item If  a vertex $v$ in $W'$ is visited only once then
for every walk $W'' \in B$ it is the case that $v$ is only visited once.
\item
If  a vertex $v$ in $W'$ is visited only once, and the vertices visited
before $v$ in $W'$ are disjoint from the vertices visited after $v$ in $W'$,
then
for every walk $W'' \in B$ it is the case that vertex $v$ in $W''$ is visited only once, and
the vertices visited
before $v$ in $W''$ are disjoint from the vertices visited after $v$ in $W''$.
\end{itemize}
\end{observation}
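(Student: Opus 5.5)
The plan is to reduce both bullets to bookkeeping on the path-loop decomposition. First, I note that the number of visits of a vertex $v$ in a walk is determined by the edge multiset: it equals the in-degree of $v$, plus one if $v=s$. By Observation~\ref{obs:walkcharactorization} and the definition of a path-loop decomposition, this number equals the number of elements of $\{P\}\uplus\mathcal L'$ (counted with multiplicity) that contain $v$. This holds because $P$ is a simple path and every loop is a simple cycle, so each visits $v$ at most once.

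Hence the hypothesis that $v$ is visited once in $W'$ means that $v$ lies on exactly one component $X$. Here $X$ is either $P$ or a loop $L_0$ of multiplicity one in $\mathcal L'$, and it lies in no other component. I then go through the three cases of the construction of $B$.

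\textbf{First case.} The removed loop $L$ has multiplicity at least $p+2$, so $L\neq X$ and $v\notin L$. Deleting copies of $L$ therefore leaves the count of components through $v$ equal to one, so $v$ is still visited once in every $W_i$, whatever ordering of the edges is chosen.

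\textbf{Second and third cases.} Every component of the new decomposition is either kept from $\mathcal L'$ or is a copy of a loop in a subtree $\tau(L_{\pi(i)})$ with $L_{\pi(i)}\in\mathcal L_d(w)$, where $w$ is the pivot vertex. If $X$ lies outside all these subtrees (in particular if $X=P$), the count for $v$ is unchanged or drops to $0$. A drop to $0$ is impossible when $X=P$, which is never deleted. When $X$ is a loop, I will argue that such deletion is not an issue for the visited-once conclusion as used; this is to be checked in the writeup. Note also that $v\neq w$, since $w$ lies on at least $p+1\ge 1$ loops of $\mathcal L_d(w)$ as well as on the component of depth $d-1$, so $w$ is visited at least twice.

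\textbf{Main obstacle.} The remaining situation is $X\in\tau(L)$ for some $L\in\mathcal L_d(w)$. Then a sequence $\pi$ that repeats $L$ duplicates $X$, and $v$ would be visited several times. My first attempt will be to rule this out: I will exploit the minimality of $d$ together with the structure of the canonical walk, to argue that the hypothesis of the bullet (as it is invoked, i.e.\ for vertices on the spine of a path-node) forces $X=P$. If that fails, I will restrict the statement accordingly and record where it is used.

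\textbf{Second bullet.} Assume $v$ is on $X=P$ and visited once, and that the vertex sets before and after $v$ are disjoint. The new walk $W(\pi)$ follows $C(\tau',s)$ and inserts canonical sub-walks at $w$. Every loop of the new decomposition is a loop of $\mathcal L'$. Such a loop is a closed walk attached, through the loop tree, either to the part of $P$ before $v$ or to the part after $v$, and it cannot pass through $v$. Since each loop is traversed entirely at its attachment point, its vertices were already on the corresponding side in $W'$. Inserting copies at $w$ keeps them on the same side as $w$. Hence disjointness is preserved, and the first bullet gives that $v$ is still visited once.
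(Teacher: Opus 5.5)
Your reduction of the visit count to the number of members of $\{P\}\uplus\mathcal L'$ through $v$ is fine, and so is your first-case argument. You also located the real difficulty: if $v$ lies only on a loop $X\in\tau(L)$ with $L\in\mathcal L_d(w)$, a sequence $\pi$ that repeats $L$ duplicates $X$. However, this cannot be ruled out by minimality of $d$ or by the canonical walk, because the first bullet is false as stated once $p\ge 2$.

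Take $p=2$ and $P=(s,w,t)$. Add the loops $L_1=w\to v\to w$, $L_2=w\to y\to w$ and $L_3=w\to z\to w$, each with multiplicity one. Add the loops $y\to u_i\to y$ with multiplicity three for $i=1,\dots,k$. Choose $k$ large enough that the walk has more than $5n$ edges; each $u_i$ adds one vertex but six edges, so this is possible.

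No loop has multiplicity $p+2$, and the loops at $y$ have depth $2$. The only vertex witnessing the minimal $d=1$ is $w$, in the second case. So $B$ necessarily contains $W(\pi)$ for $\pi=(L_1,L_1)$, namely $s,w,v,w,v,w,t$. This walk visits $v$ twice, although $W'$ visits it once. The paper's one-line proof (``duplicating some loops, and deleting other loops'') passes over the same point. The first bullet holds only when $v$ lies on $P$ (or $p\le 1$).

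The second bullet, which is the one Lemma~\ref{lem:non-linear-key} actually invokes, is true. But your argument for it begins with ``Assume $v$ is on $X=P$'', and that is precisely the step that needs proof. It follows from the bullet's own disjointness hypothesis, not from the context of use.

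Let $U^-$ and $U^+$ be the vertex sets visited before and after $v$ in $W'$. Every edge of $W'$ is traversed either before or after the unique visit of $v$. Hence both its endpoints lie in $U^-\cup\{v\}$ or both lie in $U^+\cup\{v\}$.

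Suppose $v$ lay on a loop $X$ only. Then $v\notin\{s,t\}$, and its unique in-edge $(a,v)$ and out-edge $(v,b)$ belong to $X$, with $a\in U^-$ and $b\in U^+$. The arc of $X$ from $b$ to $a$ avoids $v$. Each edge of that arc lies inside $U^-$ or inside $U^+$, and consecutive edges share a vertex. So $a$ and $b$ would lie on the same side (or coincide), contradicting $U^-\cap U^+=\emptyset$.

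Thus $v$ lies on $P$ and on no loop. Since $P$ is never deleted or duplicated, and every loop of the new decomposition comes from $\mathcal L'$, $v$ is visited exactly once in every $W''$.

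For disjointness, do not rely on loops being ``traversed entirely at their attachment point''. That is false for an arbitrary $W'$, and it says nothing about the arbitrary walks $W_i$ of the first case. Instead, note that $W''$ uses only edges of $W'$, none of which joins $U^-$ to $U^+$, and that $W''$ visits $v$ once. So $W''$ stays in $U^-$ before $v$ and in $U^+$ after it, which handles all three cases uniformly.
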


\begin{proof}
  Both statements immediately follow from the observation that  $W''$ is created from $W'$ by duplicating some loops, and deleting other loops.
\end{proof}
\fi

\iffullpaper
\section{Upper Bounds for General Datalogo Programs}
\label{sec:generalupperbound:app}

In this section we give the full proof of Theorem~\ref{thm:nonlinearmain}. We prove the
theorem by defining an absorption replacement process that produces a collection $A$ of
shallow (derivation) trees that collectively absorbs  an arbitrary deep (derivation) tree
$T$, essentially by reduction to the linear case. This process starts by transforming $T$
into what we call a canonical tree, which we explain in subsection \ref{sec:transform:app}. To
aid  in explaining  this transformation, we first define what we call the path tree $L(T)$
of $T$ in subsection \ref{subsec:leftpathtree:app}. Then in subsection
\ref{subsec:reductiontolinear:app} we explain what it means to apply the linear absorption
replacement process to a derivation tree. Then in subsection
\ref{subsec:nonlinearreplacementprocess:app} we explain how to construct the multiset $B$ of
(derivation) trees from a single (derivation) tree $T$ by applying the linear
absorption replacement process as a black box. Finally, we show that this process terminates
in a collection $A$ of shallow trees that collectively absorb $T$ in subsection
\ref{sec:nonlinearanalysis:app}.

\subsection{ Path Tree}
\label{subsec:leftpathtree:app}

Consider a derivation tree $T$.  We will now define a new tree $L(T)$, which we call a \textit{path tree}, that is derived from $T$. This path tree allows us to view a derivation tree as a
hierarchical collection of paths/walks. This is crucial, as it will allow us to use the linear absorption from the previous section as a black box. Each node in $L(T)$ corresponds to some
ancestor-to-descendant path in $T$, but not all ancestor-to-descendant paths in $T$ will be
in $L(T)$.

\pathtreedef*

\begin{figure}[h!]
\centering
\definecolor{ptRed}{HTML}{E27D7D}
\definecolor{ptGreen}{HTML}{7FB069}
\definecolor{ptBlue}{HTML}{6FA8DC}
\definecolor{ptYellow}{HTML}{F5C26B}
\begin{minipage}[b]{0.40\textwidth}
\centering
\scalebox{0.65}{\begin{tikzpicture}
[node distance=2.5mm and 1.5mm,
every node/.style={circle, draw,minimum size=6.5mm, inner sep=1pt},semithick]
\node[fill=ptRed] (n0) {1};
\node[fill=ptRed, below left=of n0] (n1) {2};
\node[fill=ptGreen, below right=of n0] (n2) {3};
\node[fill=ptRed, below left=of n1] (n3) {4};
\node[fill=ptGreen, below left=of n2] (n5) {6};
\node[fill=ptYellow, below right=of n2] (n6) {7};
\node[fill=ptBlue, below right=of n3] (n4) {5};
\node[fill=ptRed, below left=of n3] (n7) {8};
\draw (n0) -- (n1);
\draw (n0) -- (n2);
\draw (n1) -- (n3);
\draw (n3) -- (n4);
\draw (n2) -- (n5);
\draw (n2) -- (n6);
\draw (n3) -- (n7);
\end{tikzpicture}}
\end{minipage}
\hspace*{0.1\textwidth}
\begin{minipage}[b]{0.40\textwidth}
\centering
\scalebox{0.65}{\begin{tikzpicture}
[node distance=2.5mm and 1.5mm,
every node/.style={circle, draw,minimum size=6.5mm, inner sep=1pt},semithick]
\node[fill=ptRed, circle] (r) {1,2,4,8};
\node[fill=ptGreen, circle, below left=of r] (g) {3,6};
\node[fill=ptBlue, circle, below right=of r] (b) {5};
\node[fill=ptYellow, circle, below left=of g] (y) {7};
\draw (r) -- (g);
\draw (r) -- (b);
\draw (g) -- (y);
\end{tikzpicture}}
\end{minipage}
\caption{The left figure shows a derivation tree $T$ and the right shows its path tree $L(T)$. This example shows $4$ ancestor-to-descendant paths of $T$ and their $4$ corresponding path-nodes in $L(T)$.
A root-to-leaf path of $T$ can be thought of as a path from the root to some path-node in the decomposition $L(T)$, where only a prefix of the path of every path-node is traversed.
Here, $P(T) = (1,2,4,8)$.
}
\label{fig:tree-decomposition:app}
\end{figure}

Each node in $L(T)$ corresponds to a node-to-leaf path in $T$. To avoid confusion between
derivation trees and path trees, we will call a node in a derivation tree $T$ a
\textit{node} and a node in a decomposition $L(T)$ a \textit{path-node}.
Note that a path-node corresponds to multiple nodes in the derivation
tree, which form the left-most path of some sub-tree of the derivation tree.
When we write path-node, the reader should think of this corresponding path in $T$.
Figure~\ref{fig:tree-decomposition:app} illustrates this decomposition.

\subsection{Canonical Tree Transformation}
\label{sec:transform:app}

Here we will describe a procedure to transform any derivation tree into what we call a canonical tree. This transformation reorders different subtrees such that the tree becomes as ``left-deep'' as possible. This is useful, because the linear absorption procedure that we will use as a black-box works on paths/walks, so it is more effective on left-deep trees.

Before we state the canonical tree construction process in Algorithm~\ref{alg:canonicaltransform:app}, we need some definitions related to wedges. See Figure \ref{fig:wedgepush:app} for an illustration.

\wedgedef*

\movablewedgedef*

We can move a movable wedge to a higher left-most path using the following wedge push
operation.

\wedgepushdef*

Figure~\ref{fig:wedgepush:app} illustrates a wedge push.
Note that the operation is still well defined when $v = u$. Then, either $\alpha$ or $\beta$ is empty, and we do not need to change the permutation of the spine. We can now define our canonical tree transformation procedure.

\begin{figure}[t] 
\def\xoff{-0.6}
\def\yoff{-0.65}
\begin{minipage}[c]{0.42\textwidth}
\centering
\scalebox{0.55}{
\begin{tikzpicture}[node distance=0.5cm,
every node/.style={minimum size=0mm, inner sep=0pt},
->,>=Triangle,thick]
\tikzset{
trianglenode/.style={
regular polygon,
regular polygon sides=3,
minimum size=0.7cm,
inner sep=0pt,
text centered
},
wedge/.style={draw=cyan!70!black,fill=cyan!18},
spine/.style={draw=cyan!70!black}
}

\node (n0) {};
\node (n1) at ($(n0)+(\xoff,\yoff)$) {$v$};
\node(n2) at ($(n1)+(\xoff,\yoff)$) {};
\node[trianglenode,draw] (n3) at ($(n1)+(-\xoff,\yoff)$) {$A$};

\draw(n0) -> (n1);
\draw (n1) -> (n2);
\draw (n1) -> (n3.north);

\node (m0) at ($(n0) + (0.7,-1.4)$) {$u$};
\node[trianglenode,wedge] (tB) at ($(m0)+(-\xoff,\yoff)$) {$B$};
\node (m1) at ($(m0)+(\xoff,\yoff)$) {$v$};
\node[trianglenode,wedge] (tC) at ($(m1)+(-\xoff,\yoff)$) {$C$};
\node (m2) at ($(m1)+(\xoff,\yoff)$) {$u$};
\node[trianglenode,draw] (tD) at ($(m2)+(-\xoff,\yoff)$) {$D$};
\node (wedge) at ($(tC)+(0.75,0)$) {{\color{cyan!70!black} $W$}};

\draw[spine] (m0) -> node[pos=0.4,left, xshift=-2pt,text=cyan!70!black] {$\alpha$} (m1);
\draw[spine] (m1) -> node[pos=0.4,left, xshift=-2pt,text=cyan!70!black] {$\beta$} (m2);
\draw[spine] (m0) -> (tB.north);
\draw[spine] (m1) -> (tC.north);
\draw (m2) -> (tD.north);
\draw[dashed,-] (n0) to[out=-25, in=160] (m0);

\node(A) at ($(n0)+(0,0.6)$)  {};
\node(B) at ($(A)+(-4.06,-4.4)$) {};
\node (C) at ($(A)+(4.06,-4.4)$) {};
\draw[thick,-] (A) -- (B) -- (C) -- (A);
\end{tikzpicture}
}
\end{minipage}%
\hfill
\begin{minipage}[c]{0.06\textwidth}
\centering
$\Longrightarrow$
\end{minipage}%
\hfill
\begin{minipage}[c]{0.42\textwidth}
\centering
\scalebox{0.55}{
\begin{tikzpicture}[node distance=0.5cm,
every node/.style={minimum size=0mm, inner sep=0pt},
->,>=Triangle,thick]
\tikzset{
trianglenode/.style={
regular polygon,
regular polygon sides=3,
minimum size=0.7cm,
inner sep=0pt,
text centered
},
wedge/.style={draw=cyan!70!black,fill=cyan!18},
spine/.style={draw=cyan!70!black}
}

\node (n0) {};
\node (n1) at ($(n0)+(\xoff,\yoff)$) {$v$};
\node[trianglenode,wedge] (tC) at ($(n1)+(-\xoff,\yoff)$) {$C$};
\node (n2) at ($(n1)+(\xoff,\yoff)$) {$u$};
\node[trianglenode,wedge] (tB) at ($(n2)+(-\xoff,\yoff)$) {$B$};
\node (n3) at ($(n2)+(\xoff,\yoff)$) {$v$};
\node[trianglenode,draw] (tA) at ($(n3)+(-\xoff,\yoff)$) {$A$};
\node(n4) at ($(n3)+(\xoff,\yoff)$) {};

\draw(n0) -> (n1);
\draw[spine] (n1) -> node[pos=0.4,left, xshift=-2pt,text=cyan!70!black] {$\beta$} (n2);
\draw[spine] (n2) -> node[pos=0.4,left, xshift=-2pt,text=cyan!70!black] {$\alpha$} (n3);
\draw (n3) -> (n4);

\node (m0) at ($(n0) + (0.7,-1.4)$) {$u$};
\node[trianglenode,draw] (tD) at ($(m0)+(-\xoff,\yoff)$) {$D$};

\draw[spine] (n1) -> (tC.north);
\draw[spine] (n2) -> (tB.north);
\draw (n3) -> (tA.north);
\draw[dashed,-] (n0) to[out=-25, in=160] (m0);
\draw (m0) -> (tD.north);

\node(A) at ($(n0)+(0,0.6)$)  {};
\node(B) at ($(A)+(-4.06,-4.4)$) {};
\node (C) at ($(A)+(4.06,-4.4)$) {};
\draw[thick,-] (A) -- (B) -- (C) -- (A);

\end{tikzpicture}
}
\end{minipage}
\caption{Illustration of a wedge push, where a wedge whose spine consists of subpaths $\alpha$ and $\beta$ is pushed to the left black path (or path-node) $P$. The triangles represent subtrees hanging off the path or the spine. Note that the spine is a descendant of $P$ and may intersect multiple path-nodes.}
\label{fig:wedgepush:app}
\end{figure}

\begin{algorithm}[H]
    \caption{Canonical Tree Transformation Procedure}
    \label{alg:canonicaltransform:app}
    \small
    \begin{algorithmic}[1]
        \REQUIRE Input derivation tree $T$
        \STATE \textbf{Procedure} Transform($T$)
        \WHILE{$\exists P \in L(T)$ with a movable wedge $W$ with respect to a node $v$}
            \STATE Perform a wedge-push on $W$ to $P$ with respect to $v$
        \ENDWHILE
        \RETURN $T$
        \STATE \textbf{End Procedure}
    \end{algorithmic}
\end{algorithm}

\cantreetransformterminate*
\begin{proof}
Let $T$ be the input derivation tree.  Let $\ell_i$ be the total number of nodes in $T$
which are in some path-node at level $i$ in $L(T)$. Consider the lexicographical order of
this vector, which is a total order. This order increases strictly with each execution of
the while loop of the procedure, since some wedge is moved higher in $T$.
\end{proof}

\begin{restatable}{proposition}{gencanonicalsameyield}
Let $T$ be the tree before applying the canonical tree transformation procedure and $T'$ be the tree after. The yields of $T$ and $T'$ are the same.
\end{restatable}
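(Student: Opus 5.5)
The plan is to show that a single wedge push leaves the yield unchanged; the proposition then follows by induction on the number of iterations of the while loop in Algorithm~\ref{alg:canonicaltransform:app}. That loop terminates by the preceding lemma, so the induction is finite. We work with compatible trees, so the yield $Z(T)$ is the product of the edge labels of $T$. Each edge label is either the multiplicative identity (for an edge between two nonterminals) or the semiring value of a terminal leaf $t$. Because $\bm S$ is commutative, $Z(T)$ depends only on the multiset of edge labels of $T$ and not on their arrangement. It therefore suffices to show that a wedge push produces a tree whose multiset of labeled edges equals that of the original, after discarding identity-labeled edges.

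For one wedge push, fix the movable wedge $W$ with spine $u=x_0,\dots,v,\dots,x_k$. Here $x_k$ is the second occurrence of the nonterminal $u$. Write the spine as $\alpha$ (from $u$ to $v$) followed by $\beta$ (from $v$ to the end). First account for what is removed at the original location. The wedge is cut out, and the subtree rooted at the second copy of $u$ (with its hanging subtree $D$) is reattached to the parent of the first $u$. This gives a valid derivation tree because both endpoints are the same nonterminal, which is exactly why wedges are defined between two copies of one nonterminal. Next account for what is inserted on $P$. The occurrence of $v$ on $P$ is replaced by the path $\beta\alpha$, which starts and ends at $v$, and all subtrees hanging off the spine except the bottom one ($B$, $C$, and so on) are carried along unchanged. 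This is valid because the new path begins and ends with $v$, so the rest of $P$ and the subtrees of the original $v$ (such as $A$) reattach consistently. Consequently every terminal leaf edge in $W$ appears exactly once in the new tree, and every leaf edge outside $W$ is untouched. Only identity-labeled internal edges are rearranged.

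I also need to confirm that every production used in the new tree is still a valid grammar rule. Each spine node keeps its children, namely its spine child and its hanging subtrees. The one exception is the junction: the node preceding $v$ on $\beta$, or the last node of $\alpha$ before the new $v$, now has as its spine child a $v$ or $u$ node representing the same nonterminal as before. So each local production is literally the same rule, and the leaf multisets agree.

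The main obstacle is this bookkeeping at the junctions. I must make sure that the cyclic rotation $\beta\alpha$ of the spine, which cuts at $v$ instead of $u$, reuses the same productions. This requires identifying the end of $\beta$ (the second $u$) with the start of $\alpha$, and the end of $\alpha$ with the $v$ on $P$. The degenerate case $v=u$, where one of $\alpha$ or $\beta$ is empty, needs a separate brief check. Once the labeled-edge multisets are shown equal, commutativity of $\otimes$ gives equal yields, and induction finishes the proof.
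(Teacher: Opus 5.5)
Your proposal is correct and follows the same route as the paper. The paper justifies the proposition only by observing that a wedge push on a compatible tree relocates subtrees and so preserves the product of the edge labels, hence the yield is invariant across the iterations of the while loop; your junction bookkeeping fills in exactly this observation, and the one point left implicit is that splicing the rotated spine into the left-most path $P$ may require reordering children of spine nodes so the spine child comes first, which is harmless since $\otimes$ is commutative.
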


The following lemma states some structural properties of a canonical tree. The intuition here is that if these properties do not hold, we can do another wedge push.

\PostTransProcedureProps*
\begin{proof}
\textbf{(1)} Suppose for sake of contradiction there is a path-node $P_i$ with $i \geq 2$ where the
corresponding path in $T$ has two copies of $v$. By definition, there is then a movable
wedge with respect to $v$ and $P_1$. This is the wedge rooted at $v$ in $P_i$ and ending at
the other copy of $v$ in $P_i$. Due to this, this cannot be a tree output by the canonical
tree transformation procedure.

\textbf{(2)} Fix a path-node $P_i$ where $i \geq 2$. Suppose for sake of contradiction there
is a node $v'$ occurring both before and after $v$ in the path corresponding to $P_i$. Then,
there is a movable wedge $W$ from $v'$ to $v'$ that includes $v$ with respect to $P_1$. The
canonical tree transformation procedure must have extracted $W$ and moved it to a copy of
$v$ in the path corresponding to the $P_1$ in $T$, a contradiction.
\end{proof}

We finally observe that we can keep having compatibility, which will be necessary to apply the linear absorption replacement process.

\begin{observation}
    If we apply Wedge Push on a compatible tree, we obtain a compatible tree with the same product of the edge labels. Therefore, the Canonical Tree Transformation outputs a compatible tree with the same yield. 
\end{observation}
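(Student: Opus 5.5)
The claim has two parts. First, a wedge push on a compatible tree returns a compatible tree. Second, the product of edge labels is unchanged. The statement about the Canonical Tree Transformation then follows by induction on the number of wedge pushes, which is finite by the termination lemma.

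For compatibility, I would check each defining condition of a compatible tree directly. A wedge push only rearranges edges and subtrees that already exist in $T$. The node $v$ on $P$ is replaced by the spine segment $\beta$ followed by $\alpha$, which again ends at a copy of $v$. Every subtree hanging off the spine moves with its spine node, except the one at the bottom end. The original spine path from $u$ down to the second $u$ is replaced by a single $u$, which keeps that bottom subtree (the subtree $D$ in Figure~\ref{fig:wedgepush:app}). So every internal node still carries a nonterminal, and every leaf is still the sink $t$, reached by an edge labeled with the same terminal value as before.

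The one point needing care is grammar validity. Every internal node must have a child sequence given by some production, with the corresponding edge labels. This holds because each internal node keeps exactly its original multiset of children. The only changes are reattachments at positions labeled by the same nonterminal: the two copies of $u$ at the ends of the wedge, and the two copies of $v$, one on $P$ and one on the spine. Edges between nonterminals carry label $1$ in a compatible tree. Edges to $t$ travel together with their parent node. Hence every production used after the push was already used before, and the tree is compatible.

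For the yield, I would show that the multiset of edge labels is preserved. Each edge of $T$ appears exactly once in $T'$, possibly reattached at an equally named endpoint. Reattachment only ever affects nonterminal-to-nonterminal edges, whose label is $1$, so no label is lost. Because the semiring is commutative, $Z(T') = Z(T)$.

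The main obstacle is the bookkeeping at the splice points, i.e.\ confirming that no edge is duplicated or dropped when the occurrence of $v$ on $P$ is replaced. Concretely, the edge from $v$ into the rest of $P$ must be carried by the new bottom copy of $v$. The child edges of the original $v$ on $P$ must stay attached to that bottom copy. The top copy of $v$ keeps the spine-side children of the spine's $v$. I would make this precise by explicitly listing the parent-child edges at the four affected nodes, before and after the push.
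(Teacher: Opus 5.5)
Your proposal is correct and follows the reasoning the paper leaves implicit (the paper states the observation without proof, noting only that a wedge push rearranges parts of the tree without changing the semiring value): the push re-hangs subtrees only at nodes carrying the same nonterminal, so every node keeps its multiset of children and leaf edges, the multiset of edge labels is unchanged and commutativity gives the same yield, and the claim for the full transformation follows by induction over the finitely many pushes. One small refinement: the push places the spliced spine on the left-most path $P$, and the spine may cross several path-nodes, so it can reorder children at spine nodes; productions are therefore preserved only up to permutation of their right-hand sides, which is harmless over a commutative semiring and is why your multiset invariant, rather than exact preservation of child positions, is the right one to state.
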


\subsection{Application of the Linear Absorption Replacement Process to Derivation Trees}
\label{subsec:reductiontolinear:app}

Given our assumptions about the structure of derivation trees, our linear
absorption replacement process $\textsf{LinAbsorb}(P)$ can be applied to any
ancestor-to-descendant path $P$ of a derivation tree $T$. This creates a collection $A$ of
paths, derived from $P$, which absorb $P$. We will explain how to use
$A$ and $T$ to create a collection of trees $A'$ that absorb $T$.
Let $(X, Y)$ be a directed edge in $P$, where $X$ is the parent of $Y$ in $T$.
Let $T_{X, 1}, \ldots, T_{X, k}$ be the subtrees of the children of $X$ in $T$, other than
$Y$. We will think of the subtrees  $T_{X, 1}, \ldots, T_{X, k}$ as being
associated with the edge $(X, Y)$ in $T$; when \textsf{LinAbsorb} duplicates $(X, Y)$, the
subtrees  $T_{X, 1}, \ldots, T_{X, k}$ are also duplicated, with each copy of $X$
getting duplicate copies of the subtrees as children. Similarly when \textsf{LinAbsorb} deletes $(X, Y)$ the
subtrees $T_{X, 1}, \ldots, T_{X, k}$  are deleted.
Note for completeness that the trees $A'$ produced are all compatible.
See Figure~\ref{fig:path-subtrees:app} for
an illustration, where for simplicity we assume $k$ is universally 1.

\begin{figure}
\centering
\begin{minipage}[c]{0.41\textwidth}
\centering
\scalebox{0.55}{
\begin{tikzpicture}
[node distance=0.55cm,
        every node/.style={circle, draw,minimum size=5mm, inner sep=1pt},
        ->,
        >=Stealth,  
        semithick,
        scale=0.5,
        box/.style={draw, rectangle, rounded corners, thick, red, inner sep=5pt}
        ]
     \tikzset{
    trianglenode/.style={
      regular polygon,
      regular polygon sides=3,
      minimum size=1.2cm,
      inner sep=0pt,
      text centered
    }}
\node (a) {$X$};
\node[right=of a] (b) {$Y$};
\node[right=of b] (c) {$X$};
\node[right=of c] (d) {$Y$};
\node[right=of d] (e) {$t$};

\node[trianglenode,draw=cyan, fill=cyan!30,below=of a] (ta) {$T_X$};
\node[trianglenode,draw=red, fill=red!30,below=of b] (tb) {$T_Y$};
\node[trianglenode,draw=olive, fill=olive!30,below=of c] (tc) {$U_X$};
\node[trianglenode,draw=orange, fill=orange!30,below=of d] (td) {$U_Y$};

\draw[cyan,->] (a) -- (b);
\draw[red,->] (b) -- (c);
\draw[olive,->] (c) -- (d);
\draw[orange,->] (d) -- (e);

\draw[-] (a) -- (ta);
\draw[-] (b) -- (tb);
\draw[-] (c) -- (tc);
\draw[-] (d) -- (td);

\end{tikzpicture}
}
\end{minipage}%
\hfill
\begin{minipage}[c]{0.05\textwidth}
\centering
$\Longrightarrow$
\end{minipage}%
\hfill
\begin{minipage}[c]{0.53\textwidth}
\centering
\scalebox{0.55}{
\begin{tikzpicture}
[node distance=0.55cm,
        every node/.style={circle, draw,minimum size=5mm, inner sep=1pt},
        ->,
        >=Stealth,  
        semithick,
        scale=0.5,
        box/.style={draw, rectangle, rounded corners, thick, red, inner sep=5pt}
        ]
     \tikzset{
    trianglenode/.style={
      regular polygon,
      regular polygon sides=3,
      minimum size=1.2cm,
      inner sep=0pt,
      text centered
    }}
\node (a) {$X$};
\node[right=of a] (b) {$Y$};
\node[right=of b] (c) {$X$};
\node[right=of c] (d) {$Y$};
\node[right=of d] (e) {$X$};
\node[right=of e] (f) {$Y$};
\node[right=of f] (g) {$t$};

\node[trianglenode,draw=cyan, fill=cyan!30,below=of a] (ta) {$T_X$};
\node[trianglenode,draw=red, fill=red!30,below=of b] (tb) {$T_Y$};
\node[trianglenode,draw=cyan, fill=cyan!30,below=of c] (tc) {$T_X$};
\node[trianglenode,draw=red, fill=red!30,below=of d] (td) {$T_Y$};
\node[trianglenode,draw=olive, fill=olive!30,below=of e] (te) {$U_X$};
\node[trianglenode,draw=orange, fill=orange!30,below=of f] (tf) {$U_Y$};

\draw[cyan,->] (a) -- (b);
\draw[red,->] (b) -- (c);
\draw[cyan,->] (c) -- (d);
\draw[red,->] (d) -- (e);
\draw[olive,->] (e) -- (f);
\draw[orange,->] (f) -- (g);

\draw[-] (a) -- (ta);
\draw[-] (b) -- (tb);
\draw[-] (c) -- (tc);
\draw[-] (d) -- (td);
\draw[-] (e) -- (te);
\draw[-] (f) -- (tf);

\end{tikzpicture}
}
\end{minipage}
\caption{Illustration of duplication (and possible removal) of subtrees when \textsf{LinAbsorb} is applied to a path (or equivalently, a path-node). The left shows a node-to-leaf path $P$ in $T$ and subtrees, shown as triangles. 
Note that $P$ must end with the special sink $t$ as we consider compatible trees wlog. \textsf{LinAbsorb} takes as input a walk ending with $t$. 
While $T_X$ is a subtree of $X$, we treat it as being associated with edge $(X,Y)$. Note that two copies of the same edge can have different subtrees. The right shows a new path/walk created by \textsf{LinAbsorb} along with duplicated subtrees. When the first $(X,Y)$ is duplicated, its subtree $T_X$ is also duplicated.
}
\label{fig:path-subtrees:app}
\end{figure}

\subsection{Absorption Replacement Process }
\label{subsec:nonlinearreplacementprocess:app}

We here describe one step of our absorption replacement process (Algorithm~\ref{alg:absorb}), which produces a collection $B$ of trees from a tree $T$. For ease of reading, we restate the algorithm.
Each step applies the linear absorption replacement process to a path-node in $L(T)$.
The path-node selected has the lowest depth among all path-nodes whose corresponding path has length greater than $n(2p+1)$.
In Algorithm~\ref{alg:absorb}, $\text{length}(P)$ is the number of edges in $P$.

\begin{algorithm}[H]
    \caption*{{\bf Algorithm \ref{alg:absorb}} Absorption Replacement}
    \small
    \begin{algorithmic}[1]
        \REQUIRE Input derivation tree $T$
        \STATE \textbf{Procedure} Absorb($T$)
            \STATE $T \gets$ Canonical Transformation of $T$
            \IF {$\exists P \in L(T): \text{length}(P) > n(2p+1)$}
                \STATE $P \gets$ such a path  $P \in L(T)$ of lowest depth
                \STATE $P_1,\dots,P_k \gets \text{\textsf{LinAbsorb}}(P)$ %
                \STATE $T_1,\dots,T_k \gets \text{Produce trees from } T \text{ with } P_1,\dots,P_k$ %
                \STATE $B \gets  \{ T_1,\dots,T_k \}$
            \ENDIF
            \STATE \textbf{return} $B$ %
    \end{algorithmic}
\end{algorithm}

\subsection{Analysis of Replacement Process}
\label{sec:nonlinearanalysis:app}

To establish Theorem~\ref{thm:nonlinearmain}, we need to prove that the absorption replacement process terminates, and that the trees in the resulting set are not too deep.

\GeneralAbsReplacementTermination*
\begin{proof}
The process recursively applies \textsf{LinAbsorb} on path-nodes in order of increasing depth. Each call only modifies subtrees hanging off the input path (without increasing the depth of $L(T)$) and outputs a path of length $\leq (2p+1)n$.
\end{proof}

\begin{observation}
\label{ob:appeared-before}
Suppose that we have a subpath $\beta$  consisting of path-nodes in $L(T)$, and $T$ was
obtained from $T'$ by one step of the absorption replacement process when \textsf{LinAbsorb} was run
on a path-node not in $\beta$. Then, $L(T')$ must have the same subpath $\beta$.
\end{observation}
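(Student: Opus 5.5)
The plan is to read Observation~\ref{ob:appeared-before} off the definition of the path tree $L(\cdot)$ together with the locality of the replacement step described in Section~\ref{subsec:reductiontolinear:app}. When \textsf{LinAbsorb} is run on a path-node $P^\ast\notin\beta$ of $L(T')$, the resulting tree $T$ differs from $T'$ only along the walk $P^\ast$. Loops of $P^\ast$ are deleted or duplicated, and the subtrees hanging off the edges of $P^\ast$ are deleted or duplicated along with the edges they are attached to. Everything outside the subtree of $T'$ rooted at the first node of $P^\ast$ is left untouched. I would first record this as a lemma: if $P^\ast$ is the path-node rooted at node $r$, then $T$ is obtained from $T'$ by replacing the subtree $T'_r$ with a new tree $T_r$. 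Moreover, the hanging subtrees of $T_r$ are copies of hanging subtrees of $T'_r$, attached to copies of the same edges of $P^\ast$.

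Next I would use the recursive definition of $L(\cdot)$. The left-most root-to-leaf path is determined top-down by child ordering, and deleting it splits the tree into independent pieces that are processed recursively. From this, the path tree of a tree with the subtree at $r$ replaced is obtained from the original path tree by replacing the subtree of $L(T')$ rooted at $P^\ast$ with $L$ of the new subtree, with a new root path-node corresponding to the new walk. All path-nodes of $L(T')$ outside the subtree rooted at $P^\ast$ keep their node sequences and their parent relations. To justify this, I would induct along the recursion of the definition of $L$. At each step the left-most path either avoids $r$ entirely, in which case it is unchanged, or it passes through $r$. In the latter case $r$ is the top of $P^\ast$, so the left-most path is precisely where $P^\ast$ begins, and the part above $r$ belongs to an ancestor path-node. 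Here I must check that the new walk still starts at the same parent node and still ends at the sink $t$, which Observation~\ref{obs:always-end-with-t} provides. I also need that the edge from the parent into $r$ keeps its position in the child ordering.

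Finally, I would split on where $\beta$ sits relative to $P^\ast$. If $\beta$ lies entirely outside the subtree of $L(T')$ rooted at $P^\ast$, the previous paragraph gives that $\beta$ appears unchanged in $L(T)$. If $\beta$ lies strictly inside that subtree, its path-nodes are path-nodes of some hanging subtree of $P^\ast$. In $T$ that hanging subtree either persists as a copy, in which case its path tree is literally the same and $\beta$ reappears (possibly in several copies, which is fine), or it was deleted. The deleted case, and the case where $\beta$ crosses $P^\ast$ without containing it, are the main obstacle. As I read the statement, they must be excluded by its intended use: $\beta$ is a subpath that survives in $T$, i.e.\ it consists of ancestors of, or nodes disjoint from, $P^\ast$. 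I would state that assumption explicitly, namely that $\beta$ is a subpath of $L(T)$ not meeting the modified path-node, and argue in the reverse direction, which is cleaner. Every path-node of $L(T)$ not equal to the new walk and not inside a copied hanging subtree is an unchanged path-node of $L(T')$, and copied hanging subtrees reproduce path-nodes of $L(T')$ verbatim. Hence $\beta$ already existed in $L(T')$.
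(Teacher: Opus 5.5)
Your argument is correct and is essentially the paper's proof. The paper's proof is the one-line locality remark that \textsf{LinAbsorb} on a path-node only deletes or replicates the subtrees hanging off that path, creates no new subtrees, and leaves ancestor path-nodes untouched; this is exactly your reverse-direction argument in the last paragraph. The ``extra assumption'' you propose is already the hypothesis of the statement: $\beta$ is given as a subpath of $L(T)$ that avoids the rewritten path-node, so it cannot lie in a deleted subtree, and a top-down chain in $L(T)$ cannot cross that path-node without containing it.
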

\begin{proof}
When \textsf{LinAbsorb} is run on a path-node $P'$, some subtrees hanging off the path may be removed
or replicated, but no new subtrees are created. Further, it has no effect on the ascendant
path-nodes of $P'$.
\end{proof}

We will now show that after the replacement process terminates, all the resulting trees,
whose set is denoted $A$, have a small depth. Specifically, we will show that if the linear
absorption black box generates walks with length at most $(2p+1)n$ (each node is visited at most
$2p+1$ times), then the depth of any tree in $A$ is at most $(2p+2)n$. This
will establish a convergence time of $2(p+1) n$ for a naive evaluation of an arbitrary
$\Name$ instance.

\KeyPropertiesAfterAbsorptionGeneral*
\begin{proof}
We first prove Property (2) by arguing that any path-node can have at most $2p+1$ occurrences
of $v$. Note that \textsf{LinAbsorb} is run on a path-node of the smallest depth in the path tree.
Thus, the property follows since when \textsf{LinAbsorb} is run on a path-node $P'$, it only affects
itself and its descendant path-nodes, and the new path $P$ generated from $P'$ has at most
$2p+1$ occurrences of any node thanks to Observation~\ref{obs:lowdegree}.

To prove Property (1) we need to take a close look at how the collection $A$ of the trees
evolves over iterations. Suppose $T$ was obtained by the following sequence of the
absorption replacement process: $T_0$, $T_1, \ldots, T_k = T$, which means that $T_i$ was
replaced by some trees including $T_{i+1}$ for all $i \in \{0, 1, \ldots, k-1\}$. We will
prove Property (1) by induction on $i$.

For the base case, consider $T_0$, which is the immediate outcome of Transform performed on
the input tree. Property (1) is an immediate consequence of
Lemma~\ref{lemma:PostTransProcedureProps}.

Assume that $T' = T_{k-1}$ satisfies Property (1), but $T$ does not. Suppose that for some
root-leaf path $Q$ in $T$, there are path-nodes $P_1$ and $P_2$ in $\beta(Q)$, from top to
bottom, such that $P_1$ includes $v$ and $s(P_2)$ includes $v$;  otherwise the property
holds true vacuously. Suppose $T$ was obtained from $T'$ by performing \textsf{LinAbsorb} on a
path-node $P'$ in $T'$ and let $P$ be the outcome when \textsf{LinAbsorb} was run on $P'$; we will
call $P'$ the precursor of $P$.

If $P \neq P_1, P_2$, by Observation~\ref{ob:appeared-before}, $L(T')$ must have a subpath
consisting of $P_1$ and $P_2$ (from top to bottom). By the induction hypothesis, i) and ii)
continue to hold true.
For the same reason if $P \neq P_1, P_2$ and $P$ is an ascendant of $P_2$, iii) holds true.
If $P$ is a descendant of $P_2$, consider the subtree $\tau$ hanging off $P_2$ on $u$ in
$T$; $\tau$ is disjoint from $P_2$. Similarly, let $\tau'$ be the subtree hanging off $P_2$
on $u$ in $T'$. The induction hypothesis with $P_1$ and $P_2$ for $T'$ implies that $\tau'$
doesn't contain $v$. Since \textsf{LinAbsorb} can replicate a node but never creates a node that
didn't exist, $\tau$ doesn't include $v$.

Thus, it now remains to show that $T$ satisfies Property iii) of (1) when $P$ is $P_1$ or
$P_2$. If $P = P_1$, the precursor $P'_1$ of $P_1$ must have $v$. Thus, using the induction
hypothesis with $P'_1$ and $P_2$ for $T'$, we have Property (1) for $T$.

We now consider the most interesting case $P  = P_2$. By the induction hypothesis, we know
that $P_2'$, the precursor of $P_2$, has exactly one occurrence of $v$ and the nodes above
$v$  are disjoint from those below $v$ on $P'_2$ . Then, by
Observation~\ref{obs:NoMoveAcrossV}, i) and ii) immediately follow. For the sake of
contradiction, suppose iii) doesn't hold for $T$. Then, it means that there exists a subtree
$\tau$ hanging off $P_2$ on $u$, which appears below $v$ on $P_2$, that contains $v$.
However, by Observation~\ref{obs:NoMoveAcrossV} and the induction hypothesis regarding $P_1$
and $P'_2$ for $T'$, $u$ can only occur below $v$ on $P'_2$. Since $\tau$ is associated with
an occurrence of $v$, $\tau$ must hang off $P'_2$ on $u$ that appears below $v$. This
implies that $T'$ violates iii) of  Property (1) for $P_1$, $P_2$, and the path-node $P_3$ in
$L(\tau)$ ($P_3$ is also a path-node in $L(T)$), which is a contradiction.
\end{proof}

The key idea behind (in particular iii) of) Property $(1)$ is
Observation~\ref{obs:NoMoveAcrossV}. Since the nodes before and after the single occurrence
of $v$ on $P_2$ are disjoint, the absorption replacement process cannot move wedges across
$v$. We illustrate the idea with the following example.

\paragraph{Example}
This example illustrates Property (1) of the previous lemma. Consider the following derivation tree.
\begin{center}
\scalebox{0.75}{\begin{tikzpicture}[node distance = 1cm,
    every node/.style={circle, draw,minimum size = 5mm,inner sep=1pt},
    ->,
    >=Stealth,  
    semithick,
    scale=0.8,
    ]
    \node[circle] (n0) {$u_1$};
    \node[circle, below left=of n0] (n1) {$v$};
    \node[circle, below left=of n1] (n2) {$u_2$};
    \node[circle, below left=of n2] (n3) {$v$};
    \node[circle, below left=of n3] (n4) {$u_3$};
    \node[circle, below left=of n4] (n5) {$v$};

    \node[circle, below right=of n1] (n6) {$u_4$};
    \node[circle, below left=of n6] (n7) {$v$};
    \node[circle, below left=of n7] (n8) {$u_5$};
    \node[circle, below left=of n8] (n9) {$u_6$};

    \node[circle, below right=of n8] (n10) {$u_7$};
    \node[circle, below left=of n10] (n11) {$v$};

    \draw[blue,thick,densely dotted] (n0) -- (n1);
    \draw (n1) -- (n2);
    \draw (n2) -- (n3);
    \draw (n3) -- (n4);
    \draw (n4) -- (n5);
    \draw[blue,thick,densely dotted] (n1) -- (n6);
    \draw[blue,thick,densely dotted] (n6) -- (n7);
    \draw[blue,thick,densely dotted] (n7) -- (n8);
    \draw (n8) -- (n9);
    \draw[blue,thick,densely dotted] (n8) -- (n10);
    \draw[blue,thick,densely dotted] (n10) -- (n11);
\end{tikzpicture}}
\end{center}
Property $(1)$ argues that something like depicted above cannot happen, where there is a
wedge from $v$ to $v$ that appears between the second and third left-most paths, where $v$
also appears in the first left-most path. In other words, a root-to-leaf path $Q$ in $T$
colored blue corresponds to three path-nodes, $P_1 = (u_1, v, u_2, v, u_3, v)$, $P_2 = (u_4,
v, u_5, u_6)$, and $P_3 = (u_7, v)$. Note that $P_1$ has multiple occurrences of $v$ and
$P_2$ has exactly one occurrence of $v$. So, $T$ satisfies i) and ii) of Property (1), but
violates iii) of Property (1) because of $P_3$ including $v$.

This cannot happen for the following reason. Since $v$ occurs exactly once on $P_2$ when the
linear absorption is run on $P_2$, $u_5$ can't move across $v$ on $P_2$
(Observation~\ref{obs:NoMoveAcrossV}). By recursively applying this argument, we can
conclude that $u_5$ must appear below $v$ on (the path corresponding to $P_2$) in the
initial canonical tree. However, this is not possible since the canonical transformation
would have moved the wedge defined by the wedge-path $(v, u_5, u_7, v)$ to $P_1$.

\begin{lemma}
\label{lem:nonlinear-final-in-t:app}
When the absorption replacement process terminates, any resulting derivation tree has depth at most $n(2p+2)$.
\end{lemma}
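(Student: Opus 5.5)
The plan is to bound, for an arbitrary resulting tree $T \in A$ and an arbitrary root-to-leaf path $Q$ in $T$, the number of times each nonterminal (grounded IDB) $v$ occurs on $Q$. If every $v$ occurs at most $2p+2$ times, then $Q$ has at most $(2p+2)n$ internal nodes. Hence its length, and therefore the depth of $T$, is at most $n(2p+2)$. Since there are only $n$ distinct nonterminals, this per-vertex count is the whole argument.

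Fix $Q$ and let $\beta = \beta(Q)$ be the corresponding root-to-path-node path in $L(T)$. The nodes of $Q$ are exactly the disjoint union of the prefixes $s(P)$ over $P \in \beta$, so I count occurrences of $v$ prefix by prefix. Let $P_1$ be the first path-node of $\beta$ whose walk contains $v$. Property (2) of Lemma~\ref{lem:non-linear-key} gives at most $2p+1$ occurrences of $v$ in $P_1$, hence in $s(P_1)$.

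Next let $P_2$ be the first path-node below $P_1$ on $\beta$ whose prefix $s(P_2)$ contains $v$. If no such $P_2$ exists, the bound $2p+1$ already holds. Otherwise Property (1)(i) applied to the pair $(P_1,P_2)$ says $P_2$ contains $v$ exactly once. Property (1)(iii) says no descendant $P_3$ of $P_2$ in $\beta$ contains $v$ at all, so in particular no later prefix on $Q$ contains $v$. Path-nodes strictly between $P_1$ and $P_2$ contribute nothing by the choice of $P_2$. The total is therefore at most $(2p+1)+1 = 2p+2$ occurrences of $v$ on $Q$.

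The main point to check is that Lemma~\ref{lem:non-linear-key}(1) really applies: it requires $v$ to lie in the walk $P_1$ and in the prefix $s(P_2)$, which is exactly how $P_1$ and $P_2$ were chosen. One also has to confirm that occurrences of $v$ in $P_2$ beyond $s(P_2)$ are irrelevant, which holds because only the prefix lies on $Q$. Finally, summing $2p+2$ over the $n$ nonterminals (the sink leaf $t$ adds one edge) gives depth at most $n(2p+2)$, up to the off-by-one convention for depth used in~\eqref{eqn:general-f-trees}.
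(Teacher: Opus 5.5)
Your proposal is correct and is essentially the paper's argument. For a fixed root-to-leaf path $Q$, Property (2) of Lemma~\ref{lem:non-linear-key} bounds the occurrences of $v$ in the first path-node of $\beta(Q)$ by $2p+1$, and Property (1)(i),(iii) allows at most one further occurrence, giving $2p+2$ per nonterminal and depth at most $n(2p+2)$. Your bookkeeping via the prefixes $s(P)$ is slightly more careful than the paper's phrasing ``at most two path-nodes in $\beta$ include $v$'', since you choose $P_2$ as the first path-node whose \emph{prefix} contains $v$; but it is the same proof.
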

\begin{proof}
Consider any such tree $T$ and any root-to-leaf path $Q$ in $T$. Let $\beta$ be the path in
$L(T)$ corresponding to $Q$. Because of iii) of Property (1) of
Lemma~\ref{lem:non-linear-key}, there are at most two path-nodes in $\beta$ including $v$.
Further, the first and second path-nodes (if they exist) can have at most $(2p+1)$ and $1$
occurrences of $v$, respectively. Thus, $Q$ can have at most $2p+2$ occurrences of $v$.
\end{proof}

From Lemma~\ref{lem:linear-bound}, Observation~\ref{obs:lowdegree}, and Lemma~\ref{lem:nonlinear-final-in-t:app}, we have established the upper bound for the nonlinear case claimed in Theorem~\ref{thm:nonlinearmain}.
\fi

\iffullpaper
\section{Lower Bounds}
\label{sec:lowerbound:app}

Section~\ref{sec:prelimlem}
gives helper lemmas that will be used in the analysis.  Section~\ref{sec:sparse} gives a
recursive construction of sparse graphs and shows a lower bound for them on a $0$-stable
semiring,  establishing a lower bound for arbitrary $m$ on $0$-stable semirings.  Finally,
Section~\ref{sec:general} builds on this to establish our most general lower bound.

\subsection{Why Relaxation Algorithms?}

The motivation behind studying the class of relaxation algorithms is that most known algorithms that evaluate $\Name$ follow this restriction. An important detail is that in our definition of relaxation algorithms, every relaxation step updates only one variable with the most recent value of another variable, while keeping all other values the same. However, algorithms such as na\"ive and semi-na\"ive perform a ``batch'' update and update simultaneously multiple variables. We note here that because all $p$-stable semirings are naturally ordered, it never hurts in terms of convergence to use the most recent value of a variable to perform an update.

{ Na\"ive evaluation}~\cite{DBLP:books/aw/AbiteboulHV95} is a relaxation algorithm, where at every iteration we relax all edges in the graph (in any order).
Semi-na\"ive evaluation~\cite{DBLP:books/aw/AbiteboulHV95} is also a relaxation algorithm, where the edge-schedule has iterations
similar to naïve evaluation. However, each iteration does not relax all edges in the graph, but only the
subset where the variable value was updated in the previous iteration. This
creates a possibly shorter edge-schedule than the corresponding naïve schedule.

Both na\"ive and semi-na\"ive evaluation work for any $p$-stable semiring. However, there are specialized algorithms that take advantage of specific semiring properties to converge faster
\cite{GroundingBasedApproach}. For $\Name$ over totally ordered absorptive semirings,
evaluation can be done with $O(m\log m)$ semiring operations (such as the tropical semiring), using a length $O(m)$ edge
schedule, with an algorithm similar to Dijkstra's algorithm~\cite{CLRS}. Similarly, for semirings of
rank $r$, evaluation can be done with $O(rm)$ semiring operations, using an edge schedule of
the same length. These algorithms are both relaxation algorithms.


Finally, we note that in our definition of relaxation algorithms, we only consider linear $\Name$. It is also possible (but outside the scope of this paper) to generalize relaxation algorithms to work on general $\Name$ programs, by relaxing a monomial-at-a-time instead of an edge-at-a-time.


\subsection{Helpful Preliminary Lemmas}
\label{sec:prelimlem}

We present here two combinatorial lemmas that are useful for the remaining lower bound proofs.

The absorptive lower bounds essentially bounds the length of a schedule that covers a set of paths one time. This lemma bounds the length to cover the same set of paths, but multiple times.
If $\sigma$ is a sequence of edges and $q\geq 1$ an integer, we let $\sigma^{(q)}$ denote the sequence of edges obtained by repeating $\sigma$, $q$ times. If $H$ is a set of sequences, then we also define $H^{(q)} := \{\sigma^{(q)}\mid \sigma \in H\}$.

 \begin{lemma} \label{lemma:InductiveScheduleAppend}
 Let $H$ be a set of edge sequences. Suppose that for every $\theta\in(0,1]$, any edge schedule that covers at least a fraction $\theta$ of the paths in $H$ has length $> L(\theta)$, where $L:(0,1]\rightarrow\mathbb{Z}^{>0}$ is an increasing function. Let $q>1$ be an integer and $\theta_q\in(0,1]$. Then, any edge schedule that covers at least a fraction $\theta_q$ of $H^{(q)}$ has length at least $q \cdot L(\theta_q/q)$.
 \end{lemma}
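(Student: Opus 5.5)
The plan is to cut $\sigma$ into $q$ consecutive blocks and charge each block with covering one of the $q$ repeated copies. Let $\sigma$ cover at least a $\theta_q$ fraction of $H^{(q)}$. For each covered $\rho^{(q)}$ with $\rho\in H$, fix an embedding of $\rho^{(q)}$ as a subsequence of $\sigma$. This embedding splits into $q$ consecutive embeddings of $\rho$, the $j$-th copy occupying positions strictly after the $(j-1)$-th copy.

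The first idea is greedy cutting. Define breakpoints $0=t_0<t_1<\dots<t_q$ in $\sigma$, and let $\sigma_j=\sigma(t_{j-1},t_j]$. Choose $t_1$ as the shortest prefix that covers at least a $\theta_q/q$ fraction of $H$; if no prefix does, $\sigma$ itself has length $>L(\theta_q/q)$ only once. Then $\sigma_1$ has length $>L(\theta_q/q)-1$, which is not obviously enough, so a cleaner pigeonhole formulation is preferable.

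The cleaner route uses greedy embeddings. For each covered $\rho$, take the greedy (leftmost) embedding of $\rho^{(q)}$, and let $e_j(\rho)$ be the position where the $j$-th copy ends. Set $e_0(\rho)=0$. The segment $\sigma(e_{j-1}(\rho),e_j(\rho)]$ then covers $\rho$.

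Now partition $\sigma$ into $q$ contiguous pieces $\sigma_1,\dots,\sigma_q$ of lengths $\ell_1,\dots,\ell_q$, chosen by the thresholds below. I aim to show that for every $j$ some contiguous piece covers at least a $\theta_q/q$ fraction of $H$, and that these pieces are disjoint.

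To achieve this, define $t_j$ as the minimal position such that at least $j\cdot\theta_q|H|/q$ covered $\rho$ satisfy $e_j(\rho)\le t_j$. Then look at the $\rho$ with $e_{j-1}(\rho)>t_{j-1}$ and $e_j(\rho)\le t_j$. By minimality of $t_{j-1}$, at most $(j-1)\theta_q|H|/q$ covered $\rho$ have $e_{j-1}(\rho)\le t_{j-1}-1$. Since $e_{j-1}(\rho)\le e_j(\rho)$, at least $\theta_q|H|/q$ of the $\rho$ with $e_j(\rho)\le t_j$ have $e_{j-1}(\rho)\ge t_{j-1}$.

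This is the main obstacle: boundary off-by-one effects when $e_{j-1}(\rho)=t_{j-1}$ exactly. I would handle it by defining the segments to start at $t_{j-1}$ inclusive, so that consecutive segments overlap in at most one position. This yields $|\sigma_j|>L(\theta_q/q)$ for each $j$, and hence $|\sigma|\ge\sum_j(|\sigma_j|-1)+1$.

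Since $L$ is integer-valued, $|\sigma_j|\ge L(\theta_q/q)+1$, so $|\sigma|\ge q\,L(\theta_q/q)+1\ge qL(\theta_q/q)$ as claimed.

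Finally, monotonicity of $L$ is used to handle the case where the greedy counts exceed the thresholds: any segment covering a fraction larger than $\theta_q/q$ still has length $>L(\theta_q/q)$.
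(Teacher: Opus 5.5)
Your argument is correct, but it is organized differently from the paper's.

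The paper argues by contradiction using a fixed partition. Suppose $|\sigma|\le q\,L(\theta_q/q)$, and split $\sigma$ into $k\le q$ consecutive blocks of length at most $L(\theta_q/q)$. Each block covers fewer than $\theta_q|H|/q$ members of $H$. If $\sigma$ covers $s^{(q)}$, then some copy of $s$ lies entirely inside one block, because the $k-1\le q-1$ cuts can break at most $q-1$ of the $q$ copies. A union bound then leaves fewer than $\theta_q|H|$ covered elements of $H^{(q)}$.

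You instead argue directly. You choose adaptive breakpoints $t_j$ as quantiles of the end positions $e_j(\rho)$ of the $j$-th copies, and show that each of the $q$ resulting segments covers at least a $\theta_q/q$ fraction of $H$. The paper's route is shorter. Yours, done carefully, is slightly stronger.

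The off-by-one issue you call the main obstacle does not actually arise. The $j$-th copy of $\rho$ occupies only positions strictly greater than $e_{j-1}(\rho)\ge t_{j-1}$, so it already lies in the half-open segment $\sigma(t_{j-1},t_j]$. These segments are disjoint and each has length $>L(\theta_q/q)$, so $|\sigma|\ge t_q>q\,L(\theta_q/q)$, even for real-valued $L$. By contrast, your inclusive-endpoint accounting explicitly relies on $L$ being integer-valued, and so, implicitly, does the paper's splitting into blocks of length at most $L(\theta_q/q)$. This matters because the paper later applies the lemma with $L(\theta)=\theta^{2/n}n^3/(3e^3)$, which is not integer-valued.

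Three smaller points. Monotonicity of $L$ is not needed: a segment covering more than a $\theta_q/q$ fraction in particular covers at least that fraction. The leftmost embedding is not needed either; any fixed embedding of each covered $\rho^{(q)}$ works. The abandoned ``greedy cutting'' paragraph can simply be dropped.
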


\begin{proof}
For the sake of contradiction, suppose that there is an edge schedule $\sigma$ with length $|\sigma| \leq q \cdot L(\theta_q/q)$ that covers at least a $\theta_q$-fraction of $H^{(q)}$. We can split $\sigma$ into $k$ consecutive subsequences $\sigma_1, \dots, \sigma_k$, such that $k \leq q$ and for each $i, |\sigma_i| \leq L(\theta_q/q)$.

From our assumption in the lemma, every subsequence $\sigma_i$ covers strictly less than a $(\theta_q)/q$-fraction of $H$. Let $\bar{H}$ be the set of sequences from $H$ covered by some subsequence $\sigma_i$: we then have $|\bar{H}| < k\cdot (\theta_q/q) \cdot |H| \leq \theta_q \cdot |H|$. We will show that for any $s \notin \bar{H}$, $s^{(q)}$ is not covered by the schedule $\sigma$, which will complete the proof.
Indeed, since $\sigma$ has $k \leq q $ consecutive subsequences $\sigma_1, \dots, \sigma_k$, if it would cover $s^{(q)}$, then $s$ would be fully contained within one of $\sigma_i$. But this would mean that $\sigma_i$ covers $s$, contradicting that $s \notin \bar{H}$.
\end{proof}

\begin{lemma}
    \label{lemma:VertexScheduleLemma}
    Consider a set $S$ of $k$ items. Let $\theta\in(0,1]$. Consider a sequence $\sigma$ of items from $S$, where at least $\theta k!$ permutations of the $k$ items are covered by $\sigma$. Then, there are at least $\frac{k}{2}$ items in $S$ that appear strictly more than $\theta^{\frac{2}{k}}\frac{k}{2e^2}$ times in $\sigma$.
\end{lemma}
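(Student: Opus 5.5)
We argue by contradiction, using the same counting idea as Lemma~\ref{lemma:AbsorptiveN3LB}. Put $t = \theta^{2/k}\frac{k}{2e^2}$ and call an item \emph{frequent} if it occurs more than $t$ times in $\sigma$. Suppose fewer than $k/2$ items are frequent. Then more than $k/2$ items are \emph{rare}, meaning each occurs at most $t$ times. We will upper bound the number of permutations covered by $\sigma$ and show the bound is below $\theta k!$.

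\textbf{Encoding a covered permutation.} Fix a covered permutation $\pi$ and fix one embedding of $\pi$ as a subsequence of $\sigma$. Let $R$ be the set of rare items and $r=|R| > k/2$. The embedding assigns each rare item an occurrence in $\sigma$, and we choose $\lceil k/2\rceil$ of the rare items arbitrarily. The set of chosen occurrence positions determines, in order of position in $\sigma$, the relative order of those $\lceil k/2 \rceil$ items inside $\pi$. To recover all of $\pi$, we also need the set of positions in $\pi$ that these items occupy, and the arrangement of the remaining $\lfloor k/2\rfloor$ items.

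So we encode $\pi$ by three pieces of data:
\begin{enumerate}
\item the chosen occurrence positions in $\sigma$, which form at most $\binom{r t}{\lceil k/2\rceil}$ possibilities, since rare items contribute at most $rt \le kt$ positions in total;
\item the slots in $\pi$ taken by these items, which give at most $\binom{k}{\lceil k/2\rceil}\le 2^k$ possibilities;
\item the ordering of the other items, which gives at most $\lfloor k/2\rfloor!$ possibilities.
\end{enumerate}
Hence the number of covered permutations is at most $\binom{kt}{\lceil k/2\rceil}\,2^k\,\lfloor k/2\rfloor!$. (Simply multiplying by the number of choices of rare items is harmless but wasteful.)

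\textbf{Comparison.} Using $\binom{N}{j}\le (eN/j)^j$, the first factor is at most $(2ekt/k)^{k/2} = (2et)^{k/2}$. With $t=\theta^{2/k}k/(2e^2)$, this equals $\theta\,(k/e)^{k/2}$. Next compare $2^k\lfloor k/2\rfloor!\,(k/e)^{k/2}$ with $k!\ge (k/e)^k$. This requires $2^k (k/2)!\le (k/e)^{k/2}$ up to lower-order factors, but $(k/2)!\approx (k/2e)^{k/2}$, so the factor $2^k$ is not absorbed.

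\textbf{Main obstacle.} The real difficulty is managing these constants. The fix I would try first is to avoid paying separately for the slots and the free items. Instead, encode $\pi$ by listing every other element of $\pi$ from a single sparse set, exactly as in Lemma~\ref{lemma:AbsorptiveN3LB}: a set of alternate positions determines the rest given the occurrence structure. Alternatively, bound more carefully with $\binom{k}{k/2}\,(k/2)! = k!/(k/2)!$. Then the count is at most $\theta (k/e)^{k/2}\cdot k!/(k/2)!$, and since $(k/2)!\ge (k/(2e))^{k/2}$, this gives $\theta\, 2^{k/2} k!$, which is still off by $2^{k/2}$.

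Closing that gap means exploiting $r>k/2$ more strongly: take all $r$ rare items, not just $k/2$ of them, and track the sharper exponent. If needed, I would accept that the constant $2e^2$ in $t$ is exactly what absorbs the leftover $2^{k/2}$ factor, through $(2e^2 t/k)$-type terms. I would redo the computation with $j=r$ and $\binom{kt}{r}\le(ekt/r)^r\le (2et)^r$, and check that $(2et)^r (k-r)!\,\binom{k}{r} < \theta k!$ holds exactly when $t\le \theta^{2/k}k/(2e^2)$. This inequality gives the contradiction, so at least $k/2$ items must be frequent.
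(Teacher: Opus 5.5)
Your proposal never reaches a valid inequality, and the check you defer to at the end is false. Since $\binom{k}{r}(k-r)!=k!/r!$, that check reads $(2et)^r<\theta\,r!$. With $t=\theta^{2/k}k/(2e^2)$ we have $2et=\theta^{2/k}k/e$. Take $\theta=1$ and $r=k/2+1$, which is allowed since you only know $r>k/2$. Then $(k/e)^r/r!$ is of order $2^{r}/\sqrt{r}$, far above $1$. So the leftover factor of roughly $2^{k/2}$ is genuinely present in your encoding, and the constant $2e^2$ does not absorb it. Its source is step~1: you let the chosen positions range over the occurrences of \emph{all} rare items, bounding them by $\binom{kt}{j}$ (or by $\binom{kt}{r}\le(2et)^r$). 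This costs a factor $(k/j)^j$. The ``every other element'' trick of Lemma~\ref{lemma:AbsorptiveN3LB} does not rescue this. It works for paths because consecutive edges share endpoints, whereas alternate items of a permutation carry no information about the remaining ones.

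The missing idea is that the set $R$ of rare items is determined by $\sigma$ alone. So you may fix $R'\subseteq R$ with $|R'|=j=\lceil k/2\rceil$ \emph{before} counting, and look only at $\sigma|_{R'}$, which has length at most $jt$. Every covered $\pi$ restricts to a permutation of $R'$ that is a subsequence of $\sigma|_{R'}$. Hence there are at most $\binom{\lfloor jt\rfloor}{j}\le(et)^j$ possible restrictions, and each extends to exactly $k!/j!$ permutations of $S$; your corrected count $\binom{k}{j}(k-j)!$ was already exact. Therefore fewer than $(et)^j(e/j)^j\,k!=(e^2t/j)^j\,k!\le\theta^{2j/k}k!\le\theta\,k!$ permutations are covered, using $j\ge k/2$ and $\theta\le 1$. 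The threshold $\theta^{2/k}k/(2e^2)$ comes out exactly. Equivalently, with $j=r$, you should use $\binom{\lfloor rt\rfloor}{r}\le(et)^r$ rather than $(2et)^r$.

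This is the contrapositive of the paper's proof. The paper takes $S'$ to be the $k/2$ least frequent items. It uses the same $k!/(k/2)!$ fibre count to show that $\sigma$ covers at least $\theta\,(k/2)!$ permutations of $S'$. From $\binom{|\sigma|_{S'}|}{k/2}\ge\theta\,(k/2)!$ it deduces $|\sigma|_{S'}|>\theta^{2/k}(k/2e)^2$. By pigeonhole, the most frequent item of $S'$ occurs more than $\theta^{2/k}k/(2e^2)$ times, and hence so does every item outside $S'$.
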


\begin{proof}
    We start by showing a lower bound on the length of $\sigma$. Each subsequence of $k$ items can cover at most one new permutation. Therefore, we must have that
    $ \binom{|\sigma|}{k} \geq \theta k! $.
    Now use that $n!>(n/e)^n$ and $\binom{a}{b}\leq(ea/b)^b$.
    \begin{align*}
        \left(\frac{e|\sigma|}{k}\right)^{k} > \theta\left(\frac{k}{e}\right)^{k}, \quad\quad \text{ and thus } \quad
        |\sigma| &> \theta^{\frac{1}{k}}\left(\frac{k}{e}\right)^2
    \end{align*}

    Now, let $S'\subset S$ be the set of the $k/2$ items that appear the least frequently in $\sigma$. We can partition the $k!$ permutations into $(k/2)!$ subsets of permutations, where for each subset, the subsequence of the items in $S'$ has a fixed permutation. Each such partition has the same number of permutations. Suppose that fewer than $\theta (k/2)!$ permutations of the items in $S'$ are subsequences of $\sigma$. Then, no permutation of $S$ whose restriction to $S'$ is one of the remaining sub-permutations can occur, and hence fewer than a fraction $\theta$ of the permutations can occur, a contradiction. That means that for the items in $S'$, at least $\theta(k/2)!$ permutations must occur.

    Each permutation of these items also has to be a sub-sequence. Therefore, they need to occur strictly more than $\theta^{\frac{2}{k}}\left(\frac{k}{2e}\right)^2$ times, applying the above bound. Hence, there exists an item in $S'$ that occurs strictly more than $\theta^{\frac{2}{k}}\frac{k}{2e^2}$ times, by the pigeonhole principle. All the items in the other half appear at least as many times.
\end{proof}

\subsection{Lower Bounds for $0$-Stable Sparse Graphs}
\label{sec:sparse}

This section introduces a family of recursively defined graphs with $n$ vertices and whose number of edges is a parameter. Then we give a lower bound on this class of graphs, extending the prior lower bound to computation graphs with an arbitrary number of edges $m$.

\begin{definition}[Recursive Cliques]

Let $\alpha \geq 1, k \geq 3$ be two integers. A $(k,\alpha)$-recursive clique is a directed graph defined as follows. For the base level, create $k^{\alpha}$ groups, where each group is $1$ vertex. Next levels are constructed hierarchically.  A group at level $\ell$ is created from $k$ groups at level $\ell-1$. The groups are constructed recursively until there is one group.

For each group $g$ in the hierarchy, except at the top level and bottom layer, create two extra vertices, enter $e_g$ and exit $x_g$. Create an edge from $e_g$ to the enter vertex of every sub-group, and from the exit node of every sub-group to $x_g$. Additionally, for each group, create an edge from the exit vertex of every sub-group to the enter vertex of every other sub-group.
In the bottom level (a single vertex), the vertex itself is considered both the enter and exit vertex.
\end{definition}

A visualization of a $(4,2)$-recursive clique is shown in Figure~\ref{fig:graph}, and a $(3,3)$-recursive clique is shown in Figure~\ref{fig:graph33}.

\begin{figure}[H]
    \centering
    \resizebox{0.4\textwidth}{!}{\begin{tikzpicture}[
    main/.style = {circle, draw, thick, minimum size=6mm},
    enter/.style = {circle, draw, thick, fill=blue!20, minimum size=6mm},
    exit/.style = {circle, draw, thick, fill=red!20, minimum size=6mm},
    edgeclique/.style = {->, thick, color=black},
    edgeenter/.style = {->, thick, color=gray!70},
    edgeouter/.style = {->, thick, color=red}
]

\coordinate (C1) at (0,0);
\coordinate (C2) at (6,0);
\coordinate (C3) at (0,-6);
\coordinate (C4) at (6,-6);

\draw[gray, thick] (C1) circle (2.3cm);
\draw[gray, thick] (C2) circle (2.3cm);
\draw[gray, thick] (C3) circle (2.3cm);
\draw[gray, thick] (C4) circle (2.3cm);

\newcommand{\createClique}[2]{
    \foreach \j/\x/\y in {1/-0.9/0.6, 2/0.9/0.6, 3/-0.9/-0.6, 4/0.9/-0.6} {
        \node[main] (v#1\j) at ([xshift=\x cm,yshift=\y cm]#2) {};
    }
    \node[enter] (e#1) at ([yshift=1.6cm]#2) {e#1};
    \node[exit] (x#1) at ([yshift=-1.6cm]#2) {x#1};
    \foreach \j in {1,...,4} {
        \draw[edgeenter] (e#1) edge (v#1\j);
        \draw[edgeenter] (v#1\j) edge (x#1);
    }
    \foreach \a in {1,...,4} {
        \foreach \b in {1,...,4} {
            \ifnum\a<\b
                \draw[edgeclique] (v#1\a) edge (v#1\b);
                \draw[edgeclique] (v#1\b) edge (v#1\a);
            \fi
        }
    }
}

\createClique{1}{C1}
\createClique{2}{C2}
\createClique{3}{C3}
\createClique{4}{C4}

\begin{scope}[edgeouter/.append style={opacity=0.3}]
\draw[edgeouter] (x1) edge (e2);
\draw[edgeouter] (x1) edge (e3);
\draw[edgeouter] (x1) edge (e4);
\draw[edgeouter] (x2) edge (e1);
\draw[edgeouter] (x2) edge (e3);
\draw[edgeouter] (x2) edge (e4);
\draw[edgeouter, bend left=15] (x3) edge (e1);
\draw[edgeouter] (x3) edge (e2);
\draw[edgeouter] (x3) edge (e4);
\draw[edgeouter] (x4) edge (e1);
\draw[edgeouter, bend right=15] (x4) edge (e2);
\draw[edgeouter] (x4) edge (e3);
\end{scope}

\begin{scope}[even odd rule]
\clip (-3.5,-9.5) rectangle (9.5,3.5)
    (C1) circle (2.3cm) (C2) circle (2.3cm) (C3) circle (2.3cm) (C4) circle (2.3cm);
\draw[edgeouter] (x1) edge (e2);
\draw[edgeouter] (x1) edge (e3);
\draw[edgeouter] (x1) edge (e4);
\draw[edgeouter] (x2) edge (e1);
\draw[edgeouter] (x2) edge (e3);
\draw[edgeouter] (x2) edge (e4);
\draw[edgeouter, bend left=15] (x3) edge (e1);
\draw[edgeouter] (x3) edge (e2);
\draw[edgeouter] (x3) edge (e4);
\draw[edgeouter] (x4) edge (e1);
\draw[edgeouter, bend right=15] (x4) edge (e2);
\draw[edgeouter] (x4) edge (e3);
\end{scope}

\end{tikzpicture}}
    \caption{Example of a $(4,2)$-recursive clique.}
    \label{fig:graph}
\end{figure}

\begin{figure}[H]
    \centering
    \resizebox{0.7\textwidth}{!}{\begin{tikzpicture}[
    main/.style = {circle, draw, thick, minimum size=4mm, inner sep=0pt},
    enter/.style = {circle, draw, thick, fill=blue!20, minimum size=5mm, inner sep=0pt},
    exit/.style = {circle, draw, thick, fill=red!20, minimum size=5mm, inner sep=0pt},
    edgeclique/.style = {->, thin, color=black},
    edgeenter/.style = {->, thin, color=gray!70},
    edgemid/.style = {->, semithick, color=red},
    edgeouter/.style = {->, thick, color=green!60!black}
]

\coordinate (G1) at (8,0);
\coordinate (G2) at (0,-15);
\coordinate (G3) at (16,-15);

\def\subR{2.0}
\def\grpR{7.2}

\draw[gray, thick] (G1) circle (\grpR cm);
\draw[gray, thick] (G2) circle (\grpR cm);
\draw[gray, thick] (G3) circle (\grpR cm);

\newcommand{\placeGroup}[2]{
    \coordinate (#1C1) at ([yshift=3.6cm]#2);
    \coordinate (#1C2) at ([xshift=-3.6cm,yshift=-2.6cm]#2);
    \coordinate (#1C3) at ([xshift=3.6cm,yshift=-2.6cm]#2);
    \draw[gray!50, thin] (#1C1) circle (\subR cm);
    \draw[gray!50, thin] (#1C2) circle (\subR cm);
    \draw[gray!50, thin] (#1C3) circle (\subR cm);
}
\placeGroup{A}{G1}
\placeGroup{B}{G2}
\placeGroup{C}{G3}

\newcommand{\drawSubClique}[2]{
    \foreach \j/\x/\y in {1/0/0.8, 2/-0.85/-0.55, 3/0.85/-0.55} {
        \node[main] (#1\j) at ([xshift=\x cm,yshift=\y cm]#2) {};
    }
    \foreach \a in {1,...,3} {
        \foreach \b in {1,...,3} {
            \ifnum\a<\b
                \draw[edgeclique] (#1\a) edge (#1\b);
                \draw[edgeclique] (#1\b) edge (#1\a);
            \fi
        }
    }
}

\newcommand{\drawSubEnterExit}[3]{
    \node[enter] (#1e#2) at ([yshift=1.55cm]#3) {};
    \node[exit] (#1x#2) at ([yshift=-1.55cm]#3) {};
    \foreach \j in {1,...,3} {
        \draw[edgeenter] (#1e#2) edge (#1v#2\j);
        \draw[edgeenter] (#1v#2\j) edge (#1x#2);
    }
}

\foreach \g in {A,B,C} {
    \foreach \s in {1,2,3} {
        \drawSubClique{\g v\s}{\g C\s}
    }
}
\drawSubEnterExit{A}{1}{AC1}\drawSubEnterExit{A}{2}{AC2}\drawSubEnterExit{A}{3}{AC3}
\drawSubEnterExit{B}{1}{BC1}\drawSubEnterExit{B}{2}{BC2}\drawSubEnterExit{B}{3}{BC3}
\drawSubEnterExit{C}{1}{CC1}\drawSubEnterExit{C}{2}{CC2}\drawSubEnterExit{C}{3}{CC3}

\foreach \g/\gc in {A/G1,B/G2,C/G3} {
    \node[enter, minimum size=6mm] (\g E) at ([yshift=6.5cm]\gc) {};
    \node[exit, minimum size=6mm] (\g X) at ([yshift=-6.5cm]\gc) {};
    \foreach \s in {1,2,3} {
        \draw[edgeenter] (\g E) edge (\g e\s);
        \draw[edgeenter] (\g x\s) edge (\g X);
    }
}

\newcommand{\midEdges}[1]{
    \draw[edgemid, bend left=18] (#1x1) edge (#1e2);
    \draw[edgemid, bend left=18] (#1x1) edge (#1e3);
    \draw[edgemid, bend left=18] (#1x2) edge (#1e3);
    \draw[edgemid, bend left=18] (#1x2) edge (#1e1);
    \draw[edgemid, bend left=18] (#1x3) edge (#1e1);
    \draw[edgemid, bend left=18] (#1x3) edge (#1e2);
}
\begin{scope}[edgemid/.append style={opacity=0.25}]
\midEdges{A}\midEdges{B}\midEdges{C}
\end{scope}
\begin{scope}[even odd rule]
\clip (-7,-21) rectangle (21,7)
    (AC1) circle (\subR cm) (AC2) circle (\subR cm) (AC3) circle (\subR cm)
    (BC1) circle (\subR cm) (BC2) circle (\subR cm) (BC3) circle (\subR cm)
    (CC1) circle (\subR cm) (CC2) circle (\subR cm) (CC3) circle (\subR cm);
\midEdges{A}\midEdges{B}\midEdges{C}
\end{scope}

\newcommand{\outerEdges}{
    \draw[edgeouter, bend right=10] (AX) edge (BE);
    \draw[edgeouter, bend left=10] (AX) edge (CE);
    \draw[edgeouter, bend left=10] (BX) edge (AE);
    \draw[edgeouter, bend right=10] (BX) edge (CE);
    \draw[edgeouter, bend right=10] (CX) edge (AE);
    \draw[edgeouter, bend left=10] (CX) edge (BE);
}
\begin{scope}[edgeouter/.append style={opacity=0.3}]
\outerEdges
\end{scope}
\begin{scope}[even odd rule]
\clip (-7,-21) rectangle (21,7)
    (G1) circle (\grpR cm) (G2) circle (\grpR cm) (G3) circle (\grpR cm);
\outerEdges
\end{scope}

\end{tikzpicture}}
    \caption{Example of a $(3,3)$-recursive clique.}
    \label{fig:graph33}
\end{figure}

Note that the $(k,1)$-recursive clique is the standard $k$-clique.
We can control the sparsity of the graph by modifying the parameter $\alpha$; the larger $\alpha$ becomes, the sparser the graph will be. We next show some critical properties of recursive cliques.

\begin{lemma}
A $(k,\alpha)$-recursive clique has the following properties:
\begin{enumerate}
    \item the total number of groups is  $\frac{k^\alpha-1}{k-1}$;
    \item it has $\Theta(k^\alpha)$ vertices; and
    \item it has $\Theta(k^{\alpha+1})$ edges.
\end{enumerate}
\end{lemma}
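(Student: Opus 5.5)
We count level by level in the hierarchy of groups, treating $k \ge 3$ and $\alpha \ge 1$ as fixed parameters. Index the levels from $0$ (the base, whose groups are single vertices) up to $\alpha$ (the single top group). A group at level $\ell$ is formed from $k$ groups at level $\ell-1$, so level $\ell$ contains exactly $k^{\alpha-\ell}$ groups.

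\emph{Property (1).} Summing over the levels $\ell = 1,\dots,\alpha$ that contain composite groups gives $\sum_{\ell=1}^{\alpha} k^{\alpha-\ell} = \sum_{j=0}^{\alpha-1} k^j = \frac{k^\alpha-1}{k-1}$. This is exactly the stated count, so the statement's convention is that ``groups'' means the composite groups, excluding the $k^\alpha$ base singletons. If the base singletons were included, the count would instead be $\frac{k^{\alpha+1}-1}{k-1}$.

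\emph{Property (2).} The vertices are the $k^\alpha$ base vertices, plus two vertices (enter $e_g$ and exit $x_g$) for each composite group other than the top one. By (1) there are at most $\frac{k^\alpha-1}{k-1}$ such groups, so at most $2\cdot\frac{k^\alpha-1}{k-1} \le k^\alpha$ extra vertices, using $k \ge 3$. Hence the vertex count lies between $k^\alpha$ and $2k^\alpha$, which is $\Theta(k^\alpha)$.

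\emph{Property (3).} Fix a composite group $g$ with its $k$ subgroups.
\begin{itemize}
\item The clique edges between subgroups, from each subgroup's exit to every other subgroup's enter, number $k(k-1)$.
\item The edges $e_g \to$ (subgroup enter) and (subgroup exit) $\to x_g$ number $2k$. These are present only when $g$ is not the top group.
\end{itemize}
So every composite group contributes between $k(k-1)$ and $k^2+k$ edges, i.e.\ $\Theta(k^2)$. Multiplying by the number of groups from (1), $\frac{k^\alpha-1}{k-1} = \Theta(k^{\alpha-1})$, gives $\Theta(k^{\alpha+1})$ edges in total. For the base vertices, which serve as their own enter and exit, the clique edges are simply edges between vertices, and the counting is unchanged.

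No step is a real obstacle; the only subtle point is matching the group-count convention in (1). All constants in (2) and (3) are absolute, since $k \ge 3$.
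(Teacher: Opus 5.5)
Your proof is correct and follows essentially the same route as the paper: a geometric sum over the levels of the hierarchy for (1), the $k^\alpha$ base vertices plus two enter/exit vertices per non-root group for (2), and $k(k-1)$ clique edges per group plus $2k$ enter/exit edges per non-root group for (3). The differences are cosmetic. You sandwich the counts between explicit bounds, whereas the paper writes the exact totals $k^\alpha+2\bigl(\frac{k^\alpha-1}{k-1}-1\bigr)$ and $k(k-1)+\bigl(k(k-1)+2k\bigr)\bigl(\frac{k^\alpha-1}{k-1}-1\bigr)$. You also state explicitly that ``groups'' excludes the base singletons, which the paper leaves implicit.
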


\begin{proof}
 To prove the first item, note that the number of groups is just the number of nodes in a full $k$-ary tree with $\alpha$ levels, which is
 \[\sum_{i=0}^{\alpha-1}k^i=\frac{k^{\alpha}-1}{k-1}. \]

For the second item, the number of vertices is $k^\alpha$ plus two times the number of groups -- excluding the root group (each clique except the root creates an enter and exit node). Therefore, we have the following number of nodes:
    \[
    k^\alpha+2\left(\frac{k^\alpha-1}{k-1}-1\right) = \Theta( k^\alpha).
    \]

For the third item, we will count the number of edges introduced by each group. The root group introduces $k(k-1)$ edges, while all other groups add $k(k-1)+2k$ edges. This gives the following total number of edges:
    \begin{align*}
    k(k-1)+(k(k-1)+2k)\left(\frac{k^\alpha-1}{k-1}-1\right)
    = \Theta(k^{1+\alpha})
    \end{align*}
This completes the proof of the lemma.
\end{proof}

Next, we discuss the structure of simple paths in a recursive clique. Consider any group $g$ in the hierarchy above the lowest group, and let $g_1, \dots, g_k$ be its child groups. Note that the subgraphs induced by the nodes of $g_1, \dots, g_k$ are all isomorphic to each other. Let $P_g$ be the set of simple paths that start from $e_g$ (the enter node of $g$), end at $x_g$ (the exit node of $g$), and visit all nodes in the child groups. Similarly, we define $P_{g_1}, \dots, P_{g_k}$. The high-level observation is that we can create all paths in $P_g$ by choosing one path from each $P_{g_i}$ and taking all permutations of them. Formally, let
$$\mathcal{C}_g = \{\{p_1, \dots, p_k\} \mid \forall i \in [k]: p_i \in P_{g_i}\}.$$

A  set $C \in \mathcal{C}_g$ is called a {\em configuration}, and fixes a choice of one path for each child graph.
For a configuration $C$, we denote by $\textsf{Perm}(C)$ the set of all permutations of $C$.


\begin{lemma}
$P_g$ is in one-to-one correspondence with the set $\bigcup_{C \in \mathcal{C}_g} \textsf{Perm}(C)$.
\end{lemma}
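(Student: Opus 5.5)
We will prove the correspondence by giving an explicit map in each direction and checking that the two maps are mutually inverse.

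\emph{From permutations to paths.} Take a configuration $C=\{p_1,\dots,p_k\}$ with $p_i\in P_{g_i}$, and a permutation $(p_{\pi(1)},\dots,p_{\pi(k)})$ of $C$. Build the walk
\[
e_g \to p_{\pi(1)} \to p_{\pi(2)} \to \cdots \to p_{\pi(k)} \to x_g .
\]
Each $p_{\pi(i)}$ starts at $e_{g_{\pi(i)}}$ and ends at $x_{g_{\pi(i)}}$; at the bottom level these are the single vertex itself. The connecting edges all exist:
\begin{itemize}
\item the edge $(e_g, e_{g_{\pi(1)}})$;
\item the edges $(x_{g_{\pi(i)}}, e_{g_{\pi(i+1)}})$, which are clique edges between distinct sub-groups;
\item the edge $(x_{g_{\pi(k)}}, x_g)$.
\end{itemize}
The vertex sets of distinct sub-groups are disjoint, and each $p_i$ is simple and covers all nodes of $g_i$. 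Hence the concatenation is a simple path from $e_g$ to $x_g$ that visits every node of every child group, so it lies in $P_g$.

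\emph{From paths to permutations.} This is the main step. Take $P\in P_g$ and look at its edges. The only edges leaving $e_g$ go to enter vertices $e_{g_i}$. The only edges entering the vertex set $V(g_i)$ of a child group $g_i$ from outside $V(g_i)$ go into $e_{g_i}$: they come from $e_g$ or from $x_{g_j}$ with $j\neq i$. Symmetrically, the only edges leaving $V(g_i)$ leave from $x_{g_i}$. We need a structural claim about recursive cliques: every edge is either internal to some child group, or one of the connector edges at level $g$. We check this directly from the definition, since edges at deeper levels stay inside the corresponding group.

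Since $P$ is simple and $e_{g_i}$ is the unique entry point of $V(g_i)$, $P$ enters $V(g_i)$ at most once, and likewise leaves it at most once. So $P$ visits $V(g_i)$ in a single contiguous segment. Because $P$ visits all nodes of $g_i$, this segment runs from $e_{g_i}$ to $x_{g_i}$, covers all of $V(g_i)$, and is simple. Its internal edges lie in $g_i$, so the segment belongs to $P_{g_i}$. Reading off these segments in order gives a configuration together with an ordering of it, that is, an element of $\textsf{Perm}(C)$.

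\emph{Bijectivity.} The two maps are inverse to each other: splitting a concatenation recovers its pieces, and concatenating the segments of $P$ recovers $P$. Moreover, $\textsf{Perm}(C)$ and $\textsf{Perm}(C')$ are disjoint for $C\neq C'$, because a permutation determines its underlying set of paths. This yields the claimed one-to-one correspondence.

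The main obstacle is the contiguity argument in the second direction. It relies on the entry/exit uniqueness of each child group, which has to be verified carefully from the recursive definition. This includes the bottom level, where enter and exit coincide and each segment is a single vertex.
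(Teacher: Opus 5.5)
Your proof is correct and follows the same two-direction argument as the paper. Because $e_{g_i}$ is the only entry point and $x_{g_i}$ the only exit point of each child group, a path in $P_g$ must traverse each child group in one contiguous segment lying in $P_{g_i}$; conversely, the exit-to-enter clique edges let any ordering of a configuration be stitched into a path in $P_g$, and you carry both steps out in more detail than the paper. One phrase to tighten: your edge classification holds only for edges with both endpoints in $V(g)$, so add that a path in $P_g$ never leaves $V(g)$, since every edge out of $V(g)$ starts at $x_g$, which is the path's last vertex.
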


\begin{proof}
Indeed, for a path $p \in P_g$ to visit all nodes in the group $g$, it has to visit all child groups $g_1, \dots, g_k$ in some order. Moreover, to visit a child group $g_i$, the path must first enter from $e_{g_i}$ and exit from $x_{g_i}$ -- but then this subpath is a path in $P_{g_i}$.

For the other direction, note that the paths for each configuration are disjoint, since at least one path is different, and can be stitched together to form a path in $P_g$ since the enter and exit nodes are connected through a clique.
\end{proof}

From now on, we will think about a permutation of a configuration in $\mathcal{C}_g$ as a path in $P_g$ and vice versa.



\begin{lemma}
Let $\theta\in(0,1]$. Suppose that for a child group $g_i$, covering $\theta_c|P_{g_i}|$ paths of $P_{g_i}$ requires length at least $L_c(\theta_c)$. Then, to cover $\theta|P_g|$ paths in $P_g$, any edge schedule needs to have length
    \[
    L_g(\theta)> \theta^{2/k}\frac{k^2}{16e^2} \cdot L_c({\theta}/{k})
    \]
\end{lemma}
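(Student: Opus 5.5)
The plan is to reduce coverage of $P_g$ to repeated coverage inside the child groups. Then I would combine Lemma~\ref{lemma:VertexScheduleLemma}, which handles permutations of the $k$ children, with Lemma~\ref{lemma:InductiveScheduleAppend}, which handles repetitions of a child path. The basic fact I will use is the following. By the correspondence lemma, a path in $P_g$ is a permutation of a configuration $C=\{p_1,\dots,p_k\}$, and it consists of the $p_i$ glued together by connector edges of $g$ (edges outside all children). So if $\sigma$ covers such a permutation, the restriction $\sigma|_{g_i}$ of $\sigma$ to edges inside $g_i$ covers $p_i$. Moreover, the occurrences of the blocks $p_1,\dots,p_k$ appear in $\sigma$ in the order prescribed by the permutation.

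\textbf{Step 1 (averaging over configurations).} Every configuration has exactly $k!$ permutations. Suppose $\sigma$ covers $\theta|P_g|$ paths. By a Markov-type argument, at least a $\theta/2$ fraction of configurations $C\in\mathcal C_g$ are \emph{good}, meaning $\sigma$ covers at least $(\theta/2)k!$ permutations of $C$.

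\textbf{Step 2 (apply Lemma~\ref{lemma:VertexScheduleLemma}).} Fix a good configuration $C$. Compress $\sigma$ into a sequence over the $k$ items $g_1,\dots,g_k$: greedily cut $\sigma$ at each completed occurrence of some $p_i$ and record $i$. Then every covered permutation of $C$ becomes a covered permutation of the items. Lemma~\ref{lemma:VertexScheduleLemma}, applied with fraction $\theta/2$, then gives at least $k/2$ indices $i$ whose block $p_i$ occurs more than $q:=(\theta/2)^{2/k}\frac{k}{2e^2}$ times in disjoint successive segments. This means $\sigma|_{g_i}$ covers $p_i^{(q)}$, and I call $i$ \emph{heavy} for $C$.

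\textbf{Step 3 (transfer to a single child).} Let $f_i$ be the fraction of paths $p\in P_{g_i}$ with $p^{(q)}$ covered by $\sigma|_{g_i}$. The key point is that $\mathcal C_g$ is a product set. Double counting the pairs (good $C$, heavy $i$) therefore gives $\sum_i f_i \ge (\theta/2)(k/2)$, since the fraction of configurations in which coordinate $i$ is heavy is at most $f_i$. From this I want to conclude that a constant fraction of the children, say $\ge \theta k/8$ of them or more carefully weighted, have $f_i$ at least of order $\theta/k$. For each such child, Lemma~\ref{lemma:InductiveScheduleAppend} with $H=P_{g_i}$ gives $|\sigma|_{g_i}| \ge q\,L_c(f_i/q)$. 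I would then use monotonicity of $L_c$ to replace the argument by $\theta/k$, absorbing the extra $1/q$ slack into the constants. A cleaner alternative is to bound $\sum_i |\sigma|_{g_i}|$ directly, summing $q L_c(\cdot)$ over the heavy children.

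\textbf{Step 4 (sum up).} The restrictions $\sigma|_{g_i}$ are disjoint subsequences of $\sigma$. Summing over about $k/2$ heavy children gives
\[
|\sigma| \gtrsim \frac{k}{2}\cdot \theta^{2/k}\frac{k}{8e^2}\,L_c(\theta/k)=\theta^{2/k}\frac{k^2}{16e^2}L_c(\theta/k),
\]
where the factor $(1/2)^{2/k}\ge 1/4$ for $k\ge 3$ is absorbed into the constant.

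The main obstacle is Step 3. Heaviness is guaranteed only per configuration, but Lemma~\ref{lemma:InductiveScheduleAppend} needs a per-child fraction. The double-counting bound $\sum_i f_i \ge \theta k/4$ only says the $f_i$ are large on average, and converting this into a bound on $\sum_i q L_c(f_i/q)$ without losing more than the stated constants requires care. The argument of $L_c$ also degrades from $f_i$ to $f_i/q$. I would first try to exploit that $L_c$ is increasing together with a convexity-free pigeonhole: at least $k/4$ children have $f_i\ge \theta/4$. I would accept the stated $\theta/k$ as a deliberately weak argument, which dominates $f_i/q$ once $q\le k$.
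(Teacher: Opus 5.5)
\textbf{Verdict: there is a genuine gap in Step 3, exactly where you suspected it.} Your Steps 1 and 2, and the final summation over the disjoint restrictions $\sigma|_{g_i}$, follow the paper's proof. Step 3 and the count in Step 4 do not go through as written.

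\textbf{What goes wrong.} Once you collapse the double count to $\sum_i f_i \ge \theta k/4$, you have thrown away the information you need. That inequality alone is satisfied by $f_i=1$ for $\lceil\theta k/4\rceil$ children and $f_i=0$ for the rest. So it cannot give ``at least $k/4$ children with $f_i\ge\theta/4$''; that pigeonhole claim is unjustified. The most it yields is about $\theta k/8$ children with $f_i\ge\theta/8$. That loses a factor $\Theta(\theta)$ against the claimed $\theta^{2/k}$ dependence, so it does not prove the statement. Step 4 then sums over ``about $k/2$ heavy children''. This conflates two different things: indices that are heavy for one particular good configuration (each good $C$ has $k/2$ of them), and children $i$ for which a fixed fraction of $P_{g_i}^{(q)}$ is covered by $\sigma|_{g_i}$. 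Lemma~\ref{lemma:InductiveScheduleAppend} needs the latter.

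\textbf{The missing idea: pigeonhole inside the good configurations before normalizing by $|\mathcal C_g|$.}
\begin{itemize}
\item Let $\mathcal C'$ be the good configurations, and let $h_i$ be the fraction of $C\in\mathcal C'$ in which $i$ is heavy. Every good $C$ has at least $k/2$ heavy indices, so $\sum_i h_i\ge k/2$ with each $h_i\le 1$.
\item Hence at least $k/4$ indices have $h_i\ge 1/4$. This is the paper's claim that $\sum_{C\in\mathcal C'}F_{C,i}\ge|\mathcal C'|/4$ for at least $k/4$ indices.
\item Heaviness of $i$ in $C$ depends only on the $i$-th component of $C$, and $\mathcal C_g$ is a product set. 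So $f_i\ge h_i|\mathcal C'|/|\mathcal C_g|\ge\theta/8$ for each such $i$.
\item Next, $\theta/(8q)\ge\theta/k$ because $q\le k/(2e^2)<k/8$. Note that ``$q\le k$'' is not enough here; what you need is $f_i/q\ge\theta/k$.
\item Each of these $k/4$ children therefore contributes $q\,L_c(\theta/k)\ge\theta^{2/k}\frac{k}{4e^2}L_c(\theta/k)$. This uses $(1/2)^{2/k}\ge 1/2$. Your cruder bound $(1/2)^{2/k}\ge 1/4$, combined with $k/4$ children, would only give $k^2/(32e^2)$.
\end{itemize}
Summing over the $k/4$ children gives $\theta^{2/k}\frac{k^2}{16e^2}L_c(\theta/k)$, which is the stated bound.

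\textbf{A smaller point on Step 2.} Do the greedy cutting separately for each child: take successive greedy embeddings of $p_i$ in $\sigma|_{g_i}$, then merge the completion events by position in $\sigma$. A single global cut that resets progress on all blocks can destroy covered permutations, because blocks of different children interleave in $\sigma$. With per-child greedy, if $p_i$ embeds in an interval $(a,b]$ of $\sigma$, then some greedy completion of $p_i$ lies in $(a,b]$. So every covered permutation survives the compression, and the multiplicity of $i$ in the compressed sequence is exactly the largest $r$ such that $\sigma|_{g_i}$ covers $p_i^{(r)}$.
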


\begin{proof}
If a schedule $\sigma$ covers at least $\theta|P_g|$ paths, then for at least a fraction $\theta/2$ of configurations $C$ from $\mathcal{C}_g$,  $\sigma $ covers at least a fraction $\theta/2$ of $\textsf{Perm}(C)$. Indeed, if not, then the total coverage would be strictly smaller than $\theta/2+\theta/2 = \theta$, a contradiction. Let $\mathcal{C}' \subseteq \mathcal{C}_g$ be the subset of such configurations; $|\mathcal{C'}| \geq (\theta/2) \cdot |\mathcal{C}_g|$.

For a configuration $C = \{p_1, \dots, p_k\}$, and $i \in [k]$, define by $f_{C,i}$ the number of repetitions of path $p_i$ in the schedule $\sigma$. We set a parameter $q$ to be
$$q := \left(\frac{\theta}{2}\right)^{2/k}\frac{k}{2e^2}.$$

We now apply Lemma~\ref{lemma:VertexScheduleLemma} for each configuration  $C = \{p_1, \dots, p_k\} \in \mathcal{C}'$, where each item is a path $p_i$ (note that the paths are disjoint by definition of a configuration). This implies that for at least $\frac{k}{2}$ indexes $i \in [k]$, $f_{C,i} > q$.

Define now $F_{C,i} = 1$ if $f_{C,i} \geq q$, otherwise 0. By our above result, for every $C \in \mathcal{C'}$, we have $\sum_{i \in [k]} F_{C,i} \geq \frac{k}{2}$. Thus, $\sum_{C \in \mathcal{C'}} \sum_{i \in [k]} F_{C,i} \geq \frac{k |\mathcal{C'}|}{2}$.
We claim now that there exist at least $k/4$ indexes $i \in [k]$ with $\sum_{C \in \mathcal{C'}} F_{C,i} \geq \frac{|\mathcal{C'}|}{4}$. Indeed, if not, then we would have that
$$\sum_{C \in \mathcal{C'}} \sum_{i \in [k]} F_{C,i} < \frac{k}{4} \cdot |\mathcal{C}'| + k \cdot \frac{|\mathcal{C}'|}{4} \leq \frac{k|\mathcal{C}'|}{2},$$
which is a contradiction.


Hence, for at least $k/4$ indexes $i \in [k]$, we have $\sum_{C \in \mathcal{C}_g} F_{C,i} \geq (\theta/8) \cdot |\mathcal{C}_g|$. Call the set of these indices $U$, $|U| \geq k/4$, and let $i \in U$. Note that every path $p \in P_{g_i}$ participates in exactly $|\mathcal{C}_g|/|P_{g_i}|$ configurations,  which means that there are at least ${\theta|P_{g_i}|}/8$ paths in $P_{g_i}$ for which $\sum_{C: p \in C} F_{C,i} \geq (\theta/8) \cdot |P_{g_i}|$.
%
We now apply Lemma~\ref{lemma:InductiveScheduleAppend}, where $H$ is the set of these paths and $q$ is the parameter we set. Hence, the sub-schedule that covers
the required paths has length strictly more than
    \begin{align*}
    L_c\left(\frac{\theta/8}{\left(\frac{\theta}{2}\right)^{2/k}\frac{k}{2e^2}}\right)
    \left(\frac{\theta}{2}\right)^{2/k}
    \frac{k}{2e^2}
    \geq  L_c\left(\frac{\theta}{k}\right) \cdot
    \left(\frac{\theta}{2}\right)^{2/k}
    \frac{k}{2e^2}
    \geq L_c\left(\frac{\theta}{k}\right)\frac{\theta^{2/k}k}{4e^2}
    \end{align*}
    We obtain the final result by summing the lengths of all $k/4$ sub-schedules for every $i \in U$; we can do this because the edges in each schedule are disjoint, since they involve the nodes of a different child group.
\end{proof}

\begin{lemma} \label{lem:helper:rclique}
Let $r$ be the root group in the hierarchy of a $(k,\alpha)$-recursive clique. Then, any edge schedule that covers a $\theta$-fraction of $P_r$ must have length
    \[
    |\sigma| > \psi_\alpha \cdot \theta^{2\alpha/k} \cdot k^{2\alpha+1}
    \]
 where $\psi_\alpha$ is a parameter depending only on $\alpha$.
\end{lemma}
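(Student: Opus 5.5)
We argue by induction on the level of the group hierarchy, using the preceding lemma as the inductive step. For $j \in \{1,\dots,\alpha\}$, let $L_j(\theta)$ denote the minimum length of an edge schedule covering a $\theta$-fraction of $P_g$, where $g$ is any group at height $j$; a group at height $j$ contains $k^{j}$ base vertices, so the root is at height $\alpha$. All groups at the same height induce isomorphic subgraphs, so $L_j$ is well defined. The claim we aim for is
\[
L_j(\theta) > \psi_j\, \theta^{2j/k}\, k^{2j+1},
\]
which gives the lemma at $j=\alpha$.

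\textbf{Base case ($j=1$).} A height-$1$ group consists of $k$ single vertices connected by a clique, together with its enter and exit vertices (none for the root when $\alpha=1$). Its paths in $P_g$ correspond to the $k!$ Hamiltonian orderings of the clique. The counting in Lemma~\ref{lemma:AbsorptiveN3LB} applies directly; the entry and exit edges do not change the count. It gives $L_1(\theta) > \theta^{2/k} k^3/(3e^3)$, so we may take $\psi_1 = 1/(3e^3)$. Alternatively, Lemma~\ref{lemma:VertexScheduleLemma} can be applied with each vertex treated as an item.

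\textbf{Inductive step.} Apply the previous lemma with $L_c = L_{j-1}$:
\[
L_j(\theta) > \theta^{2/k}\frac{k^2}{16e^2}\, L_{j-1}(\theta/k) > \theta^{2/k}\frac{k^2}{16e^2}\,\psi_{j-1}\,(\theta/k)^{2(j-1)/k} k^{2j-1}.
\]
The only non-routine term is the factor $k^{-2(j-1)/k} = e^{-2(j-1)\ln k / k}$. Since $\ln k / k \le 1/e$, this factor is at least $e^{-2(j-1)/e}$, a constant depending only on $j \le \alpha$. Collecting the $\theta$-powers gives $\theta^{2j/k}$. Hence we can set
\[
\psi_j = \psi_{j-1}\, e^{-2(j-1)/e}/(16e^2),
\]
and $\psi_\alpha$ depends only on $\alpha$.

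\textbf{Main obstacle.} The delicate point is legitimately applying the previous lemma at every level. That lemma requires $L_c$ to be a lower bound valid for every fraction $\theta_c$, and it requires the monotonicity assumed by Lemma~\ref{lemma:InductiveScheduleAppend}. The inductive hypothesis is stated for all $\theta \in (0,1]$ and is increasing in $\theta$, so both conditions hold. We must also check that the root group, which lacks enter and exit vertices, still fits the same decomposition into configurations. It does: the root's paths are simply permutations of configurations without the outer $e_r$ and $x_r$ edges, and that lemma's proof never used those edges.
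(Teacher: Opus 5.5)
Your proof is correct and is essentially the paper's argument. You induct over the levels of the hierarchy, with Lemma~\ref{lemma:AbsorptiveN3LB} as the base case and the preceding lemma as the inductive step, and you absorb the accumulated $k^{-O(\alpha^2)/k}$ factor into $\psi_\alpha$ using boundedness of $k^{1/k}$. The paper uses $k^{1/k}\le 2$ once at the end; you use $\ln k/k\le 1/e$ at each step, which is only a cosmetic difference.

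The one slip is your aside that Lemma~\ref{lemma:VertexScheduleLemma}, with vertices as items, could replace the base case. That lemma only yields length $\Omega(k^2)$, not the $k^3$ you need, so the base case must rest on Lemma~\ref{lemma:AbsorptiveN3LB}, as your main line does.
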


\begin{proof}
    We will first show by induction that to cover a fraction $\theta$ of paths at level $i$ of the hierarchy, any schedule needs to have length at least
    \[
    L_i(\theta) > \frac{\theta^{2(i+1)/k}}{k^{2/k\sum_{j=0}^ij}}\frac{k^{3+2i}}{3\cdot 16^ie^{3+2i}}
    \]
    For the base case, we apply Lemma~\ref{lemma:AbsorptiveN3LB} directly, thus:
    \[
    L_0(\theta)> \theta^{2/k}\frac{k^3}{3e^3}
    \]
    For the inductive case, we have by applying the previous lemma:
    \[
    L_{i+1}(\theta)
    > \theta^{2/k}\frac{k^2}{16e^2}L_{i}\left(\frac{\theta}{k}\right)
    > \theta^{2/k}\frac{k^2}{16e^2} \frac{(\theta/k)^{2(i+1)/k}}{k^{2/k\sum_{j=0}^ij}}\frac{k^{3+2i}}{3\cdot 16^ie^{3+2i}}
    = \frac{\theta^{2(i+2)/k}}{k^{2/k(\sum_{j=0}^ij+(i+1))}}\frac{k^{3+2(i+1)}}{3\cdot 16^{i+1}e^{3+2(i+1)}}
    \]

    Finally, we apply the formula for the root layer, where $i = \alpha-1$. Hence:
    \begin{align*}
    |\sigma|
    &>
    \frac{\theta^{2\alpha/k}}{k^{2/k\sum_{j=0}^{\alpha-1}j}}\frac{k^{3+2(\alpha-1)}}{3\cdot 16^{\alpha-1}e^{3+2(\alpha-1)}}
    =\frac{\theta^{2\alpha/k}}{k^{(2/k)(\alpha-1)\alpha/2}}
    \frac{k^{2\alpha+1}}{3\cdot 16^{\alpha-1}e^{2\alpha+1}}\\
    & \geq \frac{\theta^{2\alpha/k}}{3\cdot 2^{\alpha(\alpha-1)} \cdot 16^{\alpha-1}e^{2\alpha+1}} \cdot k^{2\alpha+1}
    = \psi_\alpha \cdot \theta^{2\alpha/k} \cdot k^{2\alpha+1}
    \end{align*}
where we used the fact that $k^{1/k} \leq 2$, and $\psi_\alpha$ is a parameter depending only on $\alpha$.
\end{proof}

We now present the final piece to obtain the lower bound.

\SparseLowerBound*

\begin{proof}
Consider the class of $(k,\alpha)$-recursive cliques with $k = n^{1/\alpha}$.
We create an annotated instance $I_{k,\alpha}$ by using the color semiring. In particular, we assign a unique color to every simple path that uses all vertices of the graph, exactly as we did with the clique graph. The observation is that, similar to Lemma~\ref{lemma:clique:cover}, to have a correct schedule we need to cover all such paths. Now, we can apply Lemma~\ref{lem:helper:rclique} with $\theta=1$ to obtain the desired result.
\end{proof}

\subsection{Lower Bound for p-stable Semirings}

\label{sec:general}

Finally, we lift the lower bound to the broader class of \emph{$p$-stable} semirings.

The construction mirrors the sparse-graph instance used in the $0$-stable setting, with one crucial addition: every critical configuration must now be visited \emph{at least $p$ times}, introducing the extra multiplicative factor. To enforce this, we employ a new semiring consisting of tuples whose coordinates correspond to configurations. Each coordinate starts at $0$ and is incremented by one whenever the associated configuration is traversed, up to a cap $p$. Consequently, any relaxation algorithm that converges must raise every coordinate to its maximum value $p$, paying at least $p$ traversals per configuration and thereby inheriting the desired $p$ factor in the overall lower bound.

We note that while this is intuitive how the analysis is constructed, we must be careful that multiple configurations do not share much of the same computation to get the lower bound.



\subsubsection{Worst-Case Semirings}
We will first define the (commutative) semiring $\mathbb{S}_{p}$, $p\geq 0$, with domain $\{-\infty, 0,1,\dots, p\}$. The identity element for addition is $\mathbf{0}_{\mathbb{S}_{p}}= -\infty$ and for multiplication  $\mathbf{1}_{\mathbb{S}_{p}}= 0$. We define addition as:
\[
x \oplus_{\mathbb{S}_{p}} y :=  \max\{x,y\}
\]
and multiplication as:
\[ x \otimes_{\mathbb{S}_{p}} y :=  \min\{x+y,p\} \]
This semiring can be thought of as the tropical semiring, but the values are ``capped'' at $p$. Note that $\mathbb{S}_0$ is isomorphic to the Boolean semiring.

\begin{lemma}
$\mathbb{S}_{p}$ is a commutative semiring for any $p \geq 0$. Moreover, for any $p>0$, $\mathbb{S}_{p}$ is $p$-stable but not $(p-1)$-stable.
\end{lemma}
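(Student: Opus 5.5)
The plan is to verify the semiring axioms directly, then compute the stability of each element explicitly. The key simplification is that $\mathbb{S}_p$ is the image of the tropical (max,+) semiring on $\{-\infty\}\cup\N$ under the truncation map $h(x)=\min\{x,p\}$, with $h(-\infty)=-\infty$.

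First I would show that $h$ is compatible with both operations. For $\max$, truncation is monotone, so $h(\max\{x,y\})=\max\{h(x),h(y)\}$. For addition, $\min\{x+y,p\}=\min\{h(x)+h(y),p\}$, since capping a summand at $p$ does not change the capped sum. It follows that $\mathbb{S}_p$ is a homomorphic quotient of the max-plus semiring on $\{-\infty\}\cup\N$. Associativity, commutativity and distributivity of $\min\{x+y,p\}$ over $\max$ are then inherited. Alternatively, distributivity can be checked directly:
\[
\min\{x+\max\{y,z\},p\}=\max\{\min\{x+y,p\},\min\{x+z,p\}\},
\]
because $t\mapsto\min\{x+t,p\}$ is monotone. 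The identities are also routine: $-\infty$ is neutral for $\max$, $0$ is neutral for capped addition, and $-\infty$ is annihilating, with the convention $-\infty+y=-\infty$. Commutativity of $\otimes$ is immediate.

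For $p$-stability, I would compute $u^{(q)}=\max\{0,u,2u,\dots,qu\}$, where $ju$ means the capped power $\min\{ju,p\}$. There are three cases.
\begin{itemize}
\item If $u=-\infty$ or $u=0$, then every $u^{(q)}$ equals $0$.
\item If $u\ge 1$, the powers are nondecreasing, so $u^{(q)}=\min\{qu,p\}$.
\item With $u\ge1$ and $q=p$, we have $pu\ge p$, so $u^{(p)}=p=u^{(p+1)}$.
\end{itemize}
Hence every element is $p$-stable.

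To show the semiring is not $(p-1)$-stable for $p>0$, I would take $u=1$. Then $u^{(p-1)}=p-1$, while $u^{(p)}=p$, and these differ.

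None of these steps should be a real obstacle. The only point needing care is associativity of $\otimes$ in the presence of the cap and of $-\infty$, which the truncation-homomorphism argument handles cleanly. Associativity can also be checked directly: both bracketings equal $\min\{x+y+z,p\}$, or $-\infty$ whenever any argument is $-\infty$.
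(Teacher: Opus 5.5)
Your proposal is correct and follows the paper's proof. The stability part is the same: you split into the cases $-\infty$, $0$, and $\{1,\dots,p\}$, and you use the witness $u=1$ to show $\mathbb{S}_p$ is not $(p-1)$-stable. The one addition is your truncation-homomorphism argument from the max-plus semiring on $\{-\infty\}\cup\N$, which supplies the verification of the semiring axioms that the paper dismisses as ``straightforward''.
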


\begin{proof}
 It is straightforward to show that $\mathbb{S}_{p}$ is indeed a semiring. To see that it is not $(p-1)$-stable for $p>0$, observe that $1^{(p-1)} = 0+1+1^2+\dots+1^{p-1} = p-1$, but  $1^{(p)} = p$. We finally show that it is $p$-stable. Indeed, for $x=0$, we have $0^{(i)}=0$ for any $i\geq 0$. Additionally, for $x = -\infty$, we have $(-\infty)^{(i)} = 0$ for any $i\geq 0$. Finally, for $x \in \{1,\dots, p\}$, we have $x^p = p$ and $x^{(p+1)} = p$, hence it holds that $x^{(p)} = x^{(p+1)}$.
\end{proof}

We will construct the lower bound semiring by taking the direct product of multiple semirings $\mathbb{S}_{p}$. Given two semirings ${S}_1$, ${S}_2$, their {\em direct product} ${S}_1 \times {S}_2$ is a new semiring with domain the cartesian product of the domains of $S_1, S_2$, where addition and multiplication are extended componentwise to each coordinate. The direct product can be naturally extended to the product of more than two semirings. Given a semiring $S$, we define $S^i = S \times \dots \times S$ ($i$ times).

For our lower bound construction, given $N \geq 1,p\geq 0$, and letting $h$ be the number of hamiltonian cycles in the input graph, we will use the following semiring: $(\mathbb{S}_{p})^{h \cdot N}$,
in which every element in the domain is a tuple indexed by $(c,i) \in [h] \times [N]$ with its entries taking values from  $\{-\infty, 0,1,\dots, p\}$. Observe that $(\mathbb{S}_{p})^{h \cdot N}$ is also a $p$-stable semiring.

\subsubsection{The Lower Bound Instance}
We will use as the graphs for our lower bound instance $(k,\alpha)$-recursive cliques. Recall that when $\alpha=1$ the graph is just a regular clique. Let $\mathcal{P}_{k,\alpha}$ be the set of all simple paths $u \rightarrow v$ that visit all nodes in a  $(k,\alpha)$-recursive clique, and also $(v,u)$ is an edge in the graph (in other words, a Hamiltonian cycle minus one edge). Let $n$ be the number of nodes in it. Given a simple path $P\in\mathcal{P}_{k,\alpha}$, denote by $\bar{P}$ the corresponding hamiltonian cycle.

We discuss now how to construct the semiring-annotated instance $I_{k,\alpha}^p$. We create again a unique color $c(P)$ for every path $P \in \mathcal{P}_{k,\alpha}$.
For an edge $(i,j)$ in the graph, we construct the value $\mathcal{A}_{ij}$ as follows. For $P \in \mathcal{P}_{k,\alpha}$ and $\ell \in [n]$, we set the index $(c(P),\ell)$ to $1$ if $(i,j) \in \bar{P}$ and $\ell = j$, $0$ if $(i,j) \in \bar{P}$ and $\ell \neq j$, and $-\infty$ otherwise.
For a vertex $i$, we construct $\mathbf{b}_i$ as follows. For $P \in \mathcal{P}_{k,\alpha}$ and $\ell \in [n]$, we set the index $(c(P),\ell)$ to $1$ if $P$ starts at $i$ and $\ell = i$, $0$ if $P$ starts at $i$ and $\ell \neq i$, and $-\infty$ otherwise.

At a high level, the final value for a  vertex $v$ will have value $p$ at index $(c(P),\ell)$ for every $\ell \in [n]$ if the path $P$ ends at $v$, otherwise it will be $-\infty$.

\begin{lemma} \label{lem:loop:semiring}
A relaxation algorithm correctly computes the output on $I_{k,\alpha}^p$ if and only if the corresponding edge schedule $\sigma$ covers a set of walks $S$, where for all $i \in [n]$, $S$ contains for every path $P \in \mathcal{P}_{k,\alpha}$ of the form $u \rightarrow i$, the walk $(P,(i,u))^{(p)} P$.
\end{lemma}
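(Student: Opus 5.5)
The proof will rest on Lemma~\ref{lemma:ValueSumTrees}: after running $\sigma$, $\mathbf{X}_i$ is the sum of the yields of all walks ending at $i$ that $\sigma$ covers. Because the semiring is a direct product of copies of $\mathbb{S}_p$, both the sum and the comparison with the fixpoint can be done coordinate by coordinate. So I will fix a coordinate $(c(P),\ell)$ and a vertex $i$. I will then determine which walks have a non-$(-\infty)$ yield in that coordinate, and how large that yield is.

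\textbf{Step 1 (yields of walks in one coordinate).} In coordinate $c(P)$, every edge label is $-\infty$ unless the edge lies on the cycle $\bar P$, and $\mathbf b$ is $-\infty$ except at the start vertex $u$ of $P$. Hence a walk has a non-$(-\infty)$ yield in coordinate $(c(P),\ell)$ only if it starts at $u$ and uses only edges of $\bar P$. Such a walk must therefore follow the cycle $\bar P$ from $u$, so it is a prefix of $(P,(v,u))^{\infty}$, where $v$ is the endpoint of $P$. Each edge $(a,j)$ of $\bar P$ contributes $1$ exactly when $j=\ell$, and the start contributes $1$ exactly when $u=\ell$. The yield in coordinate $(c(P),\ell)$ is therefore $\min\{\#\text{visits of } \ell,\,p\}$, counting the starting vertex as a visit.

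\textbf{Step 2 (the fixpoint value).} Next I will compute the true fixpoint, which is the sum over all walks ending at $i$. Take $i$ to be a vertex of $\bar P$ and consider walks ending at $i$ that go around the cycle many times. Every vertex $\ell$ can then be visited at least $p$ times, so the target value at $(c(P),\ell)$ is $p$ for every $\ell$. (The text's statement that the final value is $p$ only at the endpoint $v$ seems loose. The correct statement is that the value at $i$ is $p$ whenever $i\in\bar P$, which is every vertex, since $\bar P$ is Hamiltonian.) Within this family, the shortest walk ending at $i$ that visits every vertex at least $p$ times is obtained as follows: traverse $(P,(v,u))$ fully $p$ times to return to $u$, then follow $P$ up to $i$. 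Every vertex strictly after $i$ on $P$ has then been visited only $p$ times, and vertices up to $i$ have been visited $p+1$ times. I need to match this against the walk named in the statement, $(P,(i,u))^{(p)}P$ for a path $P$ ending at $i$. With the indexing re-read so that the path ends at $i$, this is the same walk. I will therefore state the claim per path $P:u\to i$ and per terminal vertex $i$, as in the statement.

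\textbf{Step 3 (both directions).} For the ``if'' direction, suppose $\sigma$ covers every $W_P=(P,(i,u))^{(p)}P$. Then $\mathbf X_i$ has value $p$ in every coordinate $(c(P),\ell)$ with $P$ ending at $i$. For the other paths and coordinates, I will argue that the covered walks give exactly the fixpoint value. The fixpoint value there is either $-\infty$, or it is also attained by some prefix of a covered $W_{P'}$ for the rotated path $P'$ ending at $i$. Since prefixes of covered walks are covered, this settles those coordinates. Being a sum of covered walks, the value never exceeds the fixpoint value.

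For the ``only if'' direction, suppose $W_P$ is not covered. Every walk that reaches value $p$ in all coordinates $(c(P),\ell)$ and ends at $i$ must contain $W_P$ as a prefix, because it must be a prefix of the unrolled cycle that is long enough to visit the last vertex of $P$ $p$ times. None of these walks is covered. Since $\oplus=\max$, the covered value at some coordinate $(c(P),\ell)$ is then strictly less than $p$, so the output is incorrect.

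The main obstacle is the bookkeeping in the last step. I must show that, in a single coordinate, reaching the cap $p$ for all $\ell$ simultaneously really forces containment of $W_P$. One might worry that different covered walks could achieve $p$ in different $\ell$'s. My fix is that the $\max$ is taken per coordinate, so each $\ell$ needs its own walk. The binding coordinate is $\ell=$ the vertex right after $i$ on the cycle, namely $u$'s successor chain endpoint. Visiting that vertex $p$ times already forces the full prefix $(P,(i,u))^{(p)}$ followed by $P$ up to $i$, and this is exactly $W_P$.
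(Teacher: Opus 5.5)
Your Step~1 is correct. So is your observation that the fixpoint equals $p$ in every coordinate $(c(P),\ell)$ at \emph{every} vertex, not only at the endpoint of $P$. Your ``if'' direction also works, provided you replace ``prefix'' by ``suffix''. The witness for color $c(P)$ at $i$ must start at the first vertex of $P$. It is therefore the suffix of $W_{P'}$ that begins at the first occurrence of that vertex, where $P'$ is the rotation of $\bar P$ ending at $i$. A prefix of $W_{P'}$ has yield $-\infty$ in color $c(P)$ unless $P=P'$.

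The ``only if'' direction has a genuine gap, at exactly the step you flagged as the main obstacle.
\begin{itemize}
\item Since $\mathbf{b}_u$ has entry $1$ at $(c(P),u)$, the start already counts as a visit to $u$, as your own Step~1 says.
\item Hence $(P,(i,u))^{(p-1)}P$ visits every vertex exactly $p$ times. It attains $p$ in all coordinates of color $c(P)$ at $i$, yet it does not contain $W_P=(P,(i,u))^{(p)}P$.
\item Your criterion ``long enough to visit the last vertex of $P$ $p$ times'' yields $pn-1$ edges, not $(p+1)n-1$.
\item For the same reason, your Step~2 recipe ``$p$ full turns, then $P$ up to $i$'' is not shortest when $i$ is the endpoint of $P$.
\end{itemize}

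Pooling over colors does not rescue the claim. The most demanding color at $i$ is $c(P'')$, where $P''$ is the rotation of $\bar P$ starting at the successor of $u$. Its shortest witness is $W_P$ with its first edge deleted, one edge short of $W_P$.

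So for $p\ge1$ the ``only if'' direction as literally stated is false, and no argument can establish it. Consider $I^{1}_{3,1}$, the directed $3$-clique with $p=1$. Write $a,b,c$ for the edges of $1\to2\to3\to1$ and $d,e,f$ for those of $1\to3\to2\to1$. The schedule $abcabc\,defdef$ covers every required witness: $abca$, $bcab$, $cabc$, their suffixes, and the analogues in $d,e,f$. It is therefore correct, yet it does not cover $W_P=cabca$ for $P=(3,1,2)$.

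The paper's own proof makes the same miscount (``every edge in $P$ must be repeated $(p+1)$ times''). The off-by-one comes from $\mathbf{b}_u$ counting the start. If the entry of $\mathbf{b}_u$ at $(c(P),u)$ were $0$ instead of $1$, both the statement as written and your binding-coordinate argument would be exactly right.

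With the construction as given, your method proves the corrected statement, with $W_P$ replaced by $W_P$ minus its first edge. This still forces $P^{(p)}$ to be covered for every $P$. That suffices for the downstream $\Omega((p+1)nm)$ bound, since $p\ge(p+1)/2$ for $p\ge1$; for $p=0$ the statement holds as written.
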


\begin{proof}
  In other words, the set $S$ needs to repeat every Hamiltonian cycle that ends at $i$, $(p+1)$ times, with the exception of the last edge. To see that this holds, note that for the unique color $c$ of the path $P := u \rightarrow i$, the value of $\mathbf{x}_i$ must have value $p$ at the index of $(c,j)$ for every $j \in [n]$. This means that every edge in $P$ must be repeated $(p+1)$ times. But if a walk uses any other edge, the value will become $-\infty$. Hence, the only way to achieve the desired value is to loop $(p+1)$ times over the Hamiltonian cycle corresponding to the path $P$.
\end{proof}


\begin{theorem}
Consider the family of instances $I_{n,1}^p$ with $p\geq 0$ such that $p\leq \left(\frac{1}{\phi}\right)^{n/2}-1$ for some constant $\phi\in(0,1)$. Then, any edge schedule $\sigma$ that is correct for $I_{n,1}^p$ has length
$$ |\sigma| = \Omega\left(n^3(p+1)\right).$$
\end{theorem}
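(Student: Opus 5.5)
Let $H=\mathcal{P}_{n,1}$ be the set of Hamiltonian paths of the $n$-clique $u\to v$ with $(v,u)$ an edge; on a clique this is every Hamiltonian path, so $|H|=n!$. By Lemma~\ref{lem:loop:semiring}, a correct schedule $\sigma$ for $I^p_{n,1}$ must cover, for every $P\in H$ ending at $i$ and starting at $u$, the walk $w_P=(P,(i,u))^{(p)}P$. This walk traverses the Hamiltonian cycle $\bar P$ (minus its last edge) $p+1$ times. So I will first reduce to a covering statement: every walk in $H' := \{w_P : P\in H\}$ is a subsequence of $\sigma$, and it suffices to show such $\sigma$ has length $\Omega(n^3(p+1))$.

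The core tool is Lemma~\ref{lemma:InductiveScheduleAppend} with $q=p+1$. Its proof only uses that $s^{(q)}$ consists of $q$ consecutive copies of a sequence $s$ each covered in order. So I will apply it not with $H$ itself but with the family of closed sequences $s_P = (P,(i,u))$, for which $w_P$ contains $s_P^{(p)}P\supseteq$ $p+1$ consecutive copies of $P$. Concretely, if $w_P$ is covered by $\sigma$, then $\sigma$ covers $P^{(p+1)}$ interleaved with extra edges, and the splitting argument in the proof of Lemma~\ref{lemma:InductiveScheduleAppend} goes through verbatim with $H=\mathcal{P}_n$. Cutting $\sigma$ into $\le q$ blocks of length $\le L(\theta/q)$, some block must contain a full copy of $P$.

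The base function is $L(\theta)=\theta^{2/n}\frac{n^3}{3e^3}$ from Lemma~\ref{lemma:AbsorptiveN3LB}, which is increasing in $\theta$. Taking $\theta_q=1$ gives
\[
|\sigma|\ \ge\ (p+1)\,L\!\left(\tfrac{1}{p+1}\right)=(p+1)\,(p+1)^{-2/n}\,\frac{n^3}{3e^3}.
\]

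The remaining step is to make $(p+1)^{-2/n}$ a constant. This is exactly where the hypothesis $p\le(1/\phi)^{n/2}-1$ enters: it gives $(p+1)^{2/n}\le 1/\phi$, hence $(p+1)^{-2/n}\ge\phi$. We then get $|\sigma|\ge \phi\,(p+1)n^3/(3e^3)=\Omega(n^3(p+1))$.

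I expect the main obstacle to be the reduction in the first step, namely justifying the use of Lemma~\ref{lemma:InductiveScheduleAppend}. The cover required is of walks $w_P$ rather than literally of $P^{(p+1)}$, because $w_P$ includes the closing edge $(i,u)$ between copies of $P$. I will handle this by noting that the lemma's argument only needs a subsequence of $w_P$: each of the $p+1$ copies of $P$ appears in order and disjointly, so deleting the closing edges preserves being a subsequence of $\sigma$. A second point to check is the integrality and monotonicity requirement on $L$; I will take $\lfloor L\rfloor$ or note that the proof never uses integrality.
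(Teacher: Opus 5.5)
Your proposal is correct and is essentially the paper's proof: Lemma~\ref{lem:loop:semiring} forces $\sigma$ to cover each $w_P$, and hence its subsequence $P^{(p+1)}$. Lemma~\ref{lemma:InductiveScheduleAppend} with $q=p+1$, $\theta_q=1$, $H=\mathcal{P}_{n,1}$ and $L$ from Lemma~\ref{lemma:AbsorptiveN3LB}, together with $(p+1)^{-2/n}\ge\phi$, then gives the bound; you are in fact slightly more careful than the paper about $w_P$ versus $P^{(p+1)}$.

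Of your two remedies for non-integral $L$, use $\lfloor L\rfloor$. The block-splitting step in Lemma~\ref{lemma:InductiveScheduleAppend} does use integrality, and the floor costs only a constant factor. For $p=0$, where $q=1$ is excluded by that lemma, cite Lemma~\ref{lemma:AbsorptiveN3LB} directly.
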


\begin{proof}
From Lemma~\ref{lem:loop:semiring}, any correct schedule must loop $(p+1)$ times over each path in $\mathcal{P}_{n,1}$. Hence, we can apply directly Lemma~\ref{lemma:InductiveScheduleAppend}, with $q=p+1$, $\theta_q=1$, and $H = \mathcal{P}_{n,1}$. In this case, $L(\theta)=\theta^{2/n}\frac{n^3}{3e^3}$, hence
    \[
    |\sigma| > \left(\frac{1}{p+1}\right)^{2/n}\frac{n^3}{3e^3}(p+1)
    \]
    Since $p\leq \left(\frac{1}{\phi}\right)^{n/2}-1$, we can get that
    \[
    |\sigma| > \phi\frac{n^3}{3e^3}(p+1)
    \]
    This proves the theorem.
\end{proof}

\GeneralLowerBound*

\begin{proof}
Consider the class of $(k,\alpha)$-recursive cliques with $k = n^{1/\alpha}$.
We create the annotated instance $I_{k,\alpha}^p$. From Lemma~\ref{lem:loop:semiring}, any correct schedule must loop $(p+1)$ times over each path in $\mathcal{P}_{k,\alpha}$. Hence, we can apply directly Lemma~\ref{lemma:InductiveScheduleAppend}, with $q=p+1$, $\theta_q=1$, and $H = \mathcal{P}_{k,\alpha}$. This time we have $L(\theta)= \psi_\alpha \cdot \theta^{2\alpha/k} \cdot k^{2\alpha+1}$ for some constant $\psi_\alpha$ depending on $\alpha$, by Lemma~\ref{lem:helper:rclique}. Hence,
    \[
    |\sigma| > \psi_\alpha\left(\frac{1}{p+1}\right)^{2\alpha/k}k^{2\alpha+1}(p+1)
    \]
    To get the bound we want, note that from our assumption there exists a $\phi\in(0,1)$ such that
    \[
    \left(\frac{1}{p+1}\right)^{2\alpha/k} \geq \phi
    \]
Since $n = k^{\alpha}$ and $m = k^{1+\alpha}$, this completes the proof.
\end{proof}
\fi

\end{document}